\documentclass[reprint,amsmath,amssymb,aps,superscriptaddress,longbibliography,nofootinbib]{revtex4-2}
\usepackage{graphicx}
\usepackage{subfigure}
\usepackage{comment}
\usepackage{dcolumn}
\usepackage{bm}
\usepackage{amsmath,amssymb}
\usepackage{amsthm}
\usepackage{mathrsfs}
\usepackage{indentfirst}
\usepackage{float}
\usepackage{braket}
\usepackage{physics}
\usepackage[utf8]{inputenc}
\usepackage{bm}
\usepackage{graphicx}
\usepackage{tikz}
\usepackage{physics}
\usetikzlibrary{arrows, shapes.gates.logic.US, calc}
\usetikzlibrary{angles,quotes}
\tikzstyle{branch}=[fill, shape=circle, minimum size=3pt, inner sep=0pt]
\usepackage{blochsphere}
\usepackage{mathtools}
\usepackage{color,graphicx} 
\newcommand\D{\!\operatorname{d}\!}
\usepackage{algorithm}
\usepackage[algo2e]{algorithm2e}
\usepackage{bbold}
\usepackage{scalerel}

\usetikzlibrary{quantikz}

\usepackage{hyperref}
\hypersetup{
 colorlinks=true,
 citecolor=red,
 linkcolor=blue,
 urlcolor=blue,
 pdfpagemode=UseNone,
 pdfstartview=FitH}
\usepackage{subfiles}

\newcommand{\orcidicon}[1]{\href{https://orcid.org/#1}{\includegraphics[height=\fontcharht\font`\B]{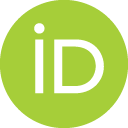}}}

\theoremstyle{definition}
\newtheorem{Obs}{Obs}
\theoremstyle{definition}
\newtheorem{Lemma}{Lemma}
\theoremstyle{definition}

\theoremstyle{definition}
\newtheorem{Definition}{Definition}
\theoremstyle{definition}
\newtheorem{Theorem}{Theorem}

\footnotetext{These authors contributed equally}

\begin{document}
\preprint{APS/123-QED}
\title{Mitigating Coherent Errors through a Decoherence-Resistant \\ Variational Framework employing Stabilizer States}

\author{Giovanni Di Bartolomeo\,\orcidicon{0000-0002-1792-7043}$^\diamond$}
\email{giovanni.dibartolomeo@units.it}
\affiliation{Department of Physics, University of Trieste, Strada Costiera 11, 34151 Trieste, Italy}
\affiliation{Istituto Nazionale di Fisica Nucleare, Trieste Section, Via Valerio 2, 34127 Trieste, Italy}

\author{Giulio Crognaletti\,\orcidicon{0009-0005-6999-8061}$^\diamond$}
\email{giulio.crognaletti@phd.units.it}
\affiliation{Department of Physics, University of Trieste, Strada Costiera 11, 34151 Trieste, Italy}
\affiliation{Istituto Nazionale di Fisica Nucleare, Trieste Section, Via Valerio 2, 34127 Trieste, Italy}

\author{Angelo Bassi\,\orcidicon{0000-0001-7500-387X}}
\affiliation{Department of Physics, University of Trieste, Strada Costiera 11, 34151 Trieste, Italy}
\affiliation{Istituto Nazionale di Fisica Nucleare, Trieste Section, Via Valerio 2, 34127 Trieste, Italy}

\author{Michele Vischi\,\orcidicon{0000-0002-5724-7421}}
\email{vischi.michele@units.it}
\affiliation{Department of Physics, University of Trieste, Strada Costiera 11, 34151 Trieste, Italy}
\affiliation{Istituto Nazionale di Fisica Nucleare, Trieste Section, Via Valerio 2, 34127 Trieste, Italy}

\begin{abstract}
Stabilizer states are a central resource in quantum information processing, underpinning a wide range of applications. While they can be efficiently generated via Clifford circuits, the presence of coherent errors—such as small-angle miscalibrations in native gate implementations—can significantly impact their quality. In this work, we introduce Variational Coherent Error Mitigation (VCEM), a method that employs the stabilizer formalism to suppress coherent errors through variational optimization of native gates parameters. VCEM demonstrates robust performance, remaining largely unaffected by incoherent noise, enabling pre-compensation of coherent errors prior to the application of standard incoherent error mitigation techniques. We demonstrate the effectiveness and robustness of VCEM through numerical simulations. 
\end{abstract}

\maketitle

\section{Introduction}

The preparation of stabilizer states \cite{Englbrecht_symmetries, veitch2014resource, graeme_typical} constitutes a fundamental subroutine in quantum information processing. A stabilizer state is uniquely characterized as the simultaneous +1 eigenstate of a set of mutually commuting Pauli operators. This symmetry property represents a powerful resource that is extensively exploited in Quantum Error Correction (QEC) codes \cite{shor_QEC, knill2000theory,gottesman2009introduction,terhal_QEC,cory_QEC,2024_google_QEC}, Measurement-Based Quantum Computing (MBQC) \cite{one_way_raussendorf,briegel2009measurement,raussendorf2003measurement,Wei_2021,walther2005experimental,ferguson_MBQC}, Quantum Key Distribution (QKD) \cite{shor_QKD,gottesman_secure,renner2008security,leroy_QKD,Yu_QKD}, Distributed Quantum Computing (DQC) \cite{eisert_DQC,Caleffi_2024,beals2013efficient,Wu_2023,main2025distributed}, Quantum Internet (QI) \cite{kimble2008quantum,Cao_QI,azuma_QI,koudia2023quantum}, many-body quantum simulations \cite{Sun_2025,Gu_many_body,Antu_2025,Liu_many_body}, and beyond. Relevant examples of stabilizer states include graph states \cite{hein2006entanglement,hein2004, shettell2020graph}, GHZ states \cite{contreras2019resource,d2004computational}, and logical code states \cite{kitaev2003fault,bonilla2021xzzx,castelnuovo_toric,zhao_surface}.

Although stabilizer states can, in principle, be generated using Clifford circuits—operations that are efficiently classically simulable \cite{gottesman1998}—the generation on demand of high quality stabilizer states is required across all these applications. This necessitates strategies for their robust preparation and protection against detrimental noise \cite{brandhofer2025hardware}.

Most developed techniques focus on reducing the effects of incoherent errors, the ones arising from the interaction of a quantum device with its surrounding environment \cite{nielsen2000quantum, benenti2019principles}. Such errors break unitarity and corrupt any quantum information processing task. However, another important class of errors are coherent ones, i.e. small angle miscalibrations leading to systematic over- or under-rotation errors in the native gates of a quantum device. Even when these errors are small, their effect both on physical and logical qubits can be significant \cite{Wallman_2015,ball_2016,kueng_2016,wallman_2016}. In QEC coherent errors are typically ignored because, when the decoder is optimized for correcting independent incoherent Pauli errors, the coherent contribution to the logical error is negligible at fewer than $\epsilon^{-(n-1)}$ error correction cycles, where $\epsilon$ is the order of magnitude of coherent errors and $n$ is the number of physical qubits \cite{greenbaum2017modeling,Huang_2019,bravyi2018correcting}. Above this number of correction cycles, the coherent logical error causes logical failure faster than the incoherent one, and this have to be counteracted with efficient coherent error mitigation methods at the physical level. This reflects the current status of quantum computation which is expected to persist in the near- to mid-term future.

Few techniques focusing on coherent Quantum Error Mitigation (QEM) are present in the literature \cite{kern2005quantum,endo_QEM,campbell_QEM,endo_practical_QEM,endo_QEM_coherent,orsucci_coherent}. 

In this work, by exploiting the stabilizerness we derive an efficient error mitigation technique based on the variational principle named Variational Coherent Error Mitigation (VCEM).
First the Clifford circuit generating the stabilizer state is expressed in terms of the native gates of a quantum device, then each of such native gate is parametrized. The optimization of a suitable cost function based on the stabilizer operators allows to find the optimal parameters that cancel coherent errors. While the natural application of VCEM is the correction of coherent errors in stabilizer-state preparation, since VCEM operates at the level of native gates, it is applicable as a general-purpose technique to correct coherent errors.

Moreover, we show how VCEM remains robust and effective even in the presence of realistic incoherent errors. Although incoherent errors modify the cost function, it is still possible to find optimal parameters based on the properties of the Clifford circuit and the noise channels.

This fact leads to the main advantage of our technique: coherent errors can be first corrected by VCEM without taking care of incoherent ones. Subsequently, any circuit can run with optimal parameters and it could be paired with incoherent QEM techniques \cite{temme_PEC,van2023probabilistic,filippov2023,bultrini2023unifying}. 

We show the effectiveness of VCEM with accurate numerical simulations. 

\section{Coherent errors on stabilizer states}
\label{sec:stabilizers}

Consider $\mathcal{\bf{P}}_n$ the Pauli group on $n$ qubits. An $n$-qubit state $\ket{\psi_S}$ is called a stabilizer state if there exists a subgroup $\mathcal{\bf{S}}_n \subset \mathcal{\bf{P}}_n$ of dimension $2^n$ and such that 
\begin{equation}
    \hat{S}_i\ket{\psi_S} =\ket{\psi_S}\,\ ,
\end{equation} for every generator $\hat{S}_i$ of $\mathcal{\bf{S}}_n$ with $i =1,\dots,n$ \cite{arab2024lecture}. 
The $\hat{S}_i$ operators are called stabilizers, they are Pauli strings and have eigenvalues $\pm 1$ with degeneracy $2^{n-1}$:
\begin{align}
&\hat{S_i}|\psi_{ik}^{(+)}\rangle = |\psi_{ik}^{(+)}\rangle\,\, \forall k = 1,\dots, 2^{n-1} \\
& \hat{S_i}|\psi_{ik}^{(-)}\rangle = -|\psi_{ik}^{(-)}\rangle\,\, \forall k = 1,\dots, 2^{n-1}
\end{align} 
where the eigenvectors $|\psi_{ik}^{(\pm)}\rangle$ are orthonormal $\langle\psi_{ik}^{(\pm)}|\psi_{ik'}^{(\pm)}\rangle = \delta_{kk'}$.

According to \cite{nielsen2000quantum}, it is always possible to identify a Clifford circuit $\hat{U}_S$, namely a quantum circuit composed of Clifford gates only, such that $\ket{\psi_S} = \hat{U}_S \ket{0}^{\otimes n}$.

In presence of even small amount of coherent errors, the state $\ket{\psi_S}$ becomes non stabilized. Coherent errors can be described as unwanted small extra rotations in the native gates of a quantum device resulting from gates miscalibration. In what follows, we assume that each gate in $\hat{U}_S$ has its own small coherent error which is random but stable (constant), at least during the entire running time. Our aim is to correct for such errors.

Any stabilizer state can be expanded as a superposition of the $|\psi_{ik}^{(\pm)}\rangle$ eigenvectors for each $\hat{S}_i$
\begin{equation}\label{eq:stab_state_expansion}
    |\psi_S\rangle = \sum_{k =1}^{2^{n-1}}\Bigl(a_{ik}|\psi_{ik}^{(+)}\rangle + b_{ik}|\psi_{ik}^{(-)}\rangle\Bigr)\,\ ,
\end{equation}
where $a_{ik}$ are generic amplitudes and $b_{ik} = 0$ for every $k = 1,\dots,2^{n-1}$. 
In the presence of coherent errors the state is modified as $\ket{\psi_S(\vec{\epsilon})} = \hat{U}_S(\vec{\epsilon})\ket{0}^{{\otimes}n}$ where $\vec{\epsilon}$ is a vector of small errors and $\hat{U}_S(\vec{\epsilon})$ is in general non-Clifford.
Analogously to Eq.~\eqref{eq:stab_state_expansion}, the modified state can be written as
\begin{equation}
\label{eq:stab_state_expansion_errors}
|\psi_S(\vec{\epsilon})\rangle = \sum_{k =1}^{2^{n-1}}\Bigl(a_{ik}(\vec{\epsilon})|\psi_{ik}^{(+)}\rangle + b_{ik}(\vec{\epsilon})|\psi_{ik}^{(-)}\rangle\Bigr) \,\ ,
\end{equation}
where $\sum_k(|a_{ik}(\vec{\epsilon})|^2 + |b_{ik}(\vec{\epsilon})|^2) = 1$. In general, $|\psi_S(\vec{\epsilon})\rangle$ is not a stabilizer state, because the $b_{ik}(\vec{\epsilon})$ might be different from zero. In the following we assume that for any small perturbations $|\vec{\epsilon}|\ll1$ at least one $|b_{ik}(\vec{\epsilon})|^2>0$ such that $|\psi_S(\vec{\epsilon})\rangle$ is not a stabilizer state.

\section{Variational coherent error mitigation (VCEM)}
\label{sec:VCEM}

In order to correct coherent errors one can parametrize the circuit $\hat{U}_S(\vec{\theta}+\vec{\epsilon})$ where $\vec{\theta}$ is a vector of variational parameters. The idea is to transpile the circuit into the native gates of a given quantum device and parametrize their rotation angles. The resulting state $|\psi_S(\vec{\theta}+\vec{\epsilon})\rangle = \hat{U}_S(\vec{\theta}+\vec{\epsilon})\ket{0}^{\otimes n}$ can again be expanded analogously to Eq.~\eqref{eq:stab_state_expansion_errors} as 
\begin{equation}\label{eq:stab_state_var_expansion}   |\psi_S(\vec{\theta}+\vec{\epsilon})\rangle = \sum_{k =1}^{2^{n-1}}\Bigl(a_{ik}(\vec{\theta}+ \vec{\epsilon})|\psi_{ik}^{(+)}\rangle + b_{ik}(\vec{\theta}+\vec{\epsilon})|\psi_{ik}^{(-)}\rangle\Bigr)  \,.
\end{equation}
We can define the following cost function
\begin{equation}\label{eq:cost_function}
C(\vec{\theta}) = \sum_{i=1}^{n}C_i(\vec{\theta}) = -\sum_{i=1}^{n}\langle \psi_S(\vec{\theta}+\vec{\epsilon})|\hat{S}_i|\psi_S(\vec{\theta}+\vec{\epsilon})\rangle \, ,
\end{equation}
and we prove that by using Eq.~\eqref{eq:stab_state_var_expansion} this quantity can be optimized with the variational principle, i.e. it has global minimum:
\begin{equation}
\label{eq:cost_function_Si}
\begin{aligned}
    &C_{i}(\vec{\theta}) = -\langle \psi_S(\vec{\theta}+\vec{\epsilon})|\hat{S}_i|\psi_S(\vec{\theta}+\vec{\epsilon})\rangle \\
    &= - \sum_k \big(|a_{ik}(\vec{\theta}+\vec{\epsilon})|^2 -|b_{ik}(\vec{\theta}+\vec{\epsilon})|^2\big) \\
    & = - \sum_k \big(|a_{ik}(\vec{\theta}+\vec{\epsilon})|^2 +|b_{ik}(\vec{\theta}+\vec{\epsilon})|^2\big) +2 \sum_k |b_{ik}(\vec{\theta}+\vec{\epsilon})|^2 \\
    &= - 1 + 2 \sum_k |b_{ik}(\vec{\theta}+\vec{\epsilon})|^2 \geq -1\, ,
\end{aligned}
\end{equation}
by plugging Eq.~\eqref{eq:cost_function_Si} into Eq.~\eqref{eq:cost_function}, one obtains 
\begin{equation}
    C(\vec{\theta}) \geq -n\,\ .
\end{equation}
The global minimum value of the cost function $-n$ is exactly obtained when the state is the stabilizer state. 
In other words, assuming that $|\vec{\epsilon}|$ is small, one can start the optimization in $\vec{\theta} = 0$, that is equivalent to optimize the cost function in a region $\mathcal{B}(-\vec{\epsilon},\delta)$ for $\delta\ll1$, and such that $0 \in \mathcal{B}$, close to the optimal values where the cost function is convex, i.e. the Hessian $\mathbf{H}=\partial_{\theta_{k}}\partial_{\theta_l} C(\vec{\theta}) |_{\vec{\theta}=-\vec{\epsilon}}$ is positive definite. This excludes trainability issues such as barren plateaus \cite{Larocca25} in the ideal unitary case, consistently with the literature \cite{Puig25}. In Sec. \ref{sec:VCEM_incoherent}, we show how the same claim continues to hold also in the presence of incoherent errors.

By finding the minimum of the cost function the resulting optimal parameters $\vec{\theta}_{opt}$ are such that $\vec{\theta}_{opt} =- \vec{\epsilon}$ and coherent errors are corrected. When $\vec{\theta}_{opt}$ is obtained, the corrected circuit is such that $\hat{U}_S(\vec{\theta}_{opt}+\vec{\epsilon}) = \hat{U}_S$. We name this technique Variational Coherent Error Mitigation (VCEM). A pictorial representation of the VCEM circuit is shown in Fig. \ref{fig:VCEM_circuit}. For a simple analytical example see appendix \ref{sec:analytic}.
\begin{figure}[H]
    \centering
    \includegraphics[width=\linewidth]{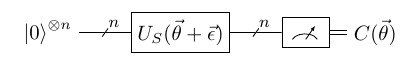}
    \caption{Parametrized circuit to evaluate the VCEM cost function.}
    \label{fig:VCEM_circuit}
\end{figure}

The natural application of VCEM is the correction of coherent errors arising in the preparation of a stabilizer state. However, the method has broader applicability, as the error mitigation is performed at the level of native gates acting on physical qubits. This enables its use as a general-purpose technique, independent of the specific stabilizer state initially considered. In this more general setting, the stabilizer state can, in principle, be freely chosen so as to optimize the correction of coherent errors.

A gate-by-gate correction strategy is generally impractical, since the number of required variational optimizations scales with both the number of qubits and the number of connected qubit pairs. VCEM becomes advantageous when a single, representative stabilizer state is selected, allowing all relevant variational parameters to be incorporated and optimized simultaneously. The choice of stabilizer state can be tailored to the target application: for instance, in simulations of many-body quantum systems, one might select a stabilizer state reflecting the system’s topology—such as spin lattices or molecular structures. Another practical choice is the graph state associated with the connectivity of the specific quantum device.

\section{VCEM in presence of incoherent errors}
\label{sec:VCEM_incoherent}
In the previous section we assumed the presence of coherent errors only, here we show that under standard assumptions VCEM can still be effective in correcting coherent errors also in presence of incoherent ones. Incoherent errors arise from unwanted interaction with the environment \cite{krantz2019quantum, breuer2002theory} and in general their effects can be described through quantum maps \cite{benenti2019principles, nielsen2000quantum}. Thus from this point on we work with density matrices instead of state vectors. We define $\hat{\rho}_S=\ket{\psi_S}\bra{\psi_S}=\mathcal{U}_S\hat{\rho}_0$ where $\mathcal{U}_S=\hat{U}_S\, \cdot\, \hat{U}_S^\dagger$, $\hat{\rho}_0= (\ket{0}\bra{0})^{\otimes n}$ and to ease the notation we omit the dependence on variational parameters $\vec{\theta}$ and coherent errors $\vec{\epsilon}$.

Moreover when needed, we divide the circuit $\mathcal{U}_S$ in $M$ moments $\mathcal{U}_S = \prod_{q=1}^{M}\mathcal{U}_{q}=\mathcal{U}_M\mathcal{U}_{M-1}\dots\mathcal{U}_2\mathcal{U}_{1}$, where each moment $\mathcal{U}_{q}$ is a sub-circuit consisting of native gates acting simultaneously on a number of qubits $\leq n$ with $q = 1,\dots, M$.

\subsection{General Pauli map at the end of the circuit}
\label{subsec:pauli_maps_end_circ}
At this stage we assume that the effect of incoherent errors can be described by a Pauli map (see Def. \ref{def:dpm} in Appendix \ref{appendix:notation_and_lemmas}) acting at the end of the circuit $\mathcal{U}_S$, given by
\begin{equation}\label{eq:pauli_map}
    \mathcal{P}\hat{\rho}_S = \sum_{j = 0}^{4^{n}-1}p_j \hat{P}_j \hat{\rho}_S\hat{P_j}\,,
\end{equation}
where $p_j$ is the probability relative to the Pauli string $\hat{P}_j$. This is depicted in Fig. \ref{fig:VCEM_circuit_noisy}.

\begin{figure}[H]
    \centering
    \includegraphics[width=\linewidth]{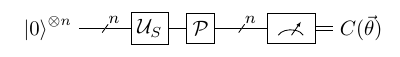}
    \caption{Parametrized circuit to evaluate the modified VCEM cost function with a Pauli noise channel after $\mathcal{U}_S$.}
    \label{fig:VCEM_circuit_noisy}
\end{figure}

By using Eq.~\eqref{eq:pauli_map}, each term $C_i$ in Eq.~\eqref{eq:cost_function_Si} gets modified as 
\begin{equation}\label{eq:senza_nome}
\begin{aligned}    &\tilde{C}_i^{ \scaleto{(\mathcal{P})}{6pt}}(\vec{\theta}) = -\Tr\Bigl(\hat{S}_i\mathcal{P}\hat{\rho}_S\Bigr)=-\chi_i\Tr\Bigl(\hat{S}_i\hat{\rho}_S\Bigr)=\chi_iC_i(\vec{\theta})\, ,
\end{aligned}
\end{equation}
where we used the ciclicity of the trace and Lemma \ref{lemma:action_pauli_noise_string} in Appendix \ref{appendix:notation_and_lemmas} such that $\mathcal{P}\hat{S}_i=\chi_i\hat{S}_i$ where $\chi_i=1-2\Gamma_i$ with $\Gamma_i = \sum_{j:\{\hat{P}_j,\hat{S_i}\} =0} p_j$.

The resulting modified cost function is given by
\begin{equation}\label{eq:noisy_cost_function}
    \tilde{C}^{ \scaleto{(\mathcal{P})}{6pt}}(\vec{\theta}) = \sum_{i=1}^{n}\chi_iC_i(\vec{\theta})\, .
\end{equation}
Usually, the Pauli map in Eq.~\eqref{eq:pauli_map} is such that the probability associated to the identity operator $\hat{P}_0 = \hat{\mathbb{1}}$ is $p_{0} = 1 - \sum_{j = 1}^{4^{n}-1}p_j$ with $p_0\ge 1/2$. This is a reasonable assumption otherwise the probability of having errors would be higher than that of the ideal evolution. Given this assumption, since $\Gamma_i$ does not contain $p_0$, then $\Gamma_i < 1/2$ which implies that in Eq.~\eqref{eq:noisy_cost_function}, all factors $\chi_i > 0$ for each $i$. Thus, if $\tilde{\mathbf{H}}^{ \scaleto{(\mathcal{P})}{6pt}}$ is the Hessian matrix of $\tilde{C}^{ \scaleto{(\mathcal{P})}{6pt}}$, namely $\tilde{\mathbf{H}}_{kl}^{ \scaleto{(\mathcal{P})}{6pt}} = \partial_{\theta_{k}}\partial_{\theta_l} \tilde{C}^{ \scaleto{(\mathcal{P})}{6pt}}(\vec{\theta}) |_{\vec{\theta}=-\vec{\epsilon}}$\, , then let $\mathbf{H}_i$ denote the Hessian matrices of $C_i$, we have that
\begin{equation}
    \tilde{\mathbf{H}}^{ \scaleto{(\mathcal{P})}{6pt}} = \sum_i \chi_i\mathbf{H}_i=\chi_{min} \mathbf{H}\, + \sum_i (\chi_i-\chi_{min})\mathbf{H}_i\, .
\end{equation}
Each term in the last summation is positive semi-definite, since by definition $(\chi_i - \chi_{\min}) \ge 0$ and $\mathbf{H}_i$ is positive semi-definite. Furthermore, since $\mathbf{H}$ is positive definite, as discussed in Sec.~\ref{sec:VCEM}, and $\chi_{\min} > 0$, it follows that $\tilde{\mathbf{H}}^{(\mathcal{P})}$ is also positive definite.
This ensures the convexity of the modified cost function in the region $\mathcal{B}(-\vec{\epsilon},\delta)$. Despite the fact that $C(\vec{\theta}_{opt})$ is a global minimum while $\tilde{C}^{ \scaleto{(\mathcal{P})}{6pt}}(\vec{\theta}_{opt})$ might become a local minimum, it is still possible to find $\vec{\theta}_{opt}$, starting the optimization with initial parameters $\vec{\theta} = 0\in \mathcal{B}$, since $|\vec{\epsilon}|$ is small. We show this scenario in Fig. \ref{fig:landscape}. A simple analytical example can be found in Appendix \ref{sec:analytic}.

\begin{figure}[H]
    \centering
    \includegraphics[width=\linewidth]{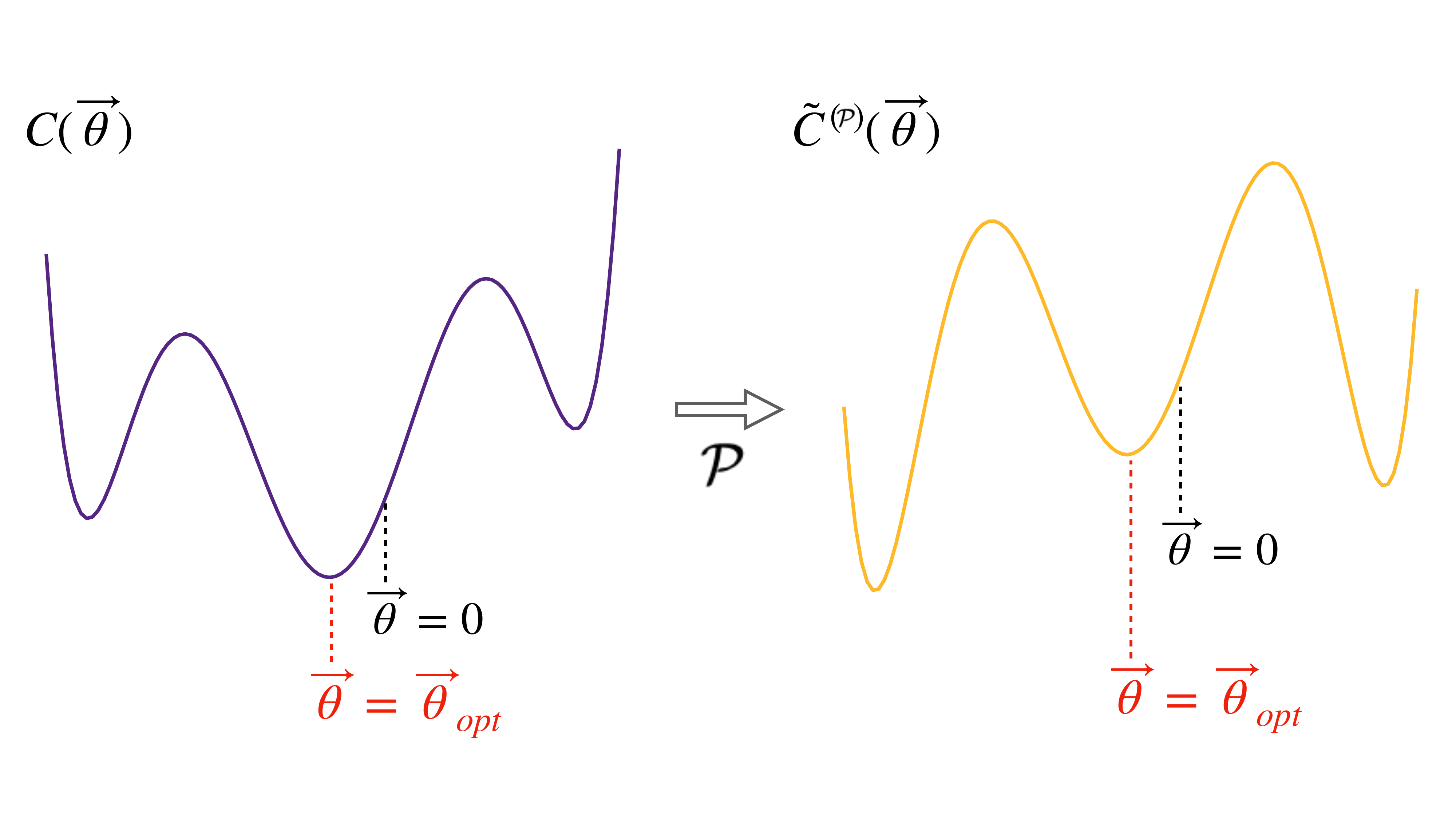}
    \caption{Sketch of how the landscape of the cost function could be modified by the Pauli incoherent noise $\mathcal{P}$ acting at the end of the circuit. The value of $\vec{\theta}$ in black is the initial choice of the parameters.}
    \label{fig:landscape}
\end{figure}

As a special case, one can consider the completely symmetric Pauli map, also called depolarizing map, for which $p_j = p/4^{n}$ for $j = 1,..\dots,4^{n}-1$ in Eq.~\eqref{eq:pauli_map}. Thus $\Gamma_i = \frac{p}{4^{n}} \sum_{j:\{\hat{P}_j,\hat{S_i}\} =0}1 = p/2$, leading to
\begin{equation}
\label{eq:symmetric_case}
\tilde{C}^{ \scaleto{(\mathcal{P})}{6pt}}(\vec{\theta}) = (1-p)C(\vec{\theta})\,.  
\end{equation}
In this case $\tilde{C}^{ \scaleto{(\mathcal{P})}{6pt}}(\vec{\theta}_{opt})$ remains a global minimum shifted by a constant offset $1-p$.

\subsection{Global depolarizing map after each circuit moment}
\label{subsec:global_dep_map}

Here we assume that after each circuit moment $\mathcal{U}_q$ acts a global depolarizing map, i.e. a global symmetric Pauli map, of the form (see Def. \ref{def:ddm} in Appendix \ref{appendix:notation_and_lemmas})
\begin{equation}\label{eq:global_dep}
\mathcal{D}_q\hat{\rho} = (1-p_q) \hat{\rho}+\frac{p_q}{2^n} \mathbb{1}\, ,
\end{equation}
where $\hat{\rho}$ is a generic density matrix. Consequently the noisy circuit can be written as
\begin{equation}\label{eq:circuit_rho_global_dep}
\mathcal{N}_{\mathcal{D}}\equiv\prod_{q=1}^{M}\mathcal{D}_q\mathcal{U}_q=\mathcal{D}_M\mathcal{U}_M\dots\mathcal{D}_2\mathcal{U}_2\mathcal{D}_1\mathcal{U}_1\,\ .
\end{equation}

By applying Eqs. \eqref{eq:global_dep} and \eqref{eq:circuit_rho_global_dep}, according to Lemma \ref{lemma:effective_dep} in Appendix \ref{appendix:notation_and_lemmas}, one gets:
\begin{equation}
\begin{aligned}
\label{eq:depolarizing}
\mathcal{N}_{\mathcal{D}}\hat{\rho}_0 &=(1-p')\mathcal{U}_S\hat{\rho}_0 + \frac{p'}{2^n}\mathbb{1}\\
& = \mathcal{D}\mathcal{U}_S\hat{\rho_0}=\mathcal{D}\hat{\rho}_S,
\end{aligned}
\end{equation}
where $\mathcal{D}$ is a global depolarizing map at the end of the circuit with error probability $p' = 1-\prod_{q=1}^{M}(1-p_q)$.

From Eq.~\eqref{eq:depolarizing} follows that global depolarizing maps acting after each circuit
moment can be seen as an effective global depolarizing map acting at the end of the entire circuit (see Fig. \ref{fig:VCEM_circuit_noise_dep}).
\begin{figure}[H]
    \centering
    \includegraphics[width=\linewidth]{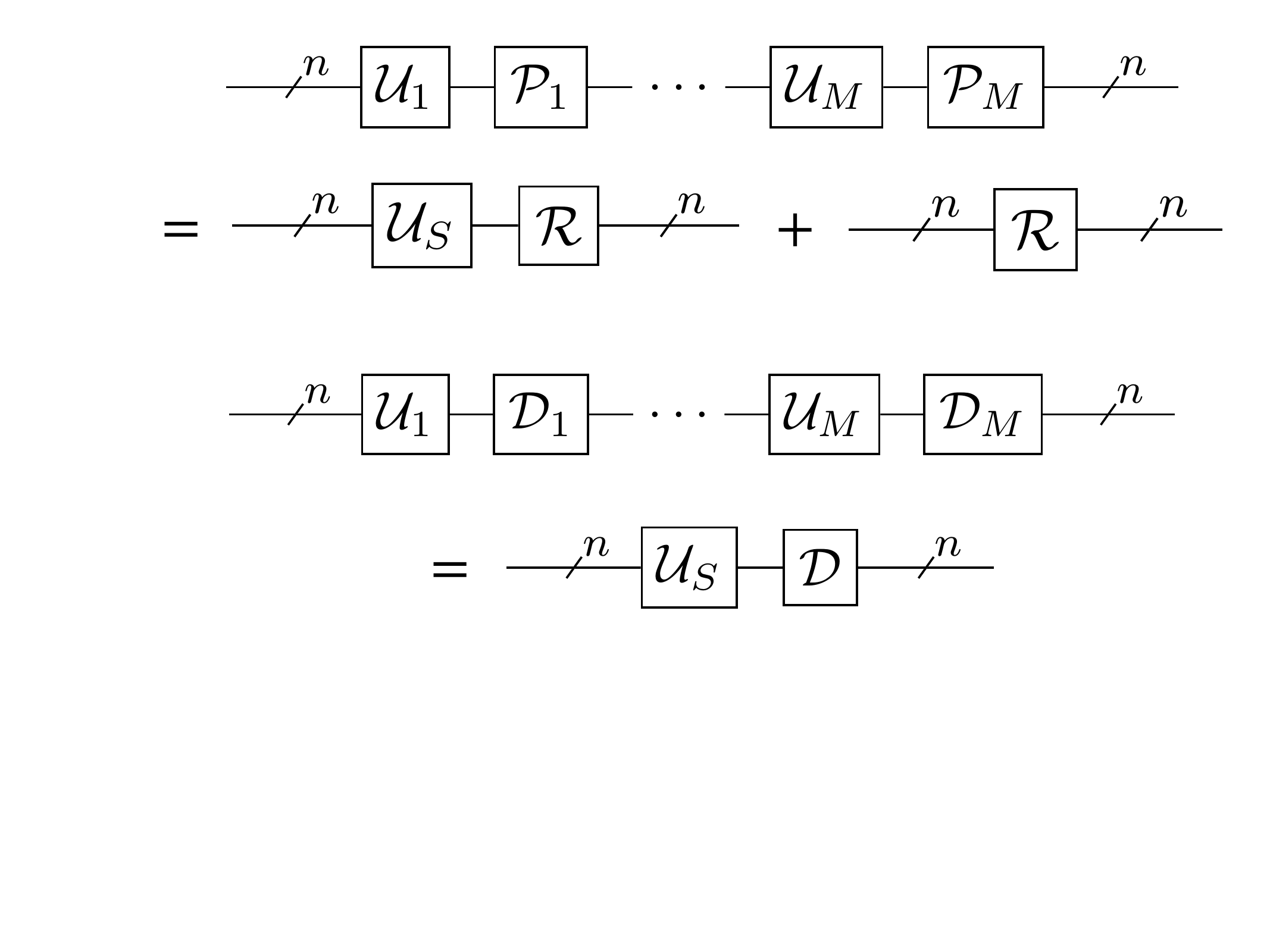}
    \caption{The action of global depolarizing maps after each momentum is equal to a single global depolarizing map after $\mathcal{U}_S$.}
    \label{fig:VCEM_circuit_noise_dep}
\end{figure}
Thus, the conclusion of the Sec. \ref{subsec:pauli_maps_end_circ} in the symmetric case is still true and the modified cost function reads
\begin{equation}\label{eq:cost_function_dep_momenta}
\tilde{C}^{ \scaleto{(\mathcal{D})}{6pt}}(\vec{\theta}) =(1-p')C(\vec{\theta})\,\ ,
\end{equation}
as in Eq.~\eqref{eq:symmetric_case}.
See appendix \ref{sec:analytic} for a simple analytical example.

\subsection{General Pauli map after each circuit moment}\label{sec:local_pauli_map}

Now we consider a more realistic scenario where after each circuit moment acts a general Pauli map
\begin{equation}
\begin{aligned}
\label{eq:m_local_pauli_map}
 \mathcal{P}_q\hat{\rho} = \sum_{j=0}^{4^{n}-1}p_{q,j}\hat{P}_{q,j}\hat{\rho}\hat{P}_{q,j}\,,
   \end{aligned}
\end{equation}
where $\hat{\rho}$ is a generic density matrix.

The corresponding noisy circuit reads

\begin{equation}\label{eq:circuit_rho_pauli}
\mathcal{N}\equiv\prod_{q=1}^{M}\mathcal{P}_q\mathcal{U}_q=\mathcal{P}_M\mathcal{U}_M\dots\mathcal{P}_2\mathcal{U}_2\mathcal{P}_1\mathcal{U}_1\,\ .
\end{equation}
In the following we assume that all the native gates $\mathcal{G}_w$ in the circuit are such that
\begin{equation}
\label{eq:gate_generators_main}
\begin{aligned}
 \mathcal{G}_w &= e^{-i(\phi_w+\theta_w+\epsilon_w)\mathcal{H}_w}\\ &=    e^{-i(\theta_w+\epsilon_w)\mathcal{H}_w}e^{-i\phi_w\mathcal{H}_w} \equiv \mathcal{G}_w^{\text{\tiny (N)}}\mathcal{G}_w^{\text{\tiny (C)}}
 \end{aligned}
 \end{equation}
where each $\theta_w+\epsilon_w$ gives rise to the non-Clifford transformation $\mathcal{G}_w^{\text{\tiny (N)}}$ associated to coherent noise and each $\phi_w$ is chosen such that $\mathcal{G}_w^{\text{\tiny (C)}}$ is Clifford. The latter is a reasonable assumption because it is satisfied by the totality of the native gates used in current quantum computers.

Accordingly we prove in Appendix \ref{appendix:remainder} that 
\begin{equation}\label{eq:send_noise_end_circuit_superoperator}
\begin{aligned}
\mathcal{N}\hat{\rho_0} =\mathcal{P}\mathcal{U}_S\hat{\rho}_0 + \mathcal{R}\hat{\rho}_0 =\mathcal{P}\hat{\rho}_S + \mathcal{R}\hat{\rho}_0\,,
\end{aligned}
\end{equation}
where $\mathcal{P}$ is an effective Pauli map as in Eq.~\eqref{eq:pauli_map} acting at the end of the circuit and $\mathcal{R}$ denotes the remainder term whose explicit expression can be found in Appendix \ref{appendix:remainder}. This is drawn in Fig. \ref{fig:VCEM_circuit_noise_momenta}.
\begin{figure}[H]
    \centering
    \includegraphics[width=\linewidth]{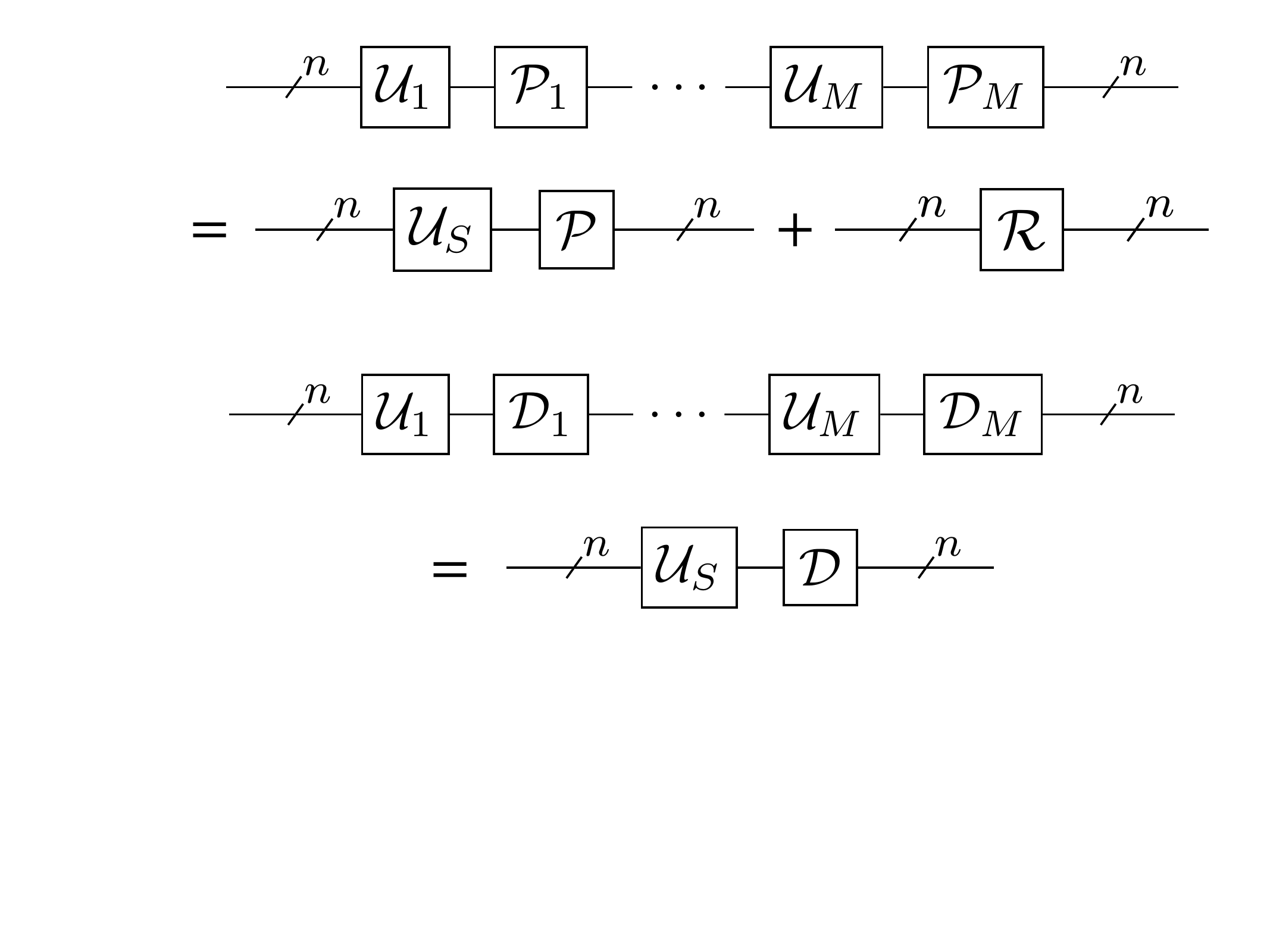}
    \caption{The action of Pauli maps after each moment is equivalent to an effective Pauli map after $\mathcal{U}_S$ plus a reminder $\mathcal{R}$.}
    \label{fig:VCEM_circuit_noise_momenta}
\end{figure}

By employing Eq.~\eqref{eq:send_noise_end_circuit_superoperator}, the modified cost function reads
\begin{equation}
\label{eq:cost_local_pauli}
\begin{aligned}
\tilde{C}(\vec{\theta}) &= \sum_{i=1}^n\tilde{C}_i(\vec{\theta}) = -\sum_{i=1}^{n}\Tr(\hat{S}_i \mathcal{N}\hat{\rho_0})\\
&=\tilde{C}^{\scaleto{(\mathcal{P})}{6pt}}(\vec{\theta})-\sum_{i=1}^{n}\Tr(\hat{S}_i\mathcal{R}\hat{\rho}_0)\,\ ,
\end{aligned}
\end{equation}
where $\tilde{C}^{\scaleto{(\mathcal{P})}{6pt}}$ is of the form of the cost function in Eq.~\eqref{eq:noisy_cost_function}.

The cost function $\tilde{C}$ differs from $\tilde{C}^{\scaleto{(\mathcal{P})}{6pt}}$ by the following term 
\begin{equation}\label{eq:delta_C_def}
   \Delta \tilde{C}(\vec{\theta}) \equiv\tilde{C}(\vec{\theta}) - \tilde{C}^{\scaleto{(\mathcal{P})}{6pt}}(\vec{\theta}) = -\sum_{i=1}^{n}\Tr(\hat{S}_i\mathcal{R}\hat{\rho}_0)\, . 
\end{equation} 
Given that the derivative of $\tilde{C}^{\scaleto{(\mathcal{P})}{6pt}}$ already vanishes at $\vec{\theta} = -\vec{\epsilon}$, one can prove the following theorem (the proof is presented in Appendix \ref{appendix:stationarity}) 

\begin{Theorem}[Stationarity of the solution]\label{theo:stationarity}

Let $\mathcal{N}$ be a noisy quantum circuit affected by Pauli maps after each circuit moment as defined in Eq.~\eqref{eq:circuit_rho_pauli}. Furthermore, assume that each native gate in $\mathcal{U}_q$ has a parameterization consistent with Eq.~\eqref{eq:gate_generators_main}. Then
\begin{equation}
\label{eq:zero_partial_derivative}
    \partial_{\theta_k} \Delta\tilde{C}(\vec{\theta}) \Bigl|_{\vec{\theta} = -\vec{\epsilon}} =0 \;\,\forall k\, .
\end{equation}
regardless of the noise strength.
\end{Theorem}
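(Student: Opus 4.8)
The plan is to prove the equivalent statement $\partial_{\theta_k}\tilde{C}(\vec{\theta})\,|_{\vec{\theta}=-\vec{\epsilon}}=0$ for every $k$: since the gradient of $\tilde{C}^{\scaleto{(\mathcal{P})}{6pt}}$ already vanishes at $\vec{\theta}=-\vec{\epsilon}$ (Sec.~\ref{subsec:pauli_maps_end_circ}), Eq.~\eqref{eq:delta_C_def} then gives $\partial_{\theta_k}\Delta\tilde{C}\,|_{-\vec{\epsilon}}=0$ at once, bypassing the explicit form of $\mathcal{R}$. Writing $\tilde{C}=-\sum_i\Tr(\hat{S}_i\mathcal{N}\hat{\rho}_0)$ and using that only the gate $w=k$ carries $\theta_k$, with $\mathcal{G}_k=e^{-i(\theta_k+\epsilon_k)\mathcal{H}_k}\mathcal{G}_k^{\text{\tiny (C)}}$ so that $\partial_{\theta_k}\mathcal{G}_k=-i\mathcal{H}_k\mathcal{G}_k$, I would differentiate and then set $\vec{\theta}=-\vec{\epsilon}$. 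There every non-Clifford factor collapses, $\mathcal{G}_w^{\text{\tiny (N)}}=\mathbb{1}$, so $\mathcal{N}|_{-\vec{\epsilon}}$ is a pure interleaving of Clifford layers and Pauli maps, and the derivative merely inserts the generator $-i\mathcal{H}_k$ immediately before $\mathcal{G}_k^{\text{\tiny (C)}}$. Splitting the circuit at gate $k$ as $\mathcal{N}|_{-\vec{\epsilon}}=\mathcal{A}\,\mathcal{G}_k^{\text{\tiny (C)}}\,\mathcal{B}$, where the super-operators $\mathcal{A}$ and $\mathcal{B}$ collect everything after and before gate $k$ (each a product of Clifford maps and Pauli maps), gives
\[ \partial_{\theta_k}\tilde{C}\,\big|_{-\vec{\epsilon}}=-\sum_i\Tr\!\big(\hat{S}_i\,\mathcal{A}(-i\mathcal{H}_k)\mathcal{G}_k^{\text{\tiny (C)}}\mathcal{B}\hat{\rho}_0\big). \]

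Next I would propagate $\hat{S}_i$ backward through $\mathcal{A}$ in the Heisenberg picture. Because a Pauli map acts diagonally on any Pauli string (Lemma~\ref{lemma:action_pauli_noise_string}) and a Clifford conjugates Pauli strings into Pauli strings, the adjoint action yields $\mathcal{A}^\dagger[\hat{S}_i]=\lambda_i\hat{R}_i$ with $\lambda_i\in\mathbb{R}$ and $\hat{R}_i=A^\dagger\hat{S}_iA$ a Pauli string ($A$ the Clifford unitary of the Clifford part of $\mathcal{A}$); moreover $\hat{R}_i$ is precisely a $+1$ stabilizer generator of the intermediate stabilizer state $\ket{\phi}\bra{\phi}=\mathcal{G}_k^{\text{\tiny (C)}}\mathcal{B}^{\text{\tiny (C)}}\hat{\rho}_0$. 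On the other side, the intermediate noisy state $\hat{\sigma}=\mathcal{G}_k^{\text{\tiny (C)}}\mathcal{B}\hat{\rho}_0$ is a Pauli-noised stabilizer state, $\hat{\sigma}=\tfrac{1}{2^n}\sum_{g\in\mathcal{S}_\phi}\chi_g\,g$ for real coefficients $\chi_g$, where $\mathcal{S}_\phi=\langle\hat{R}_1,\dots,\hat{R}_n\rangle$ is the abelian stabilizer group of $\ket{\phi}$; thus $\hat{\sigma}$ is supported entirely on $\mathcal{S}_\phi$. With $\mathcal{H}_k[\cdot]=[H_k,\cdot]$ and cyclicity of the trace, the derivative collapses to the scalar expression
\[ \partial_{\theta_k}\tilde{C}\,\big|_{-\vec{\epsilon}}=i\sum_i\lambda_i\,\Tr\!\big([\hat{R}_i,H_k]\,\hat{\sigma}\big). \]

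Finally I would show each trace vanishes. Expanding $H_k$ in the Pauli basis, $H_k=\sum_a c_a\hat{P}_a$, the commutator $[\hat{R}_i,\hat{P}_a]$ is nonzero only when $\hat{P}_a$ anticommutes with $\hat{R}_i$, in which case $[\hat{R}_i,\hat{P}_a]\propto\hat{R}_i\hat{P}_a$. Since $\hat{R}_i\in\mathcal{S}_\phi$ and $\mathcal{S}_\phi$ is abelian, any Pauli anticommuting with $\hat{R}_i$ cannot belong to $\mathcal{S}_\phi$, and therefore neither can $\hat{R}_i\hat{P}_a$. Hence $[\hat{R}_i,H_k]$ is supported only on Pauli strings lying outside $\mathcal{S}_\phi$, whereas $\hat{\sigma}$ is supported on $\mathcal{S}_\phi$; by the trace-orthogonality of distinct Pauli strings, $\Tr([\hat{R}_i,H_k]\hat{\sigma})=0$ for every $i$. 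This gives $\partial_{\theta_k}\tilde{C}\,|_{-\vec{\epsilon}}=0$, and with it the theorem. Crucially the $\chi_g$ never need to be small, so the conclusion holds regardless of noise strength, as claimed.

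The main obstacle is the bookkeeping behind the two support claims of the second step: establishing that propagating $\hat{S}_i$ backward through the interleaved Clifford layers and Pauli maps of $\mathcal{A}$ produces a scalar multiple of a genuine stabilizer generator $\hat{R}_i$ of $\ket{\phi}$, and that the forward-evolved noisy state $\hat{\sigma}$ remains supported on the same abelian group $\mathcal{S}_\phi$. Both follow from Pauli maps acting as scalars on Pauli strings while Cliffords permute them, but they require care in tracking how the Pauli maps located before gate $k$ are conjugated through the Clifford backbone. An equivalent route is to start from the explicit decomposition $\mathcal{N}\hat{\rho}_0=\mathcal{P}\hat{\rho}_S+\mathcal{R}\hat{\rho}_0$ of Eq.~\eqref{eq:send_noise_end_circuit_superoperator} and apply the same Pauli-support argument to $\partial_{\theta_k}\mathcal{R}|_{-\vec{\epsilon}}$, which reduces to commutators of a Pauli generator with an effective Pauli map; the abelian-stabilizer orthogonality then renders every such term traceless against the $\hat{S}_i$.
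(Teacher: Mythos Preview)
Your argument is correct and takes a more direct route than the paper's. The paper works through the explicit remainder decomposition of Lemma~\ref{lemma:circ_decomposition}, writing $\mathcal{R}=\sum_{p<q}\mathcal{T}_{qp}$, differentiating each $\mathcal{T}_{qp}$, observing that terms with $\theta_k$ outside moment $q$ vanish because $\Delta_{qp}|_{\vec{\gamma}=0}=[\mathbb{1},\mathcal{P}'_{qp}]=0$, and for the surviving terms invoking Lemma~\ref{lemma:first_derivative_upperbound} together with $[\hat{S}_i^{(q)},\hat{\rho}_q]=0$. You instead differentiate the full $\tilde{C}$ and never unpack the $\mathcal{T}_{qp}$ bookkeeping; your Pauli-support argument (that $[\hat{R}_i,H_k]$ is supported outside $\mathcal{S}_\phi$ while $\hat{\sigma}$ is supported inside) is equivalent to the paper's vanishing commutator---indeed your support claim immediately gives $[\hat{R}_i,\hat{\sigma}]=0$, whence $\Tr(\hat{R}_i[H_k,\hat{\sigma}])=-\Tr(H_k[\hat{R}_i,\hat{\sigma}])=0$ without even expanding $H_k$ in the Pauli basis. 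Your packaging is shorter and avoids both the remainder machinery and the norm inequalities. The paper's heavier decomposition is not wasted, however: the same $\mathcal{T}_{qp}$ structure is reused for the second-order bound of Theorem~\ref{theo:linear_upper_bound} and the convexity discussion of Appendix~\ref{appendix:convexity_C}, where the explicit form of $\mathcal{R}$ is genuinely needed.
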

This means that the stationary point  $\vec{\theta} = -\vec{\epsilon}$ is shared betweeen the function $\tilde{C}$ in Eq.~\eqref{eq:cost_local_pauli}   $\tilde{C}^{\scaleto{(\mathcal{P})}{6pt}}$. Additionally, also the convexity of $\tilde{C}$ is guaranteed if the Pauli noise is weak enough as shown in Appendix \ref{appendix:convexity_C}.\\

Moreover, the following theorem holds true
\begin{Theorem}[Upper bound on $\Delta \tilde{C}$]
\label{theo:linear_upper_bound}
If the number of moments $M$ is fixed with respect to $n$, i.e. $M\le O(1)$, then $\Delta C(\vec{\theta})$ as defined in Eq.~\eqref{eq:delta_C_def} is upper bounded by
\begin{equation}
\label{eq:linear_upper_bound_2}
|\Delta C(\vec{\theta})| \leq O(\epsilon^2n)\, ,
\end{equation} 
where $\epsilon$ is the order of magnitude of coherent errors.  
\end{Theorem}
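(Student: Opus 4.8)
The plan is to split the remainder contribution stabilizer by stabilizer, writing $\Delta\tilde C = \sum_{i=1}^n \Delta\tilde C_i$ with $\Delta\tilde C_i = -\Tr(\hat S_i\,\mathcal R\,\hat\rho_0)$, and to regard each $\Delta\tilde C_i$ as a smooth function of the non-Clifford angles $\delta_w \equiv \theta_w+\epsilon_w$, expanded around the Clifford point $\vec\delta = 0$ (that is $\vec\theta = -\vec\epsilon$). In the regime of interest every $\delta_w$ is of order $\epsilon$: it equals $\epsilon_w$ at the initialization $\vec\theta = 0$ and remains small throughout $\mathcal B(-\vec\epsilon,\delta)$. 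Two facts are immediate. First, $\Delta\tilde C_i(-\vec\epsilon) = 0$ for every $i$: when all $\delta_w$ vanish each factor reduces to $\mathcal G_w^{\text{\tiny (N)}} = \mathbb{1}$, so $\mathcal U_S$ is purely Clifford, every Pauli map $\mathcal P_q$ commutes through the Clifford moments exactly, $\mathcal N\hat\rho_0 = \mathcal P\hat\rho_S$, and hence $\mathcal R = 0$. Second, the total first-order part of the expansion vanishes at $\vec\delta = 0$ by Theorem~\ref{theo:stationarity}, since $\partial_{\theta_k}\Delta\tilde C|_{-\vec\epsilon} = 0$ for all $k$.

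Next I would show that each $\Delta\tilde C_i$ is effectively a function of only $O(1)$ of the angles. Choosing the generators as $\hat S_i = \hat U_S \hat Z_i \hat U_S^\dagger$, the derivative $\partial_{\delta_k}\Tr(\hat S_i\mathcal R\hat\rho_0)$ can be nonzero only when the native gate $w=k$ lies in the backward light cone of $\hat S_i$ through the circuit. Because $M = O(1)$ the circuit has constant depth, and each local native gate spreads a Pauli support by a bounded amount, so this light cone contains only $O(1)$ gates. Consequently every Taylor coefficient of $\Delta\tilde C_i$ involves at most $O(1)$ of the $\delta_w$ and is an $n$-independent constant (a trace of bounded Pauli/Clifford data with a bounded number of generators $\mathcal H_w$ inserted). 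Since each $\delta_w = O(\epsilon)$ and $\Delta\tilde C_i(-\vec\epsilon) = 0$, the $m$-th order part of $\Delta\tilde C_i$ is therefore $O(\epsilon^m)$ with a constant uniform in $n$, and in particular $\Delta\tilde C_i = O(\epsilon)$ with its quadratic part $O(\epsilon^2)$.

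Summing over the $n$ stabilizers then organizes the bound by order in $\epsilon$. The zeroth-order sum $\sum_i\Delta\tilde C_i^{(0)}$ vanishes by the first fact above, and the first-order sum $\sum_i\Delta\tilde C_i^{(1)}$ vanishes by Theorem~\ref{theo:stationarity}; the leading surviving contribution is the quadratic one, $\sum_i \Delta\tilde C_i^{(2)} = n\cdot O(\epsilon^2) = O(\epsilon^2 n)$, while the higher orders contribute $O(\epsilon^3 n) + \dots$ and are subleading for small $\epsilon$. This yields $|\Delta\tilde C(\vec\theta)| \le O(\epsilon^2 n)$, as claimed. Note that it is essential to organize the $\epsilon^2$ estimate per stabilizer rather than through a global Taylor remainder $O(|\vec\delta|^3)$, because $|\vec\delta| = O(\epsilon\sqrt n)$ would make a naive cubic remainder scale as $\epsilon^3 n^{3/2}$ and overwhelm the target for large $n$.

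The main obstacle is the locality/counting step of the second paragraph: one must prove that the number of angles (and hence of Taylor monomials) feeding each $\Delta\tilde C_i$ stays $O(1)$, which is exactly where the hypothesis $M = O(1)$ is indispensable — a depth growing with $n$ would enlarge the backward light cones and could promote the bound to $O(\epsilon^2\,\mathrm{poly}(n))$. A secondary technical point is to verify, from the explicit form of $\mathcal R$ in Appendix~\ref{appendix:remainder}, that the surviving Taylor coefficients indeed carry genuinely $n$-independent constants, and to make the estimate $\delta_w = O(\epsilon)$ uniform over the optimization region.
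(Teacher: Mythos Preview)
Your proposal is correct and shares the paper's overall strategy: expand perturbatively in the coherent-error scale $\epsilon$, kill the linear term via Theorem~\ref{theo:stationarity}, and use the constant depth $M=O(1)$ to ensure the quadratic remainder is $O(\epsilon^2)$ per stabilizer with an $n$-independent constant. The organisation differs, however. The paper works directly with the explicit decomposition $\mathcal R=\sum_{p<q}\mathcal T_{qp}$ of Lemma~\ref{lemma:circ_decomposition}, expands each $\mathcal T_{qp}$ in $\epsilon$ at the superoperator level (Eq.~\eqref{eq:second_order_expansion}), observes that all surviving second-order pieces share the common form $\mathcal P_3\mathcal U_3\alpha_2\mathcal P_2\alpha_1\mathcal P_1\mathcal U_0$, and then bounds each such trace by Lemma~\ref{lemma:second_derivative_upperbound}, which counts the nonvanishing nested Pauli commutators; this yields the explicit constant $\epsilon^2 n M^2(M-1)\Upsilon N_c$. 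Your route is more top-down: you bypass the $\mathcal T_{qp}$ machinery and argue directly that each $\Delta\tilde C_i$ depends on only $O(1)$ angles via a backward light-cone argument, relying implicitly on the fact that Pauli maps act as scalars on Pauli strings (Lemma~\ref{lemma:action_pauli_noise_string}) and hence do not spread support. This is conceptually cleaner and avoids the bookkeeping of Appendix~\ref{appendix:remainder}, at the price of leaving the prefactor implicit; the paper's approach, by contrast, makes the dependence on $M$ and the locality $m$ fully explicit. The locality step you flag as the main obstacle is indeed the crux in both proofs: Lemma~\ref{lemma:second_derivative_upperbound} is essentially a concrete instantiation of your light-cone bound, and your acknowledged ``secondary technical point'' about $n$-independent coefficients is precisely what that lemma supplies.
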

The proof is contained in Appendix \ref{appendix:upper_bound}.
As Eq.~\eqref{eq:linear_upper_bound_2} scales linearly with $n$ then $|\Delta C(\vec{\theta})/\tilde{C}(\vec{\theta})|$ does not depend on $n$.

We stress that the upper bound in Eq.~\eqref{eq:linear_upper_bound_2} depends only on the order of magnitude of coherent errors. This suggests the local convexity of $\tilde{C}$ as long as $\epsilon^2n$ is small beyond the weak Pauli noise limit, within the assumptions of Sec. \ref{subsec:pauli_maps_end_circ}. 

It follows from Theorems \ref{theo:stationarity} and \ref{theo:linear_upper_bound} that the case of general Pauli maps after each circuit moment can be essentially reduced to the one considered in Sec.~\ref{subsec:pauli_maps_end_circ}.

See appendix \ref{sec:analytic} for a simple analytical example.

\section{Realistic scenarios in which VCEM is effective}
\label{sec:realistic_scenarion}
Two of the main quantum algorithm subroutines based on the preparation of stabilizer states are the preparation of graph states \cite{hein2006entanglement,hein2004, shettell2020graph} and GHZ states \cite{contreras2019resource,d2004computational}. For example, 2D/3D graph states are a foundamental resource in the framework of Measurement Based Quantum Computation (MBQC) \cite{briegel2009measurement,raussendorf2003measurement}. For 2D/3D square/cubic graph states respectively, the optimal number of moments in which the circuit can be divided, without transpiling the latter into native gates, is $M' = 4$ and $M' = 6$. These are constant numbers independent from the number of nodes, i.e. of qubits, and this is a common property of many 2D/3D graph states and of GHZ states. When the circuit is transpiled into native gates the resulting number of moments $M$ might increase as $M = \nu M'$ where $\nu$ is a constant. These are practical situations where the scaling in Eq.~\eqref{eq:linear_upper_bound_2} holds. The corresponding states, whose structure matches the hardware connectivity, constitute realistic examples of stabilizer states suitable for performing VCEM.

In order to analyze situations in which the findings of Sec. \ref{sec:VCEM_incoherent} are useful, we make further assumptions on the structure of incoherent errors based on realistic features of quantum hardwares.

For real quantum devices, e.g. superconducting qubits \cite{ibm_quantum_exp,google_quantum_AI,rigetti}, photonic platforms \cite{quandela_cloud, Xanadu, PsiQuantum}, trapped ions \cite{IONQ, Quantinuum}, neutral atoms \cite{Pasqual,QuEra} and so on, can be expected that the dominant contributions of incoherent errors act after each circuit moment. This is the reason why we studied the scenario in Sec. \ref{sec:local_pauli_map} that resembles this situation. The main difference with respect to the latter is that, for a generic quantum device, the resulting noise map after each momentum is given by 
\begin{equation}
    \begin{aligned}
    \label{eq:m_local_map}
      &\mathcal{E}_q\hat{\rho} = \sum_{j=0}^{4^{n}-1}\hat{E}_{q,j}\hat{\rho}\hat{E}_{q,j}\, ,  
    \end{aligned}
\end{equation} for generic operators $\hat{E}_{q,j}$. In general $\mathcal{E}_q$ is not a Pauli map. However, this fact does not constitute a fundamental limitation for the VCEM technique. Indeed, by exploiting the Pauli twirling technique, \cite{Dur,wallman2016,hashim2021}  Eq.~\eqref{eq:m_local_map} can be reduced to Eq.~\eqref{eq:m_local_pauli_map} (see appendix \ref{sec:twirling}). Pauli twirling was already implemented efficiently on superconducting devices \cite{van2023probabilistic} in order to modify the noise behaviour, justifying our assumptions. Thus by combining VCEM with Pauli twirling, the conclusions drawn in Sec. \ref{sec:local_pauli_map} still apply.

\begin{figure*}[htp]
    \begin{minipage}{0.48\textwidth}
        \includegraphics[width=\textwidth]{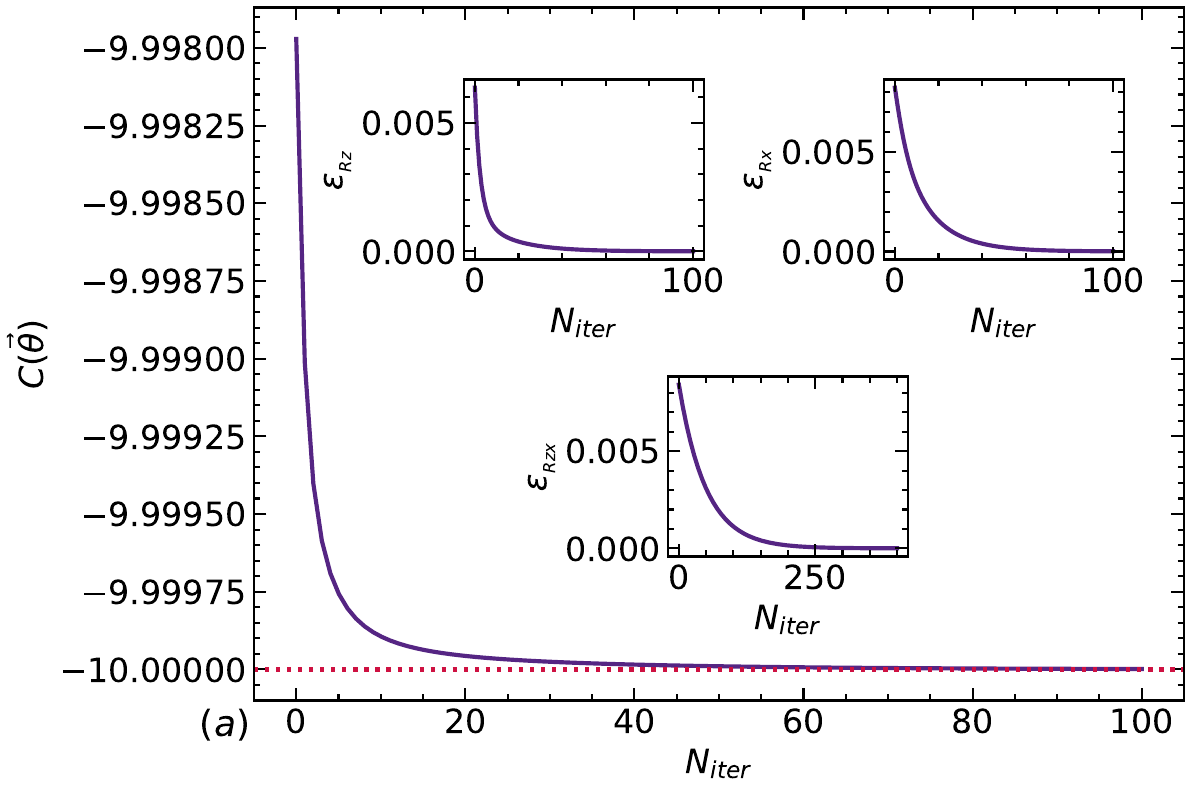}
    \end{minipage}
    \hfill%
    \begin{minipage}{0.48\textwidth}
        \includegraphics[width=\textwidth]{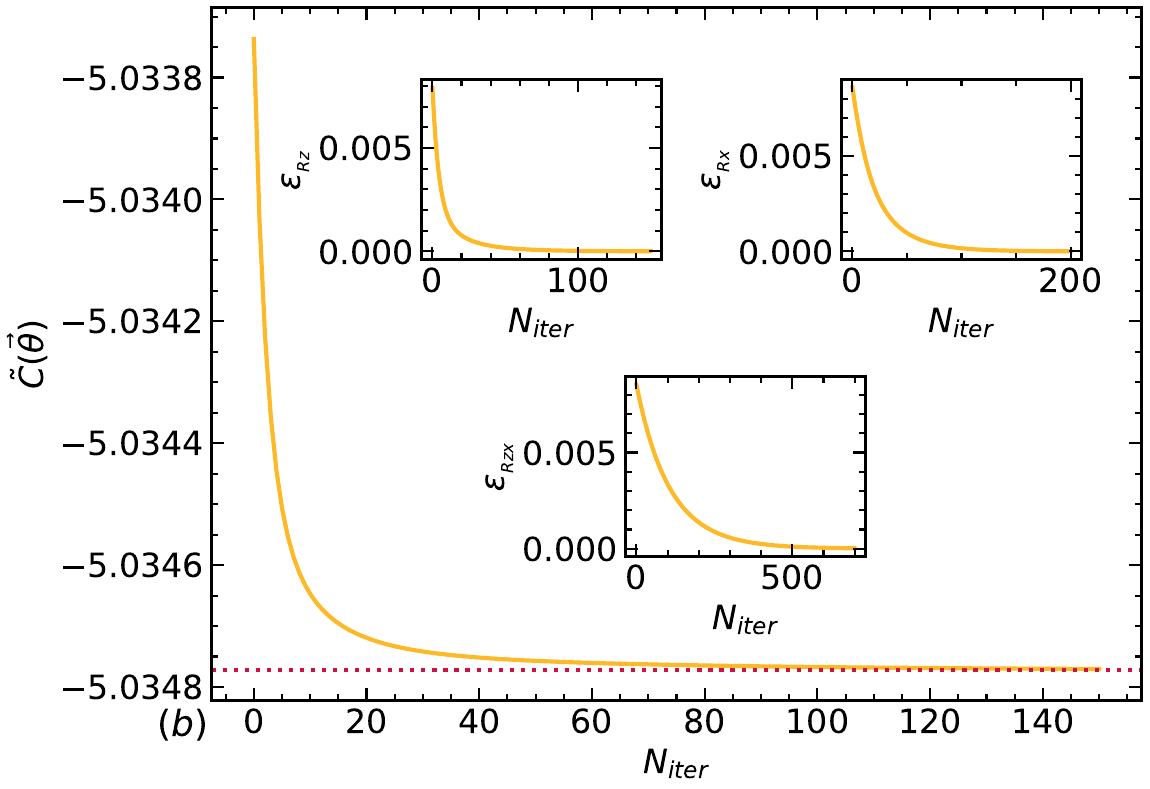}
    \end{minipage}
\caption{10 qubits 2D-rectangular graph state cost function optimization. Panel (a) shows the value of the cost function $C(\vec{\theta})$  with coherent errors only as a function of the number of optimization steps $N_{iter}$. The red dashed line is the minimal value $C(\vec{\theta}_{opt}) = -10$. The insets represent the behaviour of $\varepsilon_{_{R_z}}$, $\varepsilon_{_{R_x}}$ and $\varepsilon_{_{R_{zx}}}$ defined in the main text. Panel (b) shows the cost function $\tilde{C}(\vec{\theta})$ in presence of local Pauli incoherent errors after each circuit moment and the red dashed line is the minimal value $\tilde{C}(\vec{\theta}_{opt}) = -5.0347723$. The insets in panel (b) have the same meaning as for panel (a).}
\label{fig:cost_functions_numerical}
\end{figure*}

\begin{figure}[htp]

\includegraphics[width=0.45\textwidth]{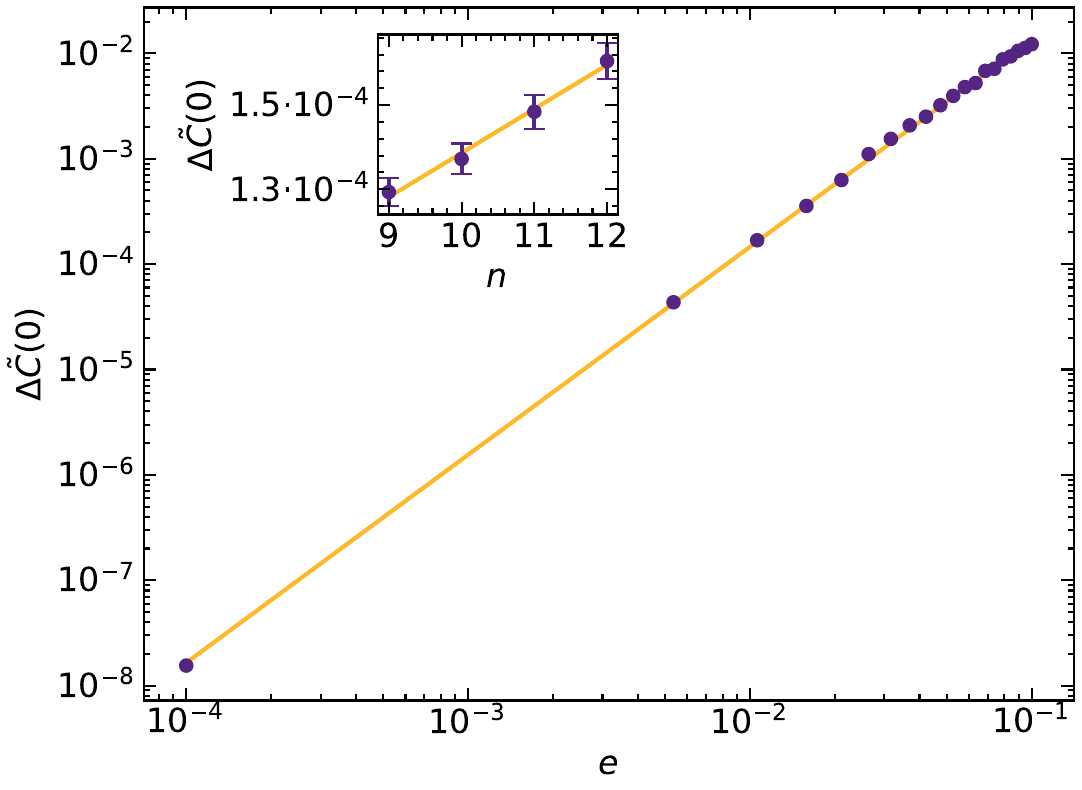}

\caption{$\Delta\tilde{C}(\vec{\theta})$ evaluated in $\vec{\theta}=0$ for 10 qubits 1D-linear graph state. The main plot shows the linear fit of $\Delta\tilde{C}(0)$ as a function of $\epsilon$ in log-log scale. The angular coefficient of the gold line is $a = 1.97\simeq 2$ confirming the quadratic scaling of $\Delta\tilde{C}$ in terms of $\epsilon$. The inset shows the linear fit of $\Delta\tilde{C}(0)$ as a function of the number of qubits $n$. The linear behaviour is again satisfied.} 
\label{fig:linear_e_n}
\end{figure}

\section{Numerical results}
\label{sec:numerics}

In the following we show numerically VCEM effectiveness in realistic scenarios as those presented in Sec. \ref{sec:realistic_scenarion}. All the simulations are performed with the software library Pennylane \cite{pennylane2022}. We focus in particular on graph states preparation.  

Given a graph with $n$ nodes and by calling $n(j)$ for each node $j = 1,\dots,n$, the set of all nodes connected to node $j$, the graph state stabilizers are given by 
\begin{equation}
\label{eq:graph_stabilizers}
 \hat{G}_i = \hat{X}_i \otimes \bigg(\otimes_{j\in n(i)}\hat{Z}_j\bigg)\,\, i = 1,\dots,n \,\ . 
\end{equation}
The stabilizer state associated to Eq.~\eqref{eq:graph_stabilizers}, which is named graph state, is
$|\psi_G\rangle = \hat{U}_G|0\rangle^{\otimes n}$ with the Clifford circuit $\hat{U}_G$ given by
\begin{equation}
    \hat{U}_G = \biggl(\bigotimes_{<i,j>\in E}\hat{CZ}_{ij}\biggr)\hat{\text{H}}^{\otimes n}\,\ ,
\end{equation}
where $E$ is the set of edges of the graph, $\hat{CZ}$ and $\hat{\text{H}}$ are respectively controlled-Z and Hadamard gates.

Here we consider a 2D rectangular graph with $n = 10$ and sides of length 2 and 5. We choose the following native gates basis $\hat{R}_x(\phi) = e^{-i\frac{\phi}{2}\hat{X}},\hat{R}_z(\phi)=e^{-i\frac{\phi}{2}\hat{Z}},\hat{R}_{zx}(\phi)= e^{-i\frac{\phi}{2}\hat{Z}\otimes\hat{X}}$ where $\hat{X}$, $\hat{Z}$ are respectively the x, z Pauli gates. In the following simulations we assume that given a qubit, the same kind of single-qubit native gate acting on it has the same random coherent error and correspondingly the same parameter. Analogously, given a couple of qubits, each $\hat{R}_{zx}$ gate acting on it has the same coherent error and parameter. Fig. \ref{fig:transpile} of Appendix \ref{sec:transpilation} shows respectively the transpilation of Hadamard and the CZ gates with coherent errors and parameters. 

First, we optimize the cost function in the case of coherent errors only. We sample randomly coherent errors in the interval $[-0.01,0.01]$. In Fig. \ref{fig:cost_functions_numerical} panel (a), we show the value of $C(\vec{\theta})$ as a function of the number of optimization steps $N_{iter}$. The red dashed line is the minimal value $C(\vec{\theta}_{opt}) = -10$. The insets represent the behaviour of $\varepsilon_l = \abs{\vec{\theta}_l + \vec{\epsilon}_l}$ with $l \in \{R_z, R_x, R_{zx}\}$. Here, $\vec{\theta}_l$ denotes the vector containing all the parameters associated with the gate $l$, while $\vec{\epsilon}_l$ collects the corresponding coherent errors. For the single-qubit rotations $\hat{R}_z$ and $\hat{R}_x$, these vectors include parameters and errors defined per qubit, whereas for the two-qubit gate $\hat{R}_{zx}$ they are defined over pairs of qubits. As shown, each $\varepsilon_l$ converges to zero with increasing number of optimization steps thus $\lim_{N_{iter}\to\infty}\vec{\theta}_l =-\vec{\epsilon}_l$.

Secondly, we study the modification introduced by incoherent errors by applying $m$-local Pauli maps after each circuit moment with $m=1,2$. Analogously, we sample randomly incoherent errors in the interval $[-0.01,0.01]$, while coherent ones are the same of the previous simulation. In panel (b) of Fig. \ref{fig:cost_functions_numerical} we show the behaviour of $\tilde{C}(\vec{\theta})$ and the red dashed line is the minimal value $\tilde{C}(\vec{\theta}_{opt}) = -5.0347723$. The insets in panel (b) have the same meaning as for panel (a).

The results show that VCEM is effective in realistic scenarios.

Moreover, our simulations confirm the scaling of the upper bound in Theorem \ref{theo:linear_upper_bound} as we show in Fig. \ref{fig:linear_e_n}. We simulated a 1D-linear graph state with 10 qubits and computed $\Delta\tilde{C}(0)$ by varying the order of magnitude $\epsilon$ of coherent errors and by keeping fixed incoherent errors in the same interval chosen above. Then, we performed a linear fit using a log-log scale which resulted in a slope $a = 1.97 \approx 2$. This indicates that $\Delta\tilde{C}$ grows quadratically with $\epsilon$ as expected. 
Additionally we computed $\Delta\tilde{C}(0)$ for the same graph state by increasing the number of qubits $n$ from 9 to 12. The data follow a linear trend, confirming the expected linear dependence on $n$.


\section{Conclusions and outlook}
This work introduces the Variational Coherent Error Mitigation (VCEM) framework, which effectively suppresses coherent errors through stabilizer-based variational optimization. VCEM operates directly on native gate parameters and remains robust even under realistic incoherent noise. Numerical 
results confirm its scalability and accuracy. 
As an outlook, we can consider a further scenario with respect to the one in Sec. \ref{sec:realistic_scenarion} in which it could be possible to engineer general incoherent errors in order to be modeled as a global depolarizing map after each circuit
moment. In that case the results in Sec. \ref{subsec:global_dep_map} would be applied, thus VCEM would work in presence of incoherent errors without approximations. A technique that in principle goes in that direction is Clifford twirling \cite{wallman2016, Cai_2019, olivia_dimatteo}, for which a general incoherent error map can be reduced to Eq.~\eqref{eq:global_dep}  (see appendix \ref{sec:twirling}).

However, at present this
scenario is considered less feasible in practice because circuit implementation of Clifford twirling can have a significant depth \cite{Bravyi_2021} and could itself introduce an unacceptable level of noise.

Looking ahead, further exploration of these kind of techniques and their practical implementation may open promising avenues to improve VCEM effectiveness enabling robust coherent error mitigation even with large-scale noisy stabilizer states.

Finally, the techinical points raised in this work are also very interesting, as they suggest that, for certain classes of parametrized quantum circuits, the location of the minimum $\vec{\theta}_{opt}$ rather than the optimal value of the cost function $C(\vec{\theta}_{opt})$ could be better exploited, as it is an intrinsically noise-resistant property of the model.

\section*{Acknowledgments}
G.C., G.D.B. A.B. and M.V. acknowledge support from University of Trieste and INFN. G.D.B. acknowledge support from a research fellowship funded by the Autonomous Region of Friuli Venezia Giulia under the Regional Programme FSE+ 2021/2027 (Operation code 2024/1424, CUP J93C23001490008). M.V. acknowledge support from a research fellowship funded by the Autonomous Region of Friuli Venezia Giulia under the Regional Programme FSE+ 2021/2027 (Operation code 2023/2143/1, CUP J93C23001490008). A.B. acknowledges support from the PNRR MUR project
PE0000023 - NQSTI and the EU EIC Pathfinder project QuCoM (GA 101046973).

The numerical simulations have been
performed within the PennyLane library~\cite{pennylane2022}.

\appendix

\onecolumngrid

\section{Useful properties and intermediate results}
\label{appendix:notation_and_lemmas}

We start the section by introducing the concept of Pauli map, and later we show some of their useful properties.

\begin{Definition}[Pauli map] 
\label{def:dpm}
A quantum channel $\mathcal{P}$ acting on $n$ qubits is a Pauli map if it can be decomposed as
\begin{equation}
    \mathcal{P}\hat{\rho} = \sum_{j=0}^{4^n-1} p_j\hat{P}_j\hat{\rho} \hat{P}_j,
\end{equation}
where $p_j \geq 0$, $\sum_jp_j=1$; $\hat{P}_j$ are Pauli strings and $\hat{\rho}$ is a generic density matrix of $n$ qubits.
\end{Definition}

It can be shown that any generic noise channel $\mathcal{E}$ can be reduced to a  Pauli map $\mathcal{P}$ by a procedure called Pauli twirling (see Appendix \ref{sec:twirling}). One of the main theoretical advantages of Pauli maps, is that they are self-adjoint, i.e. $\mathcal{P}=\mathcal{P}^\dagger$. This can easily be shown by noting that each term $\hat{P}_j$ is itself self-adjoint, i.e. $\hat{P}_j=\hat{P}_j^\dagger$. 
Additionally, they act on Pauli strings simply by rescaling them. This idea is formalized in the following Lemma.

\begin{Lemma}[Action of Pauli maps on Pauli strings]\label{lemma:action_pauli_noise_string}
    Given a Pauli map $\mathcal{P}$ and a Pauli string $\hat{S}$, the action $\mathcal{P}\hat{S}$ is given by
    \begin{equation}
       \mathcal{P}\hat{S} = \chi \hat{S},
    \end{equation}
    for some $\chi\in \mathbb{R}$, $|\chi| \leq 1$. In particular, $\chi = 1-2\sum_{j:\{S, P_j\} = 0} p_j$.
\end{Lemma}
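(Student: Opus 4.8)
The plan is to expand the channel directly on the single Pauli string $\hat{S}$ and exploit the rigid commutation structure of the Pauli group. First I would write
\begin{equation}
\mathcal{P}\hat{S} = \sum_{j=0}^{4^n-1} p_j\, \hat{P}_j \hat{S} \hat{P}_j,
\end{equation}
and reduce each term using the two defining properties of Pauli strings: any two elements of $\mathcal{\bf{P}}_n$ either commute or anticommute, and each squares to the identity, $\hat{P}_j^2 = \hat{\mathbb{1}}$. Consequently, when $[\hat{P}_j,\hat{S}]=0$ one has $\hat{P}_j \hat{S} \hat{P}_j = \hat{S}\hat{P}_j^2 = \hat{S}$, while when $\{\hat{P}_j,\hat{S}\}=0$ one has $\hat{P}_j \hat{S} \hat{P}_j = -\hat{S}\hat{P}_j^2 = -\hat{S}$. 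Thus every summand is simply $\pm\hat{S}$, which is exactly what forces the output to be a scalar multiple of $\hat{S}$.

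The second step is to collect the two sign classes. Writing $A = \sum_{j:\{\hat{P}_j,\hat{S}\}=0} p_j$ for the total weight carried by anticommuting strings and $C = \sum_{j:[\hat{P}_j,\hat{S}]=0} p_j$ for the commuting ones, the expansion collapses to
\begin{equation}
\mathcal{P}\hat{S} = (C - A)\,\hat{S} \equiv \chi\, \hat{S}.
\end{equation}
Since the two classes partition the full index set, normalization of the Pauli map gives $C + A = \sum_j p_j = 1$, so that $C = 1 - A$ and therefore $\chi = C - A = 1 - 2A = 1 - 2\sum_{j:\{\hat{P}_j,\hat{S}\}=0} p_j$, matching the claimed expression. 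The reality of $\chi$ is automatic, as it is a real combination of the probabilities $p_j$.

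Finally, the bound $|\chi|\le 1$ follows immediately: with $A,C\ge 0$ and $A+C=1$, the quantity $\chi = C - A$ ranges over $[-1,1]$. I do not anticipate a genuine obstacle here. The only point requiring care is the commute-or-anticommute dichotomy, which is the structural fact that makes the conjugation $\hat{P}_j\hat{S}\hat{P}_j$ a pure sign rather than a more general Pauli operator; this is a standard property of the Pauli group and can be invoked directly, so the argument is essentially a bookkeeping computation once that observation is in place.
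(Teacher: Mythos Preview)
Your proposal is correct and follows essentially the same approach as the paper: expand the channel, use the commute/anticommute dichotomy of Pauli strings to reduce each conjugation $\hat{P}_j\hat{S}\hat{P}_j$ to $\pm\hat{S}$, split the sum into the two sign classes, and invoke $\sum_j p_j = 1$ to obtain $\chi = 1 - 2\sum_{j:\{\hat{P}_j,\hat{S}\}=0} p_j$ with $|\chi|\le 1$. The only cosmetic difference is that you make the use of $\hat{P}_j^2=\hat{\mathbb{1}}$ explicit, which the paper leaves implicit.
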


\begin{proof}
    The proof is based on the observation that any pair of Pauli strings either commutes or anti-commutes with each other. More specifically, given two Pauli strings $\hat{P}_j$ and $\hat{S}$, then if $[\hat{P}_j,\hat{S}]\neq 0 \Rightarrow \{\hat{P}_j,\hat{S}\}=0$ and vice versa, as it can be directly verified using the commutation relations of Pauli matrices. Consequently it follows that
    \begin{equation}
        \hat{P}_j\hat{S}\hat{P}_j = \begin{cases}
            \hat{S}\quad \text{if} \quad[\hat{S}, \hat{P}_j]=0\\
            -\hat{S} \quad\text{if}\quad\{\hat{S}, \hat{P}_j\}=0
        \end{cases}
    \end{equation}
By following Def. \ref{def:dpm}, we have
    \begin{equation}
        \mathcal{P}\hat{S} = \sum_{j=0}^{4^n-1}p_j\hat{P}_j\hat{S}\hat{P}_j = \left(\sum_{j:[\hat{S},\hat{P}_j]=0} p_j-\sum_{j:\{\hat{S},\hat{P}_j\}=0} p_j \right)\hat{S} = \left(1-2\sum_{j:\{\hat{S},\hat{P}_j\}=0} p_j\right)\hat{S} = \chi \hat{S},
    \end{equation}
    where the condition $|\chi| \leq 1$ is ensured by the relation
    \begin{equation}
        0\leq\sum_{i:\{S,P_j\}=0} p_j\leq 1.
    \end{equation}
\end{proof}

Another useful property of Pauli maps, is that they preserve their structure under conjugation with Clifford unitaries. More specifically, we have the following Lemma.

\begin{Lemma}[Pauli and unitary Clifford maps]
\label{lemma:commute_pauli_clifford}
Given a Pauli map $\mathcal{P}$ and a Clifford unitary $\mathcal{U}$, then we have
\begin{equation}
    \mathcal{U}\mathcal{P}\mathcal{U}^{\dagger} = \mathcal{P}', \;\; \text{or equivalently} \;\; \mathcal{U}\mathcal{P} = \mathcal{P}'\mathcal{U},
\end{equation}
where the new map $\mathcal{P}'$ is still of the form of a Pauli map.
\end{Lemma}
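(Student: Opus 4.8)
The plan is to reduce the whole statement to the single defining property of the Clifford group, namely that a Clifford unitary normalizes the Pauli group and therefore maps each Pauli string to a signed Pauli string under conjugation. First I would write the superoperators out explicitly. Recalling the convention $\mathcal{U}\hat{\rho} = \hat{U}\hat{\rho}\hat{U}^\dagger$ together with its Hilbert--Schmidt adjoint $\mathcal{U}^\dagger\hat{\rho} = \hat{U}^\dagger\hat{\rho}\hat{U}$, I would substitute the decomposition of $\mathcal{P}$ from Def.~\ref{def:dpm} and compute
\[
\mathcal{U}\mathcal{P}\mathcal{U}^\dagger\hat{\rho} = \hat{U}\Bigl(\sum_j p_j\,\hat{P}_j\,\hat{U}^\dagger\hat{\rho}\hat{U}\,\hat{P}_j\Bigr)\hat{U}^\dagger = \sum_j p_j\,(\hat{U}\hat{P}_j\hat{U}^\dagger)\,\hat{\rho}\,(\hat{U}\hat{P}_j\hat{U}^\dagger),
\]
where in the last step I have grouped the factors and used that the conjugate $\hat{U}\hat{P}_j\hat{U}^\dagger$ is Hermitian, so it coincides with its own adjoint.

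The second step is the structural one. Since $\hat{U}$ is Clifford, each conjugate obeys $\hat{U}\hat{P}_j\hat{U}^\dagger = s_j\,\hat{P}_{\sigma(j)}$ for some sign $s_j\in\{+1,-1\}$ and some permutation $\sigma$ of the index set $\{0,\dots,4^n-1\}$; this is precisely the statement that Cliffords map the Pauli basis onto itself up to signs. The key observation is that in each summand the conjugated string appears \emph{twice}, once on the left and once on the right of $\hat{\rho}$, so the sign squares away:
\[
(\hat{U}\hat{P}_j\hat{U}^\dagger)\,\hat{\rho}\,(\hat{U}\hat{P}_j\hat{U}^\dagger) = s_j^2\,\hat{P}_{\sigma(j)}\hat{\rho}\hat{P}_{\sigma(j)} = \hat{P}_{\sigma(j)}\hat{\rho}\hat{P}_{\sigma(j)},
\]
and the signs disappear entirely.

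Finally I would relabel the sum through the permutation. Setting $p'_{k} \equiv p_{\sigma^{-1}(k)}$ yields
\[
\mathcal{U}\mathcal{P}\mathcal{U}^\dagger\hat{\rho} = \sum_{k} p'_{k}\,\hat{P}_{k}\hat{\rho}\hat{P}_{k} \equiv \mathcal{P}'\hat{\rho},
\]
and because $\sigma$ is a bijection the new weights still satisfy $p'_k\ge 0$ and $\sum_k p'_k = \sum_j p_j = 1$. Hence $\mathcal{P}'$ meets the conditions of Def.~\ref{def:dpm} and is a genuine Pauli map. The equivalent commutation form $\mathcal{U}\mathcal{P} = \mathcal{P}'\mathcal{U}$ then follows immediately by multiplying on the right by $\mathcal{U}$ and using $\mathcal{U}^\dagger\mathcal{U} = \mathrm{id}$.

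I do not expect a serious obstacle here: the entire argument rests on the single fact that Cliffords permute Pauli strings up to sign, and the only point requiring genuine care is the bookkeeping of the sign $s_j$ — one must notice that it enters squared and therefore drops out, so that the transformed weights are merely a permutation of the original probabilities and automatically constitute a valid probability distribution.
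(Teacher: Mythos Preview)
Your proof is correct and follows essentially the same route as the paper: expand $\mathcal{U}\mathcal{P}\mathcal{U}^\dagger$ using the Kraus form of $\mathcal{P}$, apply the Clifford property to each $\hat{U}\hat{P}_j\hat{U}^\dagger$, and relabel. If anything you are more careful than the paper, which simply writes $\hat{U}\hat{P}_j\hat{U}^\dagger = \hat{P}_j'$ without tracking the possible sign or explicitly checking that the permuted weights still form a probability distribution; your observation that the sign enters squared and therefore drops out is the right way to close that small gap.
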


\begin{proof}

We prove the statement by showing that $\mathcal{U}\mathcal{P}\mathcal{U}^{\dagger}\hat{\rho} = \mathcal{P}'\hat{\rho}$ $\forall \hat{\rho}$. In particular, we use the definitions of both $\mathcal{P}$ and $\mathcal{U}$ to achieve the following chain of equalities

\begin{equation}
\label{eq:clifford_swap_dim}
\begin{aligned}
    \mathcal{U}\mathcal{P}\mathcal{U}^{\dagger}\hat{\rho} &=\hat{U}\left(\sum_{j=0}^{4^n-1} p_j \hat{P}_j\left(\hat{U}^{\dagger} \hat{\rho}\hat{U}\right)\hat{P_j}\right) \hat{U}^{\dagger} = \sum_{j=0}^{4^n-1} p_j \left(\hat{U}\hat{P}_j\hat{U}^{\dagger}\right) \hat{\rho}\left(\hat{U}\hat{P_j}\hat{U}^{\dagger}\right)=
    \sum_{j=0}^{4^n-1} p_j \hat{P}'_j\hat{\rho}\hat{P}'_j  =
    \sum_{j=0}^{4^n-1} p'_{j} \hat{P}_j\hat{\rho}\hat{P}_j =
    \mathcal{P}'\rho,
\end{aligned}
\end{equation}
where we used the property that, for any Clifford unitary $\hat{U}$ and Pauli string $\hat{P}_j$, $\hat{U}\hat{P}_j\hat{U}^\dagger = \hat{P}_j'$ is still a Pauli string.
\end{proof}

A similar property also holds for the generators of unitary superoperators.

\begin{Lemma}
\label{lemma:commute_h_unitary}
    Given a generic unitary superoperator $\mathcal{U}$ and a generic generator $\mathcal{H}$ defined by the Hamiltonian $\hat{H}$, as $\mathcal{H}\hat{\rho} = [\hat{H}, \hat{\rho}]$, we have
    \begin{equation}
        \mathcal{U}\mathcal{H}\mathcal{U}^{\dagger} = \mathcal{H}'\;\; \text{or equivalently} \;\; \mathcal{U}\mathcal{H} = \mathcal{H}'\mathcal{U},
    \end{equation}
    where $\mathcal{H}'$ is defined as $\mathcal{H}'\hat{\rho} = [\mathcal{U}\hat{H}, \hat{\rho}]$.
\end{Lemma}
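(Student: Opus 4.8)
The plan is to prove the superoperator identity by evaluating both sides on an arbitrary density matrix $\hat{\rho}$, exactly as was done for the analogous statement in Lemma~\ref{lemma:commute_pauli_clifford}. Writing the unitary superoperator explicitly as $\mathcal{U}\hat{\rho} = \hat{U}\hat{\rho}\hat{U}^\dagger$, the first thing I would pin down is the action of the adjoint superoperator $\mathcal{U}^\dagger$. Working with respect to the Hilbert--Schmidt inner product $\langle A, B\rangle = \Tr(A^\dagger B)$, a short computation using cyclicity of the trace gives $\mathcal{U}^\dagger \hat{\rho} = \hat{U}^\dagger \hat{\rho}\hat{U}$; equivalently, one can simply observe that $\mathcal{U}$ is invertible with $\mathcal{U}^{-1} = \mathcal{U}^\dagger$ acting as conjugation by $\hat{U}^\dagger$. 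Establishing this correspondence is the only place where one must be careful about the meaning of the dagger at the superoperator level rather than at the operator level.

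With that in hand, the second step is a direct substitution. I would compute
\begin{equation}
\mathcal{U}\mathcal{H}\mathcal{U}^\dagger\hat{\rho} = \hat{U}\,[\hat{H}, \hat{U}^\dagger \hat{\rho}\hat{U}]\,\hat{U}^\dagger = \hat{U}\hat{H}\hat{U}^\dagger\hat{\rho}\hat{U}\hat{U}^\dagger - \hat{U}\hat{U}^\dagger\hat{\rho}\hat{U}\hat{H}\hat{U}^\dagger,
\end{equation}
where I have expanded the commutator $\mathcal{H}\hat{\sigma} = [\hat{H},\hat{\sigma}]$ with $\hat{\sigma} = \hat{U}^\dagger\hat{\rho}\hat{U}$. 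The only non-routine manipulation is to apply unitarity $\hat{U}\hat{U}^\dagger = \hat{U}^\dagger\hat{U} = \mathbb{1}$ in the correct positions so that the inner factors collapse.

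The third step is to recognize the collapsed expression. After cancellation the two terms become $\hat{U}\hat{H}\hat{U}^\dagger\hat{\rho} - \hat{\rho}\,\hat{U}\hat{H}\hat{U}^\dagger = [\hat{U}\hat{H}\hat{U}^\dagger, \hat{\rho}]$. Since by definition $\mathcal{U}\hat{H} = \hat{U}\hat{H}\hat{U}^\dagger$ (the superoperator $\mathcal{U}$ applied to the operator $\hat{H}$), this is exactly $[\mathcal{U}\hat{H},\hat{\rho}] = \mathcal{H}'\hat{\rho}$, which holds for every $\hat{\rho}$ and therefore yields the operator identity $\mathcal{U}\mathcal{H}\mathcal{U}^\dagger = \mathcal{H}'$. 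The equivalent form $\mathcal{U}\mathcal{H} = \mathcal{H}'\mathcal{U}$ then follows by composing on the right with $\mathcal{U}$ and using $\mathcal{U}^\dagger\mathcal{U} = \mathrm{id}$.

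I expect no genuine obstacle here: the result is structurally identical to Lemma~\ref{lemma:commute_pauli_clifford}, with conjugation by a Pauli string replaced by conjugation inside the commutator generator. The one point requiring a little care is the bookkeeping of the adjoint $\mathcal{U}^\dagger$ as a superoperator and ensuring the unitarity cancellations are performed on the correct sides; everything else is routine algebra.
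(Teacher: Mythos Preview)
Your proposal is correct and follows essentially the same route as the paper: evaluate $\mathcal{U}\mathcal{H}\mathcal{U}^\dagger$ on an arbitrary $\hat{\rho}$, expand the commutator, use $\hat{U}\hat{U}^\dagger=\mathbb{1}$ to collapse the inner factors, and recognize the result as $[\mathcal{U}\hat{H},\hat{\rho}]$. The only addition relative to the paper is your explicit justification that $\mathcal{U}^\dagger\hat{\rho}=\hat{U}^\dagger\hat{\rho}\hat{U}$, which the paper simply uses without comment.
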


\begin{proof}
This Lemma follows immediately from the definitons of $\mathcal{U}$ and $\mathcal{H}$ by the following chain of equalities:

\begin{equation}
    \begin{aligned}
        \mathcal{U}\mathcal{H}\mathcal{U}^{\dagger}\hat{\rho} &= \hat{U}[\hat{H},\hat{U}^\dagger\hat{\rho} \hat{U}]\hat{U}^\dagger=\hat{U}\hat{H}\hat{U}^\dagger\hat{\rho} \hat{U}\hat{U}^\dagger-\hat{U}\hat{U}^\dagger\hat{\rho} \hat{U}\hat{H}\hat{U}^\dagger\\
        &=\hat{U}\hat{H}\hat{U}^\dagger \hat{\rho} -\hat{\rho} \hat{U}\hat{H}\hat{U}^\dagger = [\hat{U}\hat{H}\hat{U}^\dagger, \hat{\rho}] = [\mathcal{U}\hat{H}, \hat{\rho}] = \mathcal{H}'\hat{\rho}   
    \end{aligned}
\end{equation}

where $\mathcal{H}'$ is defined as $\mathcal{H}'\hat{\rho} = [\mathcal{U}\hat{H}, \hat{\rho}]$.
\end{proof}

We now show some useful upper bounds to compute derivatives in later sections.

\begin{Lemma} \label{lemma:first_derivative_upperbound}
Given a quantum state $\hat{\rho}$, a Pauli string $\hat{S}$, a Pauli map $\mathcal{P}$ and a superoperator $\mathcal{H}$, defined by the Hamiltonian $\hat{H}$, as $\mathcal{H}\hat{\rho} = [\hat{H}, \hat{\rho}]$, we have the following upper bound

\begin{equation}
    \left|\Tr(\hat{S}\mathcal{H}\mathcal{P\hat{\rho}})\right| \leq \|\hat{H}\|_2\|[\hat{S}, \hat{\rho}]\|_2\, ,
\end{equation}
where $\|\cdot\|_2$ is the operator 2-norm \cite{nielsen2000quantum}. In particular, if $[\hat{S},\hat{\rho}] = 0$, then $\Tr(\hat{S}\mathcal{H}\mathcal{P}\hat{\rho}) = 0$.
\end{Lemma}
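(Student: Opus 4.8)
The plan is to unpack the superoperators and reduce the expression to a Hilbert--Schmidt inner product that can be controlled by Cauchy--Schwarz. First I would write $\mathcal{H}\mathcal{P}\hat{\rho} = [\hat{H}, \mathcal{P}\hat{\rho}]$, so that the quantity of interest becomes $\Tr(\hat{S}[\hat{H}, \mathcal{P}\hat{\rho}])$. Expanding the commutator and applying the cyclicity of the trace term by term yields the identity
\begin{equation}
\Tr(\hat{S}[\hat{H}, \sigma]) = -\Tr(\hat{H}[\hat{S}, \sigma]),
\end{equation}
valid for any operator $\sigma$. The purpose of this rearrangement is to move the commutator off $\hat{H}$ and onto $\hat{S}$, producing the factor $[\hat{S}, \sigma]$ that must ultimately appear in the bound.

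Next I would recognize $\Tr(\hat{H}[\hat{S}, \mathcal{P}\hat{\rho}])$ as the Hilbert--Schmidt inner product $\langle \hat{H}, [\hat{S}, \mathcal{P}\hat{\rho}]\rangle$ (using $\hat{H}^\dagger = \hat{H}$) and invoke Cauchy--Schwarz, $|\Tr(A^\dagger B)| \le \|A\|_2\|B\|_2$, to obtain
\begin{equation}
|\Tr(\hat{S}\mathcal{H}\mathcal{P}\hat{\rho})| \le \|\hat{H}\|_2 \, \|[\hat{S}, \mathcal{P}\hat{\rho}]\|_2.
\end{equation}
It then remains to extract the map $\mathcal{P}$ from inside the norm, i.e. to establish $\|[\hat{S}, \mathcal{P}\hat{\rho}]\|_2 \le \|[\hat{S}, \hat{\rho}]\|_2$.

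This last step is the crux. I would exploit the defining feature of Pauli strings already used in Lemma~\ref{lemma:action_pauli_noise_string}: each $\hat{P}_j$ either commutes or anticommutes with $\hat{S}$, so $\hat{S}\hat{P}_j = s_j \hat{P}_j \hat{S}$ with $s_j \in \{+1, -1\}$. Substituting this into $[\hat{S}, \mathcal{P}\hat{\rho}] = \sum_j p_j (\hat{S}\hat{P}_j\hat{\rho}\hat{P}_j - \hat{P}_j\hat{\rho}\hat{P}_j\hat{S})$ collapses each summand to $s_j p_j \hat{P}_j[\hat{S}, \hat{\rho}]\hat{P}_j$, whence
\begin{equation}
[\hat{S}, \mathcal{P}\hat{\rho}] = \sum_j s_j p_j \hat{P}_j [\hat{S}, \hat{\rho}] \hat{P}_j.
\end{equation}
The triangle inequality, the unitary invariance of the Hilbert--Schmidt norm $\|\hat{P}_j [\hat{S},\hat{\rho}] \hat{P}_j\|_2 = \|[\hat{S},\hat{\rho}]\|_2$, and $\sum_j p_j = 1$ then give $\|[\hat{S}, \mathcal{P}\hat{\rho}]\|_2 \le \|[\hat{S}, \hat{\rho}]\|_2$, completing the bound. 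The ``in particular'' claim is immediate: if $[\hat{S}, \hat{\rho}] = 0$ the right-hand side vanishes, forcing $\Tr(\hat{S}\mathcal{H}\mathcal{P}\hat{\rho}) = 0$ (equivalently, every summand above is already zero). I expect the main obstacle to be precisely this final step---tracking the commute/anticommute signs so that $\mathcal{P}$ factors cleanly through the commutator---whereas the trace manipulation and the Cauchy--Schwarz estimate are routine.
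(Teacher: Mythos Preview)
Your proof is correct, but it follows a somewhat different route from the paper. Both arguments start with the same trace identity $\Tr(\hat{S}[\hat{H},\sigma])=-\Tr(\hat{H}[\hat{S},\sigma])$ and finish with Cauchy--Schwarz; the divergence is in how the Pauli map is eliminated. You apply Cauchy--Schwarz \emph{first} and then bound $\|[\hat{S},\mathcal{P}\hat{\rho}]\|_2$ by pushing $\hat{S}$ through the Kraus sum of $\mathcal{P}$ via the commute/anticommute signs, collapsing it to $\sum_j s_j p_j \hat{P}_j[\hat{S},\hat{\rho}]\hat{P}_j$ and using the triangle inequality plus unitary invariance. The paper instead expands $\hat{H}=\sum_k h_k\hat{P}_k$ in the Pauli basis, uses Lemma~\ref{lemma:action_pauli_noise_string} to act with $\mathcal{P}$ on each Pauli string $[\hat{S},\hat{P}_k]$, and obtains an \emph{exact} rewriting $\Tr(\hat{S}\mathcal{H}\mathcal{P}\hat{\rho})=\Tr(\hat{H}'[\hat{S},\hat{\rho}])$ with a rescaled Hamiltonian $\hat{H}'=\sum_k\chi_k h_k\hat{P}_k$ satisfying $\|\hat{H}'\|_2\le\|\hat{H}\|_2$; Cauchy--Schwarz is applied only at the very end. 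Your argument is more elementary---no Pauli-basis expansion of $\hat{H}$ is needed---while the paper's route has the advantage of producing an exact intermediate identity (rather than just an inequality) and of explicitly reusing Lemma~\ref{lemma:action_pauli_noise_string}, which ties the proof into the machinery developed earlier in the appendix.
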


\begin{proof}
    We begin our proof by expanding the Hamiltonian $\hat{H}$ in the Pauli basis, namely $\hat{H} = \sum_k h_k\hat{P}_k$. Using this decomposition and since $\Tr(\hat{S}\mathcal{H}\,\cdot\,)=\Tr(\mathcal{H}^\dagger\hat{S}\, \cdot\,)=\Tr( [\hat{S}, \hat{H}]\,\cdot\,)$ we get

    \begin{equation}
    \Tr([\hat{S}, \hat{H}]\mathcal{P\hat{\rho}}) = \sum_k h_k \Tr([\hat{S}, \hat{P}_k]\mathcal{P}\hat{\rho}).
    \end{equation}

    Note that, since both $\hat{S}$ and $\hat{P}_k$ are Pauli strings, also their commutator has the same structure. This can be easiliy derived from the defining relations of Pauli matrices. By Lemma \ref{lemma:action_pauli_noise_string}, we have that $\mathcal{P}[\hat{S}, \hat{P}_k] = \chi_k [\hat{S}, \hat{P}_k]$. In particular, this gives

    \begin{equation}
        \sum_k h_k \Tr([\hat{S}, \hat{P}_k]\mathcal{P}\hat{\rho})  = \sum_k \chi_kh_k \Tr([\hat{S}, \hat{P}_k]\hat{\rho}) =   \sum_k \chi_kh_k \Tr(\hat{P}_k[\hat{S},\hat{\rho}]) = \Tr{\left(\sum_k \chi_kh_k \hat{P}_k\right)[\hat{S},\hat{\rho}]} = \Tr(\hat{H}'[\hat{S},\hat{\rho}]),
    \end{equation}
    where we used cyclicity and linearity properties of the trace and used the shorthand notation $\hat{H}' = \sum_k h'_k \hat{P}_k$, with $h'_k = \chi_kh_k$.
    Recall that, by definiton of the $2$-norm, we have the handy relation $\|\hat{H}\|^2_2 = d\sum_k h_k^2$, where $d$ is the dimension of the system on which $\hat{H}$ acts on. Together with Lemma \ref{lemma:action_pauli_noise_string}, which ensures that $\chi_k^2\leq 1$, we can show that $h_k'^2 = h_k^2\chi_k^2 \leq h_k^2$, which implies $\|\hat{H}'\|^2_2 \leq \|\hat{H}\|^2_2$. Finally, invoking Cauchy-Schwarz inequality we have
    \begin{equation}
        \left|\Tr(\hat{H}'[\hat{S},\hat{\rho}])\right| < \|\hat{H}'\|_2\|[\hat{S},\hat{\rho}]\|_2 \leq \|\hat{H}\|_2\|[\hat{S},\hat{\rho}]\|_2,
    \end{equation}
    which concludes the proof.
\end{proof}

We can easily extend Lemma \ref{lemma:first_derivative_upperbound} to a more generic superoperator $\alpha = \sum_{l=1}^L \mathcal{H}_l$, getting the immediate upper bound \begin{equation}
\label{eq:gradient}
    \left|\Tr(\hat{S}\alpha\mathcal{P}\hat{\rho})\right| \leq L \max_l \left(\|\hat{H}_l\|_2\right)\|[\hat{S},\hat{\rho}]\|_2.
\end{equation}

Another useful upper bound involving superoperators of the form of $\alpha$ is given in the following Lemma.

\begin{Lemma}
\label{lemma:second_derivative_upperbound}
Given a quantum state $\hat{\rho}$, a Pauli string $\hat{S}$, two Pauli maps $\mathcal{P}_1$ and $\mathcal{P}_2$, let's define two superoperators $\mathcal{\alpha}_1$ and $\mathcal{\alpha}_2$ by

\begin{equation}
    \mathcal{\alpha}_q = \sum_{l=1}^{L_q}\mathcal{H}_{ql}, \;\;\text{where} \;\; \mathcal{H}_{ql} \hat{\rho} = [\hat{H}_{ql}, \hat{\rho}]\, .
\end{equation}
Further we assume that each $\hat{S}$ and $\hat{H}_{ql}$ acts non-trivially on $m$ qubits, i.e. they are $m$-local, for both $q=1,2$, each $\hat{H}_{ql}$ acts on a different subset of qubits when varying $l$. Furthermore, we assume that $\|\hat{H}_{ql}\|_2^2 \le O(d)$, where $d=2^m$. Then we have the following upper bound

\begin{equation}
    \left|\Tr(\hat{S}\alpha_1\mathcal{P}_1\alpha_2\mathcal{P}_2\hat{\rho})\right| \leq \Upsilon N_c\le O(4^{2m-1})\, .
\end{equation}
where $\Upsilon = \max_{ql} \|\hat{H}_{ql}\|^2_2/d$ is a constant, and $N_c \le O(4^{2m-1})$ is an integer number.
\end{Lemma}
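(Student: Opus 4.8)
The plan is to mirror the strategy of the proof of Lemma~\ref{lemma:first_derivative_upperbound}, but to iterate it twice so as to peel off the two generator/Pauli-map blocks $\alpha_1\mathcal{P}_1$ and $\alpha_2\mathcal{P}_2$ one at a time. First I would expand each generator in the Pauli basis, $\hat{H}_{ql} = \sum_{k} h_{ql,k}\hat{P}_{ql,k}$, where each $\hat{P}_{ql,k}$ is a Pauli string supported on the $m$-qubit set on which $\hat{H}_{ql}$ acts. Using the identity $\Tr(\hat{S}\mathcal{H}X) = \Tr([\hat{S},\hat{H}]X)$ established inside the proof of Lemma~\ref{lemma:first_derivative_upperbound} together with the self-adjointness of Pauli maps, I would move $\hat{S}$ successively through $\alpha_1$ and then through $\mathcal{P}_1$, and repeat with $\alpha_2$ and $\mathcal{P}_2$. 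Concretely, $\Tr(\hat{S}\alpha_1\mathcal{P}_1\alpha_2\mathcal{P}_2\hat{\rho})$ becomes $\sum_{l_1,l_2}\Tr\big(\mathcal{P}_2\big([\,\mathcal{P}_1([\hat{S},\hat{H}_{1l_1}])\,,\hat{H}_{2l_2}]\big)\hat{\rho}\big)$, where at each step the commutator of two Pauli strings is again proportional to a Pauli string (with coefficient of modulus $0$ or $2$), and each Pauli map merely rescales that Pauli string by a factor $\chi$ with $|\chi|\le 1$ by Lemma~\ref{lemma:action_pauli_noise_string}.

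Second, I would bound the resulting scalar. After full expansion, every surviving term has the form $h_{1l_1,k_1}\,h_{2l_2,k_2}\,(\text{phases})\,\chi^{(1)}\chi^{(2)}\,\Tr(\hat{R}\hat{\rho})$, where $\hat{R}$ is a Pauli string and $|\Tr(\hat{R}\hat{\rho})|\le 1$ since $\hat{\rho}$ is a density matrix and $\hat{R}$ has eigenvalues $\pm1$. Collecting the bounded factors (two commutator moduli $\le 2$, two rescalings $|\chi|\le1$) gives the overall estimate $4\sum |h_{1l_1,k_1}||h_{2l_2,k_2}|$ over the surviving index set. Each individual coefficient satisfies $h_{ql,k}^2 \le \sum_k h_{ql,k}^2 = \|\hat{H}_{ql}\|_2^2/d \le \Upsilon$, so that $|h_{1l_1,k_1}||h_{2l_2,k_2}| \le \Upsilon$ and the whole expression is at most $\Upsilon$ times a constant times the number $N_c$ of surviving index tuples.

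Third, and this is the crux, I would count $N_c$ using $m$-locality together with the disjointness of the supports within a single moment. The inner commutator $[\hat{S},\hat{H}_{1l_1}]$ vanishes unless $\hat{H}_{1l_1}$ overlaps $\hat{S}$; since the supports are disjoint across $l_1$ and $\hat{S}$ touches only $m$ qubits, at most $m$ values of $l_1$ contribute, and for each one the commutator annihilates the commuting half of the $4^m$ Pauli terms of $\hat{H}_{1l_1}$, leaving at most $4^m/2$ strings $\hat{Q}$ supported on $\le 2m-1$ qubits. The same argument applied to the outer commutator leaves at most $4^m/2$ surviving Pauli terms of $\hat{H}_{2l_2}$ for each of the $O(m)$ admissible $l_2$. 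Hence the number of surviving $(k_1,k_2)$ combinations per moment-pair is at most $(4^m/2)^2 = 4^{2m-1}$, and folding in the $O(m^2)$ admissible $(l_1,l_2)$ pairs and the constant prefactor gives an integer $N_c \le O(4^{2m-1})$, whence $|\Tr(\hat{S}\alpha_1\mathcal{P}_1\alpha_2\mathcal{P}_2\hat{\rho})|\le \Upsilon N_c \le O(4^{2m-1})$, with $\Upsilon = O(1)$ by the hypothesis $\|\hat{H}_{ql}\|_2^2\le O(d)$.

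The main obstacle is the bookkeeping of this last step: one must verify that the single-moment, disjoint-support structure truly caps the number of contributing generator indices by $O(m)$ independently of $n$, and that the two anticommutation constraints each genuinely halve the $4^m$ candidate Pauli terms. A secondary subtlety is to confirm that the intermediate string $\hat{Q}$ remains local enough (supported on $\le 2m-1$ qubits) and acts nontrivially on $\mathrm{supp}(\hat{H}_{2l_2})$ whenever the outer commutator is nonzero, so that the halving argument applies at the second commutator as well; this follows from the support-overlap condition but should be checked explicitly.
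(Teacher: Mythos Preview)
Your proposal is correct and follows essentially the same route as the paper: Pauli-expand the generators, peel off the two $\alpha\mathcal{P}$ blocks using the nested-commutator identity together with Lemma~\ref{lemma:action_pauli_noise_string}, bound each Pauli coefficient by $\sqrt{\Upsilon}$ via the $2$-norm relation, and count the surviving terms through $m$-locality. The only organizational difference is that the paper absorbs the index $l$ into the Pauli index $k$ at the outset (using disjointness of supports to guarantee that each $\hat{P}_k$ arises from a unique $l$), whereas you keep $l_1,l_2$ explicit and bound their number by $O(m)$ afterwards; your version is slightly more explicit about the commutator factors of $2$ and the halving count, which the paper handles more tersely via H\"older.
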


\begin{proof}
    We start by noting that, decomposing each Hamiltonian $\hat{H}_{ql}$ into its Pauli decomposition $\hat{H}_{ql} = \sum_k h_{qlk} \hat{P}_k$ and collecting terms which share the same $\hat{P}_k$, we can obtain the following Pauli decomposition for the action of $\alpha_q$, i.e.
    $\alpha_q\hat{\rho}= \sum_k \left(\sum_{l}h_{qlk}\right) [\hat{P}_k, \hat{\rho}] = \sum_k\left(\sum_{l}h_{qlk}\right)\mathcal{H}^{\text{\tiny (P)}}_k,$
    where $\mathcal{H}^{\text{\tiny (P)}}_k$ is defined by the corresponding $\hat{P}_k$. Since we assume that each $\hat{H}_{ql}$ acts non-trivially on distinct qubits, then all terms in $\sum_{l}h_{qlk}$ must vanish except when $k$ matches the support of $\hat{H}_{ql}$, which can only happen for one $l$. As there is no ambiguity in the choice of $\hat{H}_{ql}$, we hence can drop the index $l$ entirely, and simply write $h_{qk}$ instead.
    
    Using the more explicit form $\Tr(\hat{S}\mathcal{H}\,\cdot\,)=\Tr(\mathcal{H}^\dagger\hat{S}\, \cdot\,)=\Tr( [\hat{S}, \hat{H}]\,\cdot\,)$ we get

    \begin{equation}
    \begin{aligned}
    \left|\Tr(\hat{S}\alpha_1\mathcal{P}_1\alpha_2\mathcal{P}_2\hat{\rho})\right| &= \left|\sum_{kk'} h_{1k}h_{2k'} \Tr(\hat{S}\mathcal{H}^{\text{\tiny (P)}}_k\mathcal{P}_1\mathcal{H}^{\text{\tiny (P)}}_{k'}\mathcal{P}_2\hat{\rho})\right| \leq \sum_{kk'} |h_{1k}| \cdot|h_{2k'}| \cdot\left|\Tr([\hat{S},\hat{P}_k]\mathcal{P}_1\mathcal{H}^{\text{\tiny (P)}}_{k'}\mathcal{P}_2\hat{\rho})\right|\\
    &\leq\sum_{kk'} |h_{1k}|\cdot |h_{2k'}| \cdot\left|\Tr([[\hat{S},\hat{P}_k], \hat{P}_{k'}]\mathcal{P}_2\hat{\rho})\right| \leq \sum_{kk'} |h_{1k}|\cdot |h_{2k'}| \cdot\left|\Tr([[\hat{S},\hat{P}_k], \hat{P}_{k'}]\hat{\rho})\right|\\
    &\leq \max_k(|h_{1k}|)\cdot\max_{k'}(|h_{2k'}|)\cdot \sum_{kk'} \left|\Tr([[\hat{S},\hat{P}_k], P_{k'}]\hat{\rho})\right|\, ,
    \end{aligned}
    \end{equation}

    where in the third term of the first line and in the first term of the second line we used the ciclicity of the trace and Lemma \ref{lemma:action_pauli_noise_string} for $\mathcal{P}_1[\hat{S},\hat{P}_k]$ and $\mathcal{P}_2[[\hat{S},\hat{P}_k], \hat{P}_{k'}]$ by upper bounding the corresponding constants to 1. 
    
    Recalling the relation $\|\hat{H}_{ql}\|^2_2 = d\sum_k h^2_{qlk}$, where $d$ denotes the dimension of the system on which $\hat{H}_{ql}$ acts, we can note that, given a $k$, $h_{qk}^2 \leq \|\hat{H}_{ql}\|^2_2$. As a consequence, it is immediate to see that $d\max_k |h_{qk}| \leq \max_l \|\hat{H}_{ql}\|_2$, $\forall q$. Thanks to this observation, we can finally write the bound for the product
    \begin{equation}
        \max_k(|h_{1k}|)\cdot\max_{k'}(|h_{2k'}|) \leq \max_{ql}\left(\frac{\|\hat{H}_{ql}\|^2_2}{d}\right) \equiv \Upsilon \le O(1)\, .
    \end{equation}
    
    Furthermore, $\hat{S}$, $\hat{P}_k$ and $\hat{P}_{k'}$ are $m$-local by virtue of the locality of $\hat{H}_{ql}$. This implies that only a constant number $N_c$ of nested commutators 
$[[ \hat{S}, \hat{P}_k ], \hat{P}_{k'}]$ are non-vanishing. 
Indeed, each operator $\hat{S}_i$ commutes with all Pauli strings that act trivially on the same qubits. For all the others Pauli strings $\hat{P}_k$, since they act non-trivially on at most a constant number $m$ of qubits, the overlap between the supports of $\hat{S}_i$ and $\hat{P}_k$ is bounded by $4^{2m-1}$, yielding only a constant number of terms in the sum.

Using H\"older inequality, we have $\left|\Tr([[\hat{S},\hat{P}_k], \hat{P}_{k'}]\hat{\rho})\right| \leq \|[[\hat{S},\hat{P}_k], \hat{P}_{k'}]\|_\infty\|\hat{\rho}\|_1 = 1$, giving the final upper bound

    \begin{equation}
         \left|\Tr(\hat{S}\alpha_1\mathcal{P}_1\alpha_2\mathcal{P}_2\hat{\rho})\right| \leq  \Upsilon N_c \le O(4^{2m-1})\, .
    \end{equation}
\end{proof}

\begin{Definition}[Global depolarizing map] 
\label{def:ddm}
A quantum channel $\mathcal{D}$ acting on $n$ qubits is a global depolarizing map if it has the form 
\begin{equation}\label{eq:global_dep_general}
\mathcal{D}\hat{\rho} = (1-p) \hat{\rho}+\frac{p}{2^n} \mathbb{1}\, ,
\end{equation}
where $p$ is the probability of getting a totally depolarized state $\mathbb{1}/2^n$ and $\hat{\rho}$ is a generic density matrix of $n$ qubits. We notice that a global symmetric Pauli map, i.e Definition \ref{def:dpm} with $p_j = p $ $\forall j$, reduces to Eq.~\eqref{eq:global_dep_general}.
\end{Definition}

\begin{Lemma}(Effective depolarizing map at the end of the circuit).
\label{lemma:effective_dep}
Given a noisy quantum circuit $\mathcal{N}_{\mathcal{D}}$ as in Eq.~\eqref{eq:circuit_rho_global_dep}, where a global depolarizing map $\mathcal{D}_q$ as in Eq.~\eqref{eq:global_dep} acts after each circuit moment $\mathcal{U}_q$ of the circuit $\mathcal{U}_S=\prod_{q=1}^{M}\mathcal{U}_q$, then we show that

\begin{equation}
\mathcal{N}_{\mathcal{D}}\hat{\rho_0}=\mathcal{D}\hat{\rho}_S\, ,
\end{equation}
where $\hat{\rho}_S = \mathcal{U}_S\hat{\rho}_0$, $\hat{\rho}_0= (\ket{0}\bra{0})^{\otimes n}$ and $\mathcal{D}$ is an effective global depolarizing map as in Eq.~\eqref{eq:global_dep_general} acting at the end of the circuit $\mathcal{U}_S$ with probability $p' = 1-\prod_{q=1}^{M}(1-p_q)$.
\end{Lemma}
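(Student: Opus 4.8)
The plan is to push every depolarizing map to the end of the circuit, exploiting the fact that the maximally mixed state is a fixed point of every unitary. First I would establish the key commutation identity: for any unitary channel $\mathcal{U}_q$ and any global depolarizing map $\mathcal{D}_p$ of the form in Eq.~\eqref{eq:global_dep}, one has $\mathcal{U}_q\mathcal{D}_p = \mathcal{D}_p\mathcal{U}_q$. This follows by applying both sides to a generic $\hat{\rho}$ and using $\hat{U}_q\mathbb{1}\hat{U}_q^\dagger = \mathbb{1}$: the $(1-p)$ term carries $\hat{\rho}$ through unchanged, while the depolarizing part is left invariant because the unitary acts trivially on $\mathbb{1}$.

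Second, I would record the composition law for two depolarizing maps. A direct computation gives
\begin{equation}
\mathcal{D}_a\mathcal{D}_b\hat{\rho} = (1-a)(1-b)\hat{\rho} + \frac{1}{2^n}\big[a+b-ab\big]\mathbb{1}\, ,
\end{equation}
which is again a depolarizing map of the form of Eq.~\eqref{eq:global_dep_general} with probability $p' = 1-(1-a)(1-b)$. In particular the factors $1-p$ multiply and the maps commute among themselves.

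Third, starting from $\mathcal{N}_{\mathcal{D}} = \mathcal{D}_M\mathcal{U}_M\cdots\mathcal{D}_1\mathcal{U}_1$ as in Eq.~\eqref{eq:circuit_rho_global_dep}, I would use the commutation identity repeatedly to move each $\mathcal{D}_q$ leftward past all the later unitaries (and, using the second step, past the other depolarizing maps), collecting every depolarizing map on the left and leaving the product $\mathcal{U}_S=\prod_{q=1}^{M}\mathcal{U}_q$ on the right, so that $\mathcal{N}_{\mathcal{D}} = (\mathcal{D}_M\cdots\mathcal{D}_1)\,\mathcal{U}_S$. Applying the composition law by induction on $M$ then collapses $\mathcal{D}_M\cdots\mathcal{D}_1$ into a single depolarizing map $\mathcal{D}$ with $1-p' = \prod_{q=1}^{M}(1-p_q)$, i.e. $p' = 1-\prod_{q=1}^{M}(1-p_q)$. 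Acting on $\hat{\rho}_0$ and recalling $\mathcal{U}_S\hat{\rho}_0 = \hat{\rho}_S$ yields the claim $\mathcal{N}_{\mathcal{D}}\hat{\rho}_0 = \mathcal{D}\hat{\rho}_S$.

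I do not expect a genuine obstacle here: the only nontrivial ingredient is the commutation identity, which rests entirely on the unitary invariance of $\mathbb{1}$, and everything else is the inductive bookkeeping of the composition rule. The mildest care needed is in verifying that the commutations can indeed be carried out in any order, which is guaranteed precisely because the collected maps are all of the depolarizing form and hence mutually commuting.
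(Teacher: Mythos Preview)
Your proposal is correct and follows essentially the same approach as the paper: both arguments exploit that $\mathbb{1}$ is a fixed point of every unitary (and of every depolarizing map) to push all the $\mathcal{D}_q$ past the $\mathcal{U}_q$, and then collapse them using the multiplicative rule $1-p'=\prod_q(1-p_q)$. The only cosmetic difference is that you phrase the first step as a superoperator commutation identity $\mathcal{U}_q\mathcal{D}_p=\mathcal{D}_p\mathcal{U}_q$ and then rearrange, whereas the paper performs the same manipulation by iteratively expanding $\mathcal{N}_{\mathcal{D}}\hat{\rho}_0$ at the level of states.
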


\begin{proof}
For each $q$ in $\mathcal{N}_{\mathcal{D}}$ we apply consecutively $\mathcal{D}_q$ to the right giving rise to a new depolarizing map
\begin{equation}
\begin{aligned}
\mathcal{N}_{\mathcal{D}}\hat{\rho}_0 &=\biggl(\prod_{q=2}^{M}\mathcal{D}_q\mathcal{U}_q\biggr)\hat{\rho}_0=(1-p_1)\biggl(\prod_{q=2}^{M}\mathcal{D}_q\mathcal{U}_q\biggr)\mathcal{U}_1\hat{\rho}_0 + \frac{p_1}{2^n}\mathbb{1}\\
&= (1-p_1)(1-p_2)\biggl(\prod_{q=3}^{M}\mathcal{D}_q\mathcal{U}_q\biggr)\mathcal{U}_2\mathcal{U}_1\hat{\rho}_0+ \frac{1-(1-p_1)(1-p_2)}{2^n}\mathbb{1}\\
&=\prod_{q=1}^{M}(1-p_q)\mathcal{U}_S\hat{\rho}_0 + \frac{1-\prod_{q=1}^{M}(1-p_q)}{2^n}\mathbb{1}\\
& = \mathcal{D}\mathcal{U}_S\hat{\rho_0}=\mathcal{D}\hat{\rho}_S\, ,
\end{aligned}
\end{equation}
where $\mathcal{D}$ is a global depolarizing map with probability $p'$.
\end{proof}

\section{Simple analytical examples of the findings in sections \ref{sec:VCEM} and \ref{sec:VCEM_incoherent}}
\label{sec:analytic}
Here we show the modification of the landscape of the cost function according to the cases presented in the main text, by considering simple stabilizer states. We focus on a two qubit GHZ state as showed in Fig. \ref{fig:lakers}.

\begin{figure}[H]
    \centering
    \includegraphics[width=0.5\linewidth]{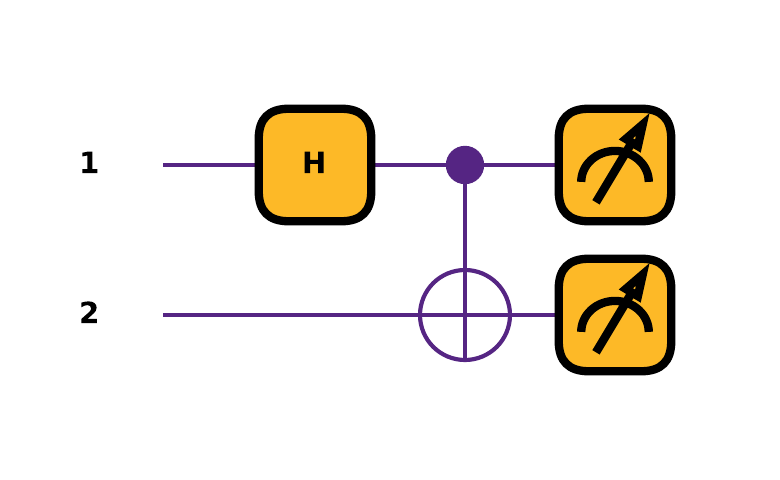}
    \caption{Two qubits GHZ circuit.}
    \label{fig:lakers}
\end{figure}

In order to easily visualize the cost function here we work with a single parameter. Thus, we choose the following simplified parametrization: the Hadamard gate $\hat{\text{H}} = \frac{1}{\sqrt{2}}\begin{pmatrix}1 &1\\
1 & -1\end{pmatrix}$ is assumed to be without coherent errors and the CNOT gate $\hat{CX}$ is parametrized as
\begin{equation}
\label{eq:UCX}
    \begin{aligned}
        \hat{U}_{CX}(\pi+\theta +\epsilon)= \begin{pmatrix}\mathbb{1} &0\\
0 & i\hat{R}_x(\pi+\theta +\epsilon)\end{pmatrix} 
    \end{aligned}\,\, ,
\end{equation}
where $\hat{U}_{CX}(\pi) = \hat{CX}$, $\hat{R}_x(\phi) = e^{-i\frac{\phi}{2}\hat{X}} = \begin{pmatrix} \cos(\frac{\phi}{2}) & -i\sin(\frac{\phi}{2})\\
-i\sin(\frac{\phi}{2}) & \cos(\frac{\phi}{2})\end{pmatrix}$ 
and $\hat{X}$, $\hat{Z}$ are respectively the x, z Pauli gates. The initial state is $|00\rangle$, this leads to
\begin{equation}
\begin{aligned}
\label{eq:ghz_state}
|\psi_{GHZ}(\theta +\epsilon)\rangle &= \hat{U}_{CX}(\pi+\theta +\epsilon)\hat{\text{H}}\mathbb{1}|00\rangle  = \frac{1}{\sqrt{2}}\biggl[|00\rangle+i\cos\Bigl(\frac{\pi+\theta+\epsilon}{2}\Bigr)|10\rangle+\sin\Bigl(\frac{\pi+\theta+\epsilon}{2}\Bigr)|11\rangle\biggr]\,\ .
\end{aligned}
\end{equation} 
The corresponding stabilizers operators are $\hat{S}_1 = \hat{X}\hat{X}$ and $\hat{S}_2 = \hat{Z}\hat{Z}$.

\subsection{Absence of incoherent errors}

By using Eq.~\eqref{eq:ghz_state} and Eq.~\eqref{eq:cost_function} we get
\begin{equation}
\label{eq:ghz_cost_func}
    C(\theta) = -\sin\Bigl(\frac{\pi+\theta +\epsilon}{2}\Bigr) -\sin^2\Bigl(\frac{\pi+\theta +\epsilon}{2}\Bigr) \,\ .
\end{equation}
We plot the behavior of Eq.~\eqref{eq:ghz_cost_func} in panel (a) of Fig. \ref{fig:cost_functions_GHZ}. The cost function displays a global minimum in $\theta = -\epsilon$.
\subsection{Pauli map at the end of the circuit}
At this point we assume the presence of incoherent errors as in Sec. \ref{subsec:pauli_maps_end_circ}. As an example we apply the following Pauli map

\begin{equation}\label{eq:pauli_map_XZ}
    \mathcal{P}\hat{\rho} =p_0 \hat{\rho}+p_1 \hat{X}\hat{X} \hat{\rho}\hat{X}\hat{X}+p_2 \hat{Z}\hat{Z} \hat{\rho}\hat{Z}\hat{Z}+p_3 \hat{Z}\hat{\mathbb{1}} \hat{\rho}\hat{Z}\mathbb{1}+p_4 \hat{\mathbb{1}}\hat{Z} \hat{\rho}\mathbb{1}\hat{Z}+p_5 \hat{X}\hat{\mathbb{1}} \hat{\rho}\hat{X}\hat{\mathbb{1}}+p_6 \hat{\mathbb{1}}\hat{X} \hat{\rho}\hat{\mathbb{1}}\hat{X}\,,
\end{equation}
where $p_1 =p_5=p_6=0.01$, $p_2 = 0.02$, $p_3 =p_4 = 0.2$ and $p_0 =1-\sum_{j=1}^{6}p_j = 0.55 $. 

By using Eq.~\eqref{eq:noisy_cost_function} we obtain
\begin{equation}
    \tilde{C}^{ \scaleto{(\mathcal{P})}{6pt}}(\vec{\theta}) = -(1-2\Gamma_1)\sin\Bigl(\frac{\pi+\theta +\epsilon}{2}\Bigr) -(1-2\Gamma_2)\sin^2\Bigl(\frac{\pi+\theta +\epsilon}{2}\Bigr) \,\ ,
\end{equation}
where $\Gamma_1 = p_3 + p_4$ and $\Gamma_2 = p_5+p_6$. The behaviour of this modified cost function is shown in panel (b) of Fig. \ref{fig:cost_functions_GHZ}. We notice that in the latter the local minima of the modified cost function $ \tilde{C}^{ \scaleto{(\mathcal{P})}{6pt}}$ approach the global minimum for increasing $\Gamma_1$. Despite this fact by starting the optimization in $\theta = 0$ we are sure to be close to the minimum corresponding to the target optimal parameter $\theta = -\epsilon$.

\subsection{Global depolarizing after each circuit moment}
We assume incoherent errors as in section \ref{subsec:global_dep_map} with $p_q=p$ $\forall q$. The circuit preparing the GHZ state in Fig. \ref{fig:lakers} has $M = 2$ moments and thus according to Eq.~\eqref{eq:cost_function_dep_momenta} the cost function is modified to
\begin{equation}
    \tilde{C}^{ \scaleto{(\mathcal{D})}{6pt}}(\theta) =(1-p)^2C(\theta)\,\ .
\end{equation}
where $p = 0.2$. This is drawn in panel (c) of Fig. \ref{fig:cost_functions_GHZ}. We notice that in this case the global minimum of the modified cost function remains global and in general cannot become local because the cost function is just reshaped by a constant.
\begin{figure*}[htp]
    \begin{minipage}{0.329\textwidth}
        \includegraphics[width=\textwidth]{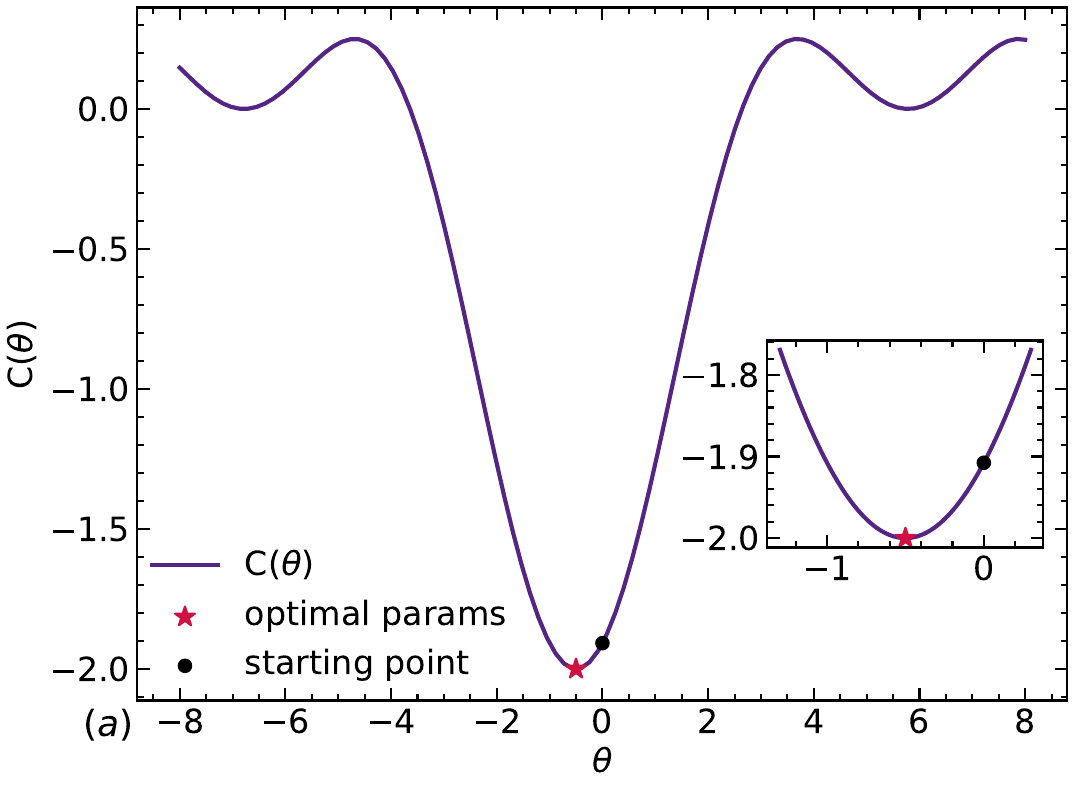}
    \end{minipage}
    \hfill%
    \begin{minipage}{0.329\textwidth}
        \includegraphics[width=\textwidth]{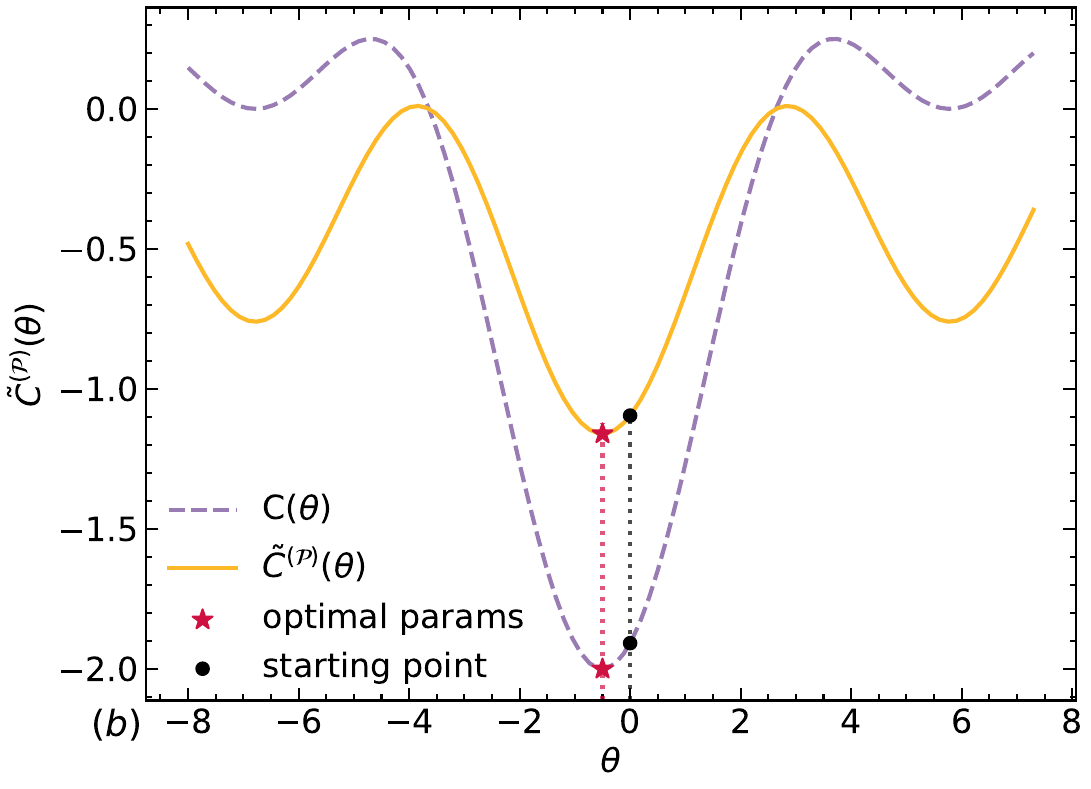}
    \end{minipage}
    \begin{minipage}{0.329\textwidth}
        \includegraphics[width=\textwidth]{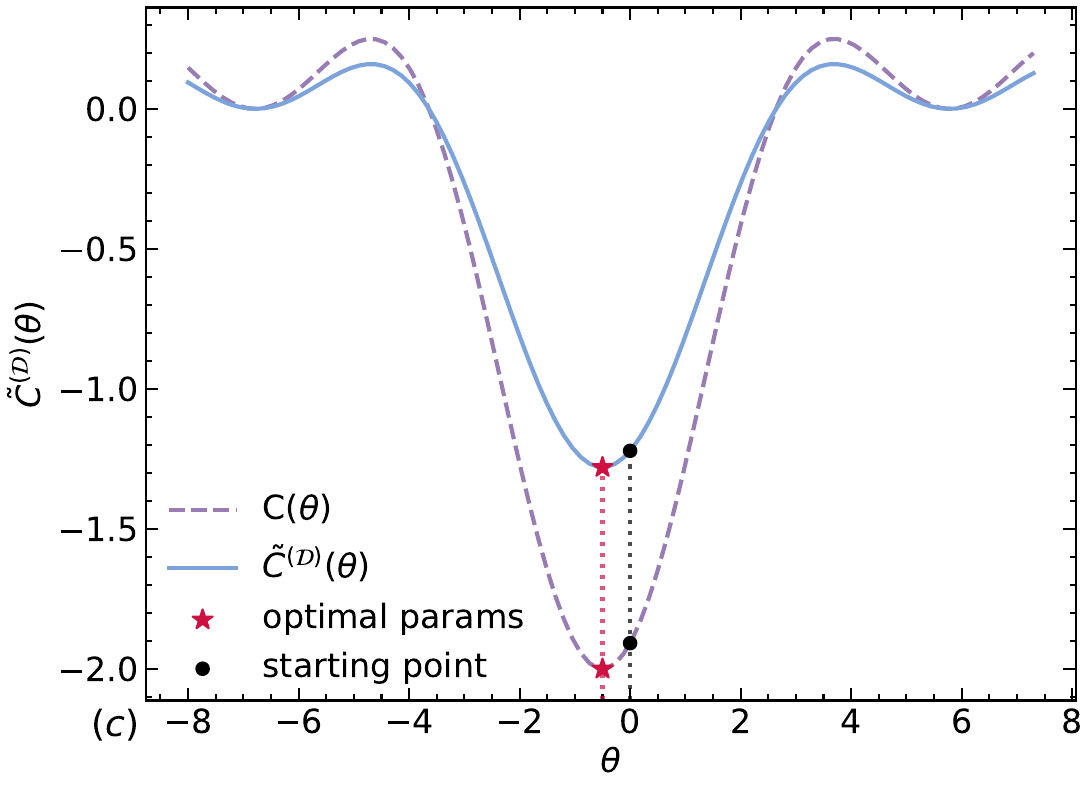}
    \end{minipage}
\caption{GHZ state cost function landscape for different noise scenarios. In panel (a) we show in purple the cost function in absence of incoherent noises. We choose the value of coherent error $\epsilon = 0.5$ for better visualization. The red star marks the optimal parameter $\theta=-\epsilon$ corresponding to the minimum and the black dot marks the initial parameter $\theta=0$ of the optimization in order to be sure to start inside a convex region $\mathcal{B}$ around the minimum. A zoom of $\mathcal{B}$ is shown in the inset. In panel (b) we show in gold the modified cost function in presence of Pauli errors at the end of the circuit. The purple dashed curve is the same of panel (a) and we highlight the translation of the minimum and of the starting point with respectively vertical red and black dotted lines. In panel (c) we show in blue the modified cost function in presence of global depolarizing errors after each circuit moment. The meaning of the remaining symbols and lines is the same of panel (b).}
\label{fig:cost_functions_GHZ}
\end{figure*}

\subsection{Pauli maps after each circuit momentum}

Finally we assume incoherent errors as in section \ref{sec:local_pauli_map}. Here we change the parametrization of the CNOT gate because the assumption that each native gate in the circuit should be able to be expressed as $\hat{U}(\phi) = e^{-i\phi\hat{H}}$ is necessary and Eq.~\eqref{eq:UCX} does not satisfy it. In the previous examples we used Eq.~\eqref{eq:UCX} because in these cases the aforementioned assumption is not necessary and moreover the former choice leads to a better visualization of the differences between the plots in Fig. \ref{fig:cost_functions_GHZ}. Thus now we choose the following $\hat{U}_{CX}$ gate

\begin{equation}
\label{eq:UCX_2}
    \begin{aligned}
        \hat{U}_{CX}(\pi+\theta +\epsilon)= e^{-i(\pi + \theta + \epsilon)\hat{H}_{CX}}= \begin{pmatrix}\mathbb{1} &0\\
0 & e^{\frac{i}{2}(\pi + \theta + \epsilon)}\hat{R}_x(\pi+\theta +\epsilon)\end{pmatrix} 
    \end{aligned}\,\, ,
\end{equation}
where $\hat{H}_{CX} = \begin{pmatrix}0 &0\\
0 & \frac{1}{2}(\hat{X}-\mathbb{1})\end{pmatrix}$ and $\hat{U}_{CX}(\pi+\theta +\epsilon)=\hat{U}_{CX}(\theta +\epsilon)\hat{U}_{CX}(\pi)$.

We apply the following 2-local and 1-local Pauli maps
\begin{equation}
\begin{aligned}
\label{eq:local_pauli_maps_XYZ}
&\mathcal{P}_1\hat{\rho} =p_0^{(1)} \hat{\rho}+p^{(1)}_1 \hat{Z}\hat{\mathbb{1}} \hat{\rho}\hat{Z}\mathbb{1}+p_2^{(1)} \hat{X}\hat{\mathbb{1}} \hat{\rho}\hat{X}\hat{\mathbb{1}}+p_3^{(1)} \hat{Y}\hat{\mathbb{1}} \hat{\rho}\hat{Y}\hat{\mathbb{1}}\,,\\
&\mathcal{P}_2\hat{\rho} =p_0^{(2)} \hat{\rho}+p_1^{(2)} \hat{Z}\hat{\mathbb{1}} \hat{\rho}\hat{Z}\mathbb{1}+p_2^{(2)} \hat{\mathbb{1}}\hat{Z} \hat{\rho}\mathbb{1}\hat{Z}+p_3^{(2)} \hat{X}\hat{\mathbb{1}} \hat{\rho}\hat{X}\hat{\mathbb{1}}+p_4^{(2)} \hat{\mathbb{1}}\hat{X} \hat{\rho}\hat{\mathbb{1}}\hat{X}\,,\\
\end{aligned}
\end{equation}
where $p_1^{(1)} =0.01$, $p_2^{(1)} = 0.02$, $p_3^{(1)} = 0.17$, $p_0^{(1)} =1-\sum_{j=1}^{3}p_j^{(1)} = 0.80$, $p_1^{(2)} =0.01$, $p_2^{(2)} = 0.1$, $p_3^{(2)} = 0.18$, $p_4^{(2)} = 0.01$, $p_0^{(2)} =1-\sum_{j=1}^{4}p_j^{(2)} = 0.70$ and $\hat{Y}$ is the Pauli y-gate.

By applying the procedure of Appendix \ref{appendix:remainder} as in Eq.~\eqref{eq:send_noise_end_circuit_superoperator} we find
\begin{equation}
\begin{aligned}
&\mathcal{P}_2\Bigl(\hat{U}_{CX}(\pi +\theta +\epsilon)\mathcal{P}_1(\hat{\text{H}}\mathbb{1}\hat{\rho}_0\hat{\text{H}}\mathbb{1})\hat{U}^{\dagger}_{CX}(\pi +\theta +\epsilon)\Bigr)\\
&=\mathcal{P}\Bigl(\hat{U}_{CX}(\pi +\theta +\epsilon)\hat{\text{H}}\mathbb{1}\hat{\rho}_0\hat{\text{H}}\mathbb{1}\hat{U}^{\dagger}_{CX}(\pi +\theta +\epsilon)\Bigr) + \mathcal{R}\hat{\rho}_0=\\
&=\mathcal{P}\hat{\rho}_{GHZ}(\theta+\epsilon) + \mathcal{R}\hat{\rho}_0\, ,
\end{aligned}
\end{equation}
where $\hat{\rho}_0 =|00\rangle\langle 00|$, $\hat{\rho}_{GHZ}(\theta+\epsilon) = |\psi_{GHZ}(\theta +\epsilon)\rangle\langle\psi_{GHZ}(\theta +\epsilon)|$ with $|\psi_{GHZ}(\theta +\epsilon )\rangle$ defined in Eq.~\eqref{eq:ghz_state}, and
\begin{equation}
\begin{aligned}
\mathcal{P}\hat{\rho} = \mathcal{P}_2\mathcal{P}_1'\hat{\rho}&= p_0\hat{\rho}+ p_1\hat{Z}\hat{\mathbb{1}}\hat{\rho}\hat{Z}\hat{\mathbb{1}}+p_2 \hat{\mathbb{1}}\hat{Z}\hat{\rho}\hat{\mathbb{1}}\hat{Z}+ p_3 \hat{X}\hat{\mathbb{1}}\hat{\rho}\hat{X}\hat{\mathbb{1}}+ p_4 \hat{\mathbb{1}}\hat{X}\hat{\rho}\hat{\mathbb{1}}\hat{X}+p_5 \hat{Y}\hat{\mathbb{1}}\hat{\rho}\hat{Y}\hat{\mathbb{1}}+p_6 \hat{Z}\hat{Z}\hat{\rho}\hat{Z}\hat{Z}+p_7 \hat{X}\hat{X}\hat{\rho}\hat{X}\hat{X}\\
&\quad +p_8 \hat{Y}\hat{Y}\hat{\rho}\hat{Y}\hat{Y}+p_9\hat{Z}\hat{X}\hat{\rho}\hat{Z}\hat{X}+p_{10}\hat{X}\hat{Y}\hat{\rho}\hat{X}\hat{Y}+p_{11}\hat{Y}\hat{X}\hat{\rho}\hat{Y}\hat{X}\, ,
\end{aligned}
\end{equation}
where $\mathcal{P}_1'\hat{\rho}= p_0^{(1)} \hat{\rho}+p^{(1)}_1 \hat{Z}\hat{\mathbb{1}} \hat{\rho}\hat{Z}\mathbb{1}+p_2^{(1)} \hat{X}\hat{X} \hat{\rho}\hat{X}\hat{X}+p_3^{(1)} \hat{Y}\hat{X} \hat{\rho}\hat{Y}\hat{X}$
and $p_0 = p_0^{(1)}p_0^{(2)} + p_1^{(1)}p_1^{(2)}$, $p_1 = p_1^{(1)}p_0^{(2)} + p_0^{(1)}p_1^{(2)}$, $p_2 = p_0^{(1)}p_2^{(2)}$, $p_3 = p_0^{(1)}p_3^{(2)} + p_2^{(1)}p_4^{(2)}$, $p_4 = p_2^{(1)}p_3^{(2)}+p_0^{(1)}p_4^{(2)}$, $p_5 =p_1^{(1)}p_3^{(2)}+p_3^{(1)}p_4^{(2)}$, $p_6 = p_1^{(1)}p_2^{(2)}$, $p_7= p_2^{(1)}p_0^{(2)}+p_3^{(1)}p_1^{(2)}$, $p_8 = p_3^{(1)}p_2^{(2)}$, $p_9 = p_3^{(1)}p_3^{(2)}+p_1^{(1)}p_4^{(2)}$, $p_{10} = p_2^{(1)}p_2^{(2)}$, $p_{11} = p_3^{(1)}p_0^{(2)}+p_2^{(1)}p_1^{(2)}$. 

Moreover, by defining $\hat{\rho}_{GHZ} = \hat{U}_{CX}(\pi)\hat{\text{H}}\mathbb{1}\hat{\rho}_0\hat{\text{H}}\mathbb{1}\hat{U}_{CX}^{\dagger}(\pi)$  we have
\begin{equation}
\begin{aligned}
&\mathcal{R}\hat{\rho}_0 = \mathcal{P}_2\Bigl(\hat{U}_{CX}(\theta +\epsilon)\mathcal{P}'_1(\hat{\rho}_{GHZ})\hat{U}_{CX}^{\dagger}(\theta +\epsilon)\Bigr)-\mathcal{P}\Bigl(\hat{U}_{CX}(\theta +\epsilon)\hat{\rho}_{GHZ}\hat{U}_{CX}^{\dagger}(\theta +\epsilon)\Bigr)\, .
\end{aligned}
\end{equation}
Now given that the group $\mathcal{\bf{S}}_2$, generated by $\hat{S}_1$, $\hat{S}_2$, is the subgroup of the 2-qubits Pauli group $
\mathcal{\bf{P}}_2$, we define $\hat{F}_i(\hat{S}_i)\equiv \hat{U}_{CX}^{\dagger}(\theta +\epsilon)\hat{S}_i\hat{U}_{CX}(\theta +\epsilon) = \nu_{S_i} \hat{S}_i+\sum_{w: \hat{P}_{i,w}\notin\mathcal{\bf{S}}_2} \nu_{i,w} \hat{P}_{i,w}$ then
\begin{equation}
\begin{aligned}
\Tr\bigl(\hat{S}_i\mathcal{R}\hat{\rho}_0\bigr)&= \Tr\Bigl(\hat{U}_{CX}^{\dagger}(\theta +\epsilon)\mathcal{P}_2(\hat{S}_i)\hat{U}_{CX}(\theta +\epsilon)\mathcal{P}'_1\hat{\rho}_{GHZ}\Bigr)-\Tr\Bigl(\hat{U}_{CX}^{\dagger}(\theta +\epsilon)\mathcal{P}(\hat{S}_i)\hat{U}_{CX}(\theta +\epsilon)\hat{\rho}_{GHZ}\Bigr)\\
&=(1-2\Gamma_i^{(\mathcal{P}_2)})\Tr\bigl(\hat{F}_i(\hat{S}_i)\mathcal{P}'_1\hat{\rho}_{GHZ}\bigr)-(1-2\Gamma_i^{(\mathcal{P})})\Tr\bigl(\hat{F}_i(\hat{S}_i)\hat{\rho}_{GHZ}\bigr)\\
&=\sum_{w: \hat{P}_{i,w}\notin\mathcal{\bf{S}}_2} \nu_{i,w}\Bigl[(1-2\Gamma_i^{(\mathcal{P}_2)})(1-2\Gamma_{i,w}^{(\mathcal{P}'_1)})-(1-2\Gamma_i^{(\mathcal{P})})\Bigr]\Tr\bigl(\hat{P}_{i,w}\hat{\rho}_{GHZ}\bigr) = 0\,\ ,
\end{aligned}
\end{equation}
where $\Gamma_i^{(\mathcal{P}_2)}$, $\Gamma_{i,w}^{(\mathcal{P}'_1)}$ and , $\Gamma_i^{(\mathcal{P})}$ are calculated as those in Eq.~\eqref{eq:senza_nome}.
Since for each $i$, $\hat{P}_{i,w}$ is not a product of the stabilizer operators $\hat{S}_i$, then it is orthogonal to $\hat{\rho}_{GHZ}$ that is a linear combination that contains only products of stabilizer operators. This fact is similar to what happens in general in Theorem \ref{theo:linear_upper_bound}, however in this simple example $\Delta \tilde{C}(\vec{\theta}) = -\Tr\bigl(\hat{S}_i\mathcal{R}\hat{\rho}_0\bigr)$ is exactly zero and not only up to a certain order. 

Finally, the modified cost function reads
\begin{equation}
\label{eq:nomi_finiti}
    \tilde{C}(\theta) = (1-\Gamma_1^{(\mathcal{P})}-\Gamma_2^{(\mathcal{P})})C(\theta) \,\ ,
\end{equation}
where by using Eq.~\eqref{eq:UCX_2} we have $C(\theta)=-2\sin^2\Bigl(\frac{\pi+\theta +\epsilon}{2}\Bigr)$, $\Gamma_1^{(\mathcal{P})} = p_1 + p_2 + p_5 + p_9 + p_{10} + p_{11}$ and  $\Gamma_2^{(\mathcal{P})} = p_3 + p_4 + p_5 + p_9$.

The behaviour of Eq.~\eqref{eq:nomi_finiti} is shown in Fig. \ref{fig:cost_functions_GHZ_local_pauli_maps}.

\begin{figure*}[htp]
\includegraphics[width=0.4\textwidth]{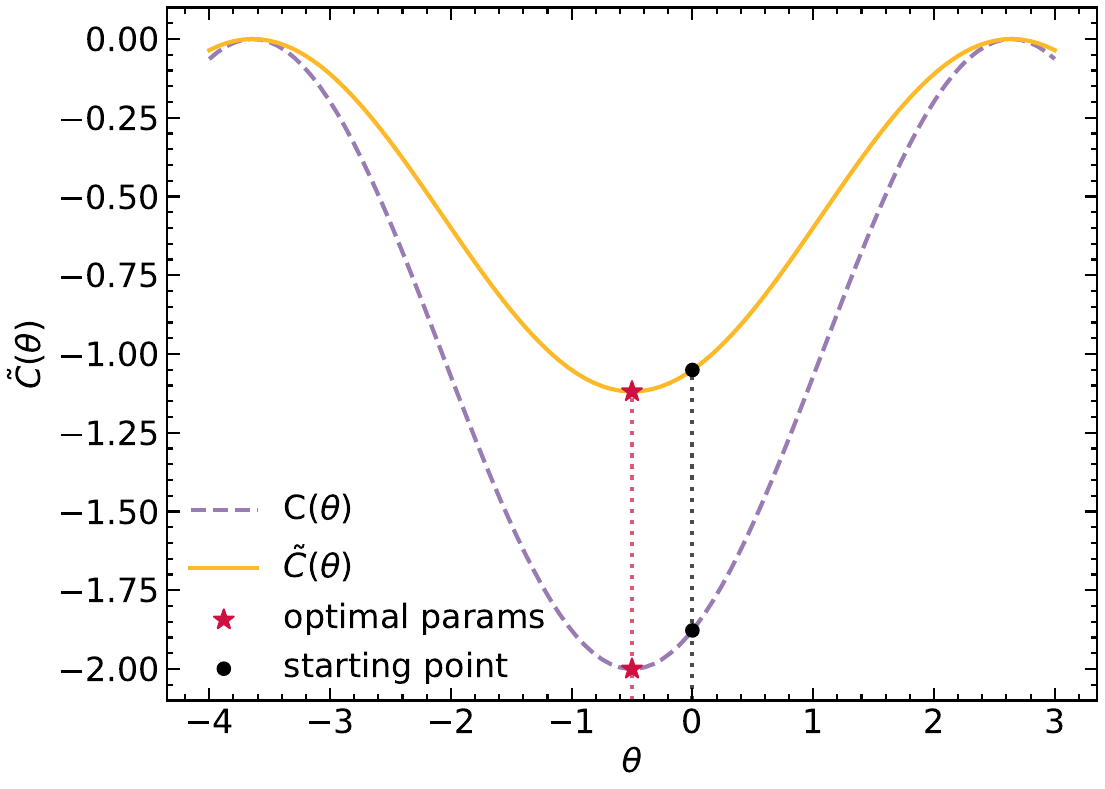}
\caption{GHZ state cost function landscape for local Pauli maps after each circuit moment. We show in dashed purple the cost function in absence of incoherent noises. We choose the value of coherent error $\epsilon = 0.5$ for better visualization. The red star marks the optimal parameter $\theta=-\epsilon$ corresponding to the minimum and the black dot marks the initial parameter $\theta=0$ of the optimization in order to be sure to start inside a convex region $\mathcal{B}$ around the minimum. The gold curve represents the modified cost function in presence of local Pauli errors after each circuit moment. We highlight the translation of the minimum and of the starting point with respectively vertical red and black dotted lines.}
\label{fig:cost_functions_GHZ_local_pauli_maps}
\end{figure*}

\section{Proof of Eq.~\eqref{eq:send_noise_end_circuit_superoperator}}
\label{appendix:remainder}

We can start showing that, for each moment of a given imperfect stabilizer circuit, we can separate the Clifford and non-Clifford contributions. In order to do that, we assume that all the native gates $\mathcal{G}_w$ in the circuit are such that 
\begin{equation}
    \mathcal{G}_w = e^{-i(\phi_w+\theta_w+\epsilon_w)\mathcal{H}_w} =    e^{-i(\theta_w+\epsilon_w)\mathcal{H}_w}e^{-i\phi_w\mathcal{H}_w} \equiv \mathcal{G}_w^{\text{\tiny (N)}}\mathcal{G}_w^{\text{\tiny (C)}}
\end{equation}
where each $\theta_w+\epsilon_w$ gives rise to the non-Clifford transformation $\mathcal{G}_w^{\text{\tiny (N)}}$ associated to coherent noise and each $\phi_w$ is chosen such that $\mathcal{G}_w^{\text{\tiny (C)}}$ is Clifford.

Moreover, given a moment $\mathcal{U}_q=\bigotimes_{l=1}^{L_q}\mathcal{G}_{ql}$ since $\mathcal{G}_{ql}$ act on different qubits, they commute with each other inside the same moment. This allows us to write

\begin{equation}
    \mathcal{U}_q=\bigotimes_{l=1}^{L_q}\mathcal{G}_{ql}^{\text{\tiny (N)}}\mathcal{G}_{ql}^{\text{\tiny (C)}}= \left(\bigotimes_{l=1}^{L_q}\mathcal{G}_{ql}^{\text{\tiny (N)}}\right)\left(\bigotimes_{l=1}^{L_q}\mathcal{G}_{ql}^{\text{\tiny (C)}}\right)\equiv\mathcal{U}_q^{\text{\tiny (N)}}\mathcal{U}_q^{\text{\tiny (C)}},
\end{equation}
where $\mathcal{U}_q^{\text{\tiny (N)}}$ and $\mathcal{U}_q^{\text{\tiny (C)}}$ are respectively the non-Clifford and Clifford parts. Such operations, are still of the form of exponentials, except now they are generated by the more complex superoperator 
\begin{equation}
    \label{eq:alpha_def}
    \bar{\alpha}_q = \sum_{l=1}^{L_q}(\theta_{ql} +\epsilon_{ql})\mathcal{H}_{ql}, \;\; \text{i.e.}\;\; \mathcal{U}_q^{\text{\tiny (N)}}=e^{-i\bar{\alpha}_q}.
\end{equation}
A similar structure also holds for $\mathcal{U}_q^{\text{\tiny (C)}}$, but we do not need it.

\begin{Lemma}(Circuit decomposition and shape of the remainder).
\label{lemma:circ_decomposition}
    The noisy quantum circuit $\mathcal{N}$ in Eq.~\eqref{eq:send_noise_end_circuit_superoperator} can be decomposed as
    \begin{equation}
    \label{eq:circuit_factorized}
         \mathcal{N}= \mathcal{P}\mathcal{U}_S +\mathcal{R}\, ,
    \end{equation}
    where $\mathcal{P}\equiv\left(\prod_{q=1}^M \mathcal{P}'_q\right)$ is an effective Pauli map and the remainder $\mathcal{R}$ is expressed as
    \begin{equation}
    \label{eq:form_of_the_remainder}
        \mathcal{R} = \sum_{p=1}^{M-1}\sum_{q>p}^{M}\mathcal{T}_{qp},\;\; \mathcal{T}_{qp} = \left(\prod_{s=p+1}^{M}\mathcal{P}'_{s} \right)\left(\prod_{r=q+1}^{M}\mathcal{U}_{r}^{\text{\tiny (N)}}\mathcal{U}_{r}^{\text{\tiny (C)}} \right)\Delta_{qp} \mathcal{U}_q^{\text{\tiny (C)}} \left(\prod_{r'=p}^{q-1}\mathcal{U}_{r'}^{\text{\tiny (N)}}\mathcal{U}_{r'}^{\text{\tiny (C)}} \right)\left(\prod_{t=1}^{p-1} \mathcal{P}_t\mathcal{U}_t^{\text{\tiny (N)}}\mathcal{U}_t^{\text{\tiny (C)}}\right)\, ,
    \end{equation}
    where $\Delta_{qp} \equiv [\mathcal{U}_q^{\text{\tiny (N)}}, \mathcal{P}'_{pq}]$ is the commutator between coherent and incoherent noise maps, and $\mathcal{P}'_{qp}$ are diagonal Pauli maps obtained by conjugation of $\mathcal{P}_p$ with the  Clifford moments between $p+1$ and $q$ (see Lemma \ref{lemma:commute_pauli_clifford}), namely
    \begin{equation}
    \label{eq:P_qp}
        \mathcal{P}'_{qp} = \left(\prod_{v=p+1}^q \mathcal{U}_v^{\text{\tiny (C)}}\right) \mathcal{P}_p \left(\prod_{v=p+1}^q \mathcal{U}_v^{\text{\tiny (C)}}\right)^{\dagger}.
    \end{equation}
    Finally, for the left most product, we use the shorthand notation $\mathcal{P}'_{s} = \mathcal{P}'_{Ms}$.
\end{Lemma}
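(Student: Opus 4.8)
The plan is to push every incoherent Pauli map $\mathcal{P}_p$ to the far left of the circuit $\mathcal{N}=\prod_{q=1}^M\mathcal{P}_q\mathcal{U}_q$ of Eq.~\eqref{eq:circuit_rho_pauli} (i.e.\ past $\mathcal{U}_S$), keeping track of the price paid whenever a Pauli map fails to commute with a coherent, non-Clifford factor. Writing each moment as $\mathcal{U}_q=\mathcal{U}_q^{\text{\tiny (N)}}\mathcal{U}_q^{\text{\tiny (C)}}$ with $\mathcal{U}_q^{\text{\tiny (N)}}=e^{-i\bar\alpha_q}$ as in Eq.~\eqref{eq:alpha_def}, there are exactly two elementary moves. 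First, a Pauli map slides through a Clifford factor at no cost: by Lemma~\ref{lemma:commute_pauli_clifford}, $\mathcal{U}_v^{\text{\tiny (C)}}\mathcal{P}=\mathcal{P}'\mathcal{U}_v^{\text{\tiny (C)}}$ with $\mathcal{P}'$ again a Pauli map, and iterating this over consecutive Clifford factors defines precisely the conjugated map $\mathcal{P}'_{qp}$ of Eq.~\eqref{eq:P_qp}. Second, a Pauli map slides through a coherent factor only up to a commutator, via the exact identity $\mathcal{U}_q^{\text{\tiny (N)}}\mathcal{P}'_{qp}=\mathcal{P}'_{qp}\mathcal{U}_q^{\text{\tiny (N)}}+\Delta_{qp}$ with $\Delta_{qp}=[\mathcal{U}_q^{\text{\tiny (N)}},\mathcal{P}'_{qp}]$. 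These two identities are the whole engine of the proof.

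I would then organize the computation as a sweep (equivalently, an induction on $M$), processing the Pauli maps in the order $p=M,M-1,\dots,1$. At the start of the $p$-th stage all higher-index maps have already been collected at the front as $\prod_{s=p+1}^M\mathcal{P}'_s$, while the maps $\mathcal{P}_1,\dots,\mathcal{P}_{p-1}$ and the moment $\mathcal{U}_p$ sit untouched to the right. Sweeping $\mathcal{P}_p$ leftward through moments $p+1,\dots,M$, the Clifford crossings turn it into $\mathcal{P}'_{qp}$ after passing moment $q$, and at each coherent factor $\mathcal{U}_q^{\text{\tiny (N)}}$ the identity above splits the term into a \emph{leading} piece $\mathcal{P}'_{qp}\mathcal{U}_q^{\text{\tiny (N)}}$, which continues the sweep, and a \emph{remainder} piece carrying $\Delta_{qp}$ in place of $\mathcal{U}_q^{\text{\tiny (N)}}$, which I set aside and do not process further. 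Following only the leading pieces all the way to the front deposits $\mathcal{P}'_{Mp}\equiv\mathcal{P}'_p$ just left of $\mathcal{U}_S$, so after all $M$ stages the leading term is precisely $\bigl(\prod_{p=1}^M\mathcal{P}'_p\bigr)\mathcal{U}_S=\mathcal{P}\mathcal{U}_S$, giving Eq.~\eqref{eq:circuit_factorized}. One notes in passing that a composition of Pauli maps is again a Pauli map, so $\mathcal{P}$ is genuinely of the claimed form.

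It remains to identify each set-aside piece with the term $\mathcal{T}_{qp}$ of Eq.~\eqref{eq:form_of_the_remainder}. When the remainder branch is taken at moment $q$ during the $p$-th sweep, the operators fall into four blocks whose ordering I would read off directly: the front already holds $\prod_{s=p+1}^M\mathcal{P}'_s$; the moments strictly above $q$ have not yet been disturbed in this branch and remain full moments $\prod_{r=q+1}^M\mathcal{U}_r^{\text{\tiny (N)}}\mathcal{U}_r^{\text{\tiny (C)}}$; moment $q$ now carries $\Delta_{qp}\mathcal{U}_q^{\text{\tiny (C)}}$; and everything at or below moment $p$ (including the still-present maps $\mathcal{P}_1,\dots,\mathcal{P}_{p-1}$) is untouched, giving $\prod_{r'=p}^{q-1}\mathcal{U}_{r'}^{\text{\tiny (N)}}\mathcal{U}_{r'}^{\text{\tiny (C)}}\prod_{t=1}^{p-1}\mathcal{P}_t\mathcal{U}_t^{\text{\tiny (N)}}\mathcal{U}_t^{\text{\tiny (C)}}$. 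Summing over all branch points $(q,p)$ with $p<q$ then yields $\mathcal{R}=\sum_{p=1}^{M-1}\sum_{q>p}^M\mathcal{T}_{qp}$.

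The main obstacle is organizational rather than conceptual: one must be scrupulous about the index bookkeeping in the conjugation $\mathcal{P}'_{qp}$ and, above all, about why the remainder is only a single sum over pairs $(q,p)$ rather than an exponentially large expansion over every commute-or-commutator choice. The point to verify carefully is that once a commutator branch is taken the corresponding $\mathcal{P}_p$ is consumed inside $\Delta_{qp}$ and that branch is frozen, so each deviation is counted exactly once and is labelled unambiguously by the pair (the map $p$ being swept, the moment $q$ at which it first deviates). Confirming that this freezing reproduces precisely the block structure of Eq.~\eqref{eq:form_of_the_remainder} — with no stray conjugations acting on the untouched lower block and with $\mathcal{U}_q^{\text{\tiny (C)}}$ correctly separated from $\Delta_{qp}$ — is the only delicate part of the argument.
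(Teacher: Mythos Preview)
Your proposal is correct and follows essentially the same constructive approach as the paper: commute each Pauli map $\mathcal{P}_p$ to the left one moment at a time, using Lemma~\ref{lemma:commute_pauli_clifford} to pass Clifford factors at no cost and the identity $\mathcal{U}_q^{\text{\tiny (N)}}\mathcal{P}'_{qp}=\mathcal{P}'_{qp}\mathcal{U}_q^{\text{\tiny (N)}}+\Delta_{qp}$ to pass non-Clifford factors while depositing the remainder terms $\mathcal{T}_{qp}$. Your write-up is in fact more explicit than the paper's, which works out only the first step $\mathcal{T}_{M,M-1}$ and then simply asserts that iterating on the leading term yields the remaining $\mathcal{T}_{qp}$; your sweep description and the identification of the four blocks make the bookkeeping clearer (one small verbal slip: the block $\prod_{r'=p}^{q-1}\mathcal{U}_{r'}$ covers moments $p$ through $q-1$, not just those ``at or below $p$'', but your formula is correct).
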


\begin{proof}
The Lemma is proven giving a constructive approach. In particular, we will commute each $\mathcal{P}_q$ to the left of the expression on one step at the time, keeping track of the remainders. Each term of the double sum in Eq.~\eqref{eq:form_of_the_remainder} represents exactly this. We now give the blueprint on how to explicitly build this construction, using as an example the first few $p$ and $q$. Let's start from $\mathcal{N}$ where we highlight the last moments, namely
\begin{equation}
    \mathcal{N} = \mathcal{P}_M\mathcal{U}_M^{\text{\tiny (N)}}\mathcal{U}_M^{\text{\tiny (C)}}\mathcal{P}_{M-1}\mathcal{U}_{M-1}^{\text{\tiny (N)}}\mathcal{U}_{M-1}^{\text{\tiny (C)}}\prod_{q=1}^{M-2} \mathcal{P}_q\mathcal{U}_q^{\text{\tiny (N)}}\mathcal{U}_q^{\text{\tiny (C)}}\, .
\end{equation}
According to Lemma \ref{lemma:commute_pauli_clifford}, we can swap $\mathcal{U}_M^{\text{\tiny (C)}}$ and $\mathcal{P}_{M-1}$ at the price of having a new Pauli map $\mathcal{P}'_{M-1}$, obtained by conjugation, i.e. $\mathcal{U}_M^{\text{\tiny (C)}}\mathcal{P}_{M-1}= \mathcal{P}'_{M-1}\mathcal{U}_M^{\text{\tiny (C)}}$. However we cannot do this for the non-Clifford moment $\mathcal{U}_{M}^{\text{\tiny (N)}}$. In this case we swap them and keep track of the error $\Delta_{M,M-1}$, i.e. we use the relation
\begin{equation}
    \mathcal{U}_{M}^{\text{\tiny (N)}}\mathcal{P}'_{M-1} = \mathcal{P}'_{M-1}\mathcal{U}_{M}^{\text{\tiny (N)}} + [\mathcal{U}_{M}^{\text{\tiny (N)}}, \mathcal{P}'_{M-1}] = \mathcal{P}_{M-1}\mathcal{U}_{M}^{\text{\tiny (N)}} + \Delta_{M,M-1}.
\end{equation}
In this way, we get the first decomposition
\begin{equation}
\begin{aligned}
    \mathcal{N}&= \mathcal{P}_M \mathcal{P}'_{M-1}\mathcal{U}_M^{\text{\tiny (N)}}\mathcal{U}_M^{\text{\tiny (C)}}\mathcal{U}_{M-1}^{\text{\tiny (N)}}\mathcal{U}_{M-1}^{\text{\tiny (C)}}\prod_{q=1}^{M-2} \mathcal{P}_q\mathcal{U}_q^{\text{\tiny (N)}}\mathcal{U}_q^{\text{\tiny (C)}}\\
    &+ \mathcal{P}_M \Delta_{M,M-1} \mathcal{U}_M^{\text{\tiny (C)}}\mathcal{U}_{M-1}^{\text{\tiny (N)}}\mathcal{U}_{M-1}^{\text{\tiny (C)}}\prod_{q=1}^{M-2} \mathcal{P}_q\mathcal{U}_q^{\text{\tiny (N)}}\mathcal{U}_q^{\text{\tiny (C)}} \, .
\end{aligned}
\end{equation}
Note that the first term is one step closer to look like the first term in Eq.~\eqref{eq:circuit_factorized}, while the second term corresponds exactly to $\mathcal{T}_{M\,M-1}$.
By iterating this process on the first term, we can obtain all subsequent $\mathcal{T}_{qp}$ terms.
\end{proof}

Beyond the technicalities involved in keeping track of all the conjugations appearing in $\mathcal{P}'_{qp}$, the real power of Lemma \ref{lemma:circ_decomposition} is to provide the general structure of the remainder $\mathcal{R}$ in terms a sequence of Clifford, non-Clifford and Pauli maps, which will be key for the subsequent proofs.

\section{Stationarity of the solution in the presence of Pauli noise after each circuit moment}
\label{appendix:stationarity}

In this section we show that, despite the presence of Pauli noise in the system, the optimal set of parameters that we aim to find, namely $\vec{\theta} = - \vec{\epsilon}$ remains a stationary point. In order to do so, we prove Theorem \ref{theo:stationarity} of the main text.

\begin{proof}
For simplicity now we define $\gamma_k = \theta_k+\epsilon_k$ in the native gates inside the non-Clifford moments $\mathcal{U}_q^{\text{\tiny (N)}}$. In this manner, it is sufficient to compute the derivative $\partial_{\gamma_k}\mathcal{U}_q^{\text{\tiny (N)}}$ and evaluate it in $\vec{\gamma} = \vec{\theta}+\vec{\epsilon}=0$ to achieve the desired result of Eq.~\eqref{eq:zero_partial_derivative}. 

By employing the expression of $\mathcal{R}$ in Eq.~\eqref{eq:form_of_the_remainder} of Lemma \ref{lemma:circ_decomposition} we have
\begin{equation}
\label{eq:zero_partial_derivative_2}
    \partial_{\gamma_k} \Delta\tilde{C}(\vec{\gamma}-\vec{\epsilon}) |_{\vec{\gamma}=0} = -\sum_{i=1}^{n}  \Tr(\hat{S}\partial_{\gamma_k}\mathcal{R}|_{\vec{\gamma}=0} \hat{\rho}_0)=-\sum_{i=1}^{n}\sum_{p=1}^{M-1}\sum_{q>p}^{M}\Tr(\hat{S}\partial_{\gamma_k}\mathcal{T}_{qp}|_{\vec{\gamma}=0}\hat{\rho}_0)\, .
\end{equation}
At this point we focus on two distinct cases.

\begin{Obs}[Selection of the remainder terms] Given a parameter $\gamma_k$, for each moment $\mathcal{U}^{\text{\tiny (N)}}_q$ that does not depend on $\gamma_k$, then 
\begin{equation}
    \partial_{\gamma_k} \mathcal{T}_{qp}|_{\vec{\gamma}=0} \equiv 0\, .
\end{equation}
\end{Obs}

This follows from the definition of $\mathcal{T}_{qp}$, and in particular on the structure of $\Delta_{qp} = [\mathcal{U}_q^{\text{\tiny (N)}}, \mathcal{P}'_{qp}]$ arising from Lemma \ref{lemma:circ_decomposition}. Indeed, we assume that $\gamma_k$ is not contained in a moment $\mathcal{U}^{\text{\tiny (N)}}_q$, and without loss of generality and to ease the notation, we assume that instead, it is contained in the moment $\mathcal{U}^{\text{\tiny (N)}}_1$. Then we have

\begin{equation}
    \begin{aligned}
        \partial_{\gamma_k} \mathcal{T}_{qp}|_{\vec{\gamma}=0} &= \left(\prod_{s=p+1}^{M}\mathcal{P}'_{s} \right)\left(\prod_{r=q+1}^{M}\mathcal{U}_{r}^{\text{\tiny (N)}}\mathcal{U}_{r}^{\text{\tiny (C)}} \right)\Delta_{qp} \mathcal{U}_q^{\text{\tiny (C)}} \left(\prod_{r'=p}^{q-1}\mathcal{U}_{r'}^{\text{\tiny (N)}}\mathcal{U}_{r'}^{\text{\tiny (C)}} \right)\left(\prod_{t=2}^{p-1} \mathcal{P}_t\mathcal{U}_t^{\text{\tiny (N)}}\mathcal{U}_t^{\text{\tiny (C)}}\right) \mathcal{P}_1\left(\partial_{\gamma_k}\mathcal{U}_1^{\text{\tiny (N)}}\right)\mathcal{U}_1^{\text{\tiny (C)}}\Bigg|_{\vec{\gamma}=0}\\
        &= -i\left(\prod_{s=p+1}^{M}\mathcal{P}'_{s} \right)\left(\prod_{r=q+1}^{M}\mathcal{U}_{r}^{\text{\tiny (N)}}\mathcal{U}_{r}^{\text{\tiny (C)}} \right)\Delta_{qp} \mathcal{U}_q^{\text{\tiny (C)}} \left(\prod_{r'=p}^{q-1}\mathcal{U}_{r'}^{\text{\tiny (N)}}\mathcal{U}_{r'}^{\text{\tiny (C)}} \right)\left(\prod_{t=2}^{p-1} \mathcal{P}_t\mathcal{U}_t^{\text{\tiny (N)}}\mathcal{U}_t^{\text{\tiny (C)}}\right) \mathcal{P}_1\left(\mathcal{H}_k\mathcal{U}_1^{\text{\tiny (N)}}\right)\mathcal{U}_1^{\text{\tiny (C)}}\Bigg|_{\vec{\gamma}=0}\,,
    \end{aligned}
\end{equation}
where we used the definition $\mathcal{U}_q^{\text{\tiny (N)}} = e^{-i\bar{\alpha}_q}$, and $\bar{\alpha}_q = \sum_k \gamma_k\mathcal{H}_k$. Then, noting that, according to the same definition, $\bar{\alpha}_q = 0$ when evaluated at $\vec{\gamma}=0$, we get the simple result that 
$\mathcal{U}_q^{\text{\tiny (N)}} = e^{-i\bar{\alpha}_q} = \mathbb{1}\;\forall q$.

In particular this implies that, at $\vec{\gamma}=0$, $\Delta_{qp} \equiv 0$, since in that case $\Delta_{qp} = [\mathcal{U}_q^{\text{\tiny (N)}}, \mathcal{P}'_{qp}] = [\mathbb{1}, \mathcal{P}'_{qp}] \equiv0$, killing the whole product. Clearly from this reasoning, all other terms where the moment does not depend on $\gamma_k$ will similarly vanish.

The same however cannot be said when $\mathcal{U}^{\text{\tiny (N)}}_q$ depends on $\gamma_k$. Indeed, in that case the $\gamma_k$ dependence is included in $\Delta_{qp}$, and hence we need to compute $\partial_{\gamma_k}\Delta_{qp}$. In particular, we get

\begin{equation}\label{eq:derivative_of_delta}
\partial_{\gamma_k}\Delta_{qp}|_{\vec{\gamma}=0} = \partial_{\gamma_k} [\mathcal{U}_q^{\text{\tiny (N)}}, \mathcal{P}'_{qp}]\Big|_{\vec{\gamma}=0} = [\partial_{\gamma_k}\mathcal{U}_q^{\text{\tiny (N)}}, \mathcal{P}'_{qp}]\Big|_{\vec{\gamma}=0} = -i[\mathcal{H}_k \mathcal{U}_q^{\text{\tiny (N)}}, \mathcal{P}'_{qp}]\Big|_{\vec{\gamma}=0} = -i[\mathcal{H}_k, \mathcal{P}'_{qp}]\, ,
\end{equation}
which is in general non-vanishing. However, we will now prove that, when evaluated in the trace, such contributions do indeed vanish. As a first step, we make the following simple observation.

\begin{Obs}[Structure of the derivative]
    For any non-vanishing $\partial_{\gamma_k}\mathcal{T}_{qp}$, we can write its gradient contribution as

    \begin{equation}
        \label{eq:derivative_Tqp}
\Tr(\hat{S}_i\partial_{\gamma_k}\mathcal{T}_{qp}|_{\vec{\gamma}=0}\hat{\rho}_0) = -i\chi'_i\Tr{\hat{S}_i \left(\prod_{r=q+1}^M\mathcal{U}^{\text{\tiny (C)}}_r\right) [\mathcal{H}_k, \mathcal{P}'_{qp}] \mathcal{P}''_{qp}\left(\prod_{r'=1}^{q}\mathcal{U}^{\text{\tiny (C)}}_{r'}\right)\hat{\rho}_0}\, ,
    \end{equation}
    where $\mathcal{P}''_{p}$ is a Pauli map including the contribution of the first $p$ Pauli noise maps.
\end{Obs}

To show this, we recall the definition of $\mathcal{T}_{qp}$ in Eq.~\eqref{eq:form_of_the_remainder} and the previous result that when $\vec{\gamma}=0$, all non-Clifford moments become the identity. Using this we get

\begin{equation}
\begin{aligned}
    \partial_{\gamma_k} \mathcal{T}_{qp}|_{\vec{\gamma}=0} &= \left(\prod_{s=p+1}^{M}\mathcal{P}'_{s} \right)\left(\prod_{r=q+1}^{M}\mathcal{U}^{\text{\tiny (N)}}_{r}\mathcal{U}^{\text{\tiny (C)}}_{r} \right)\partial_{\gamma_k}\Delta_{qp} \mathcal{U}^{\text{\tiny (C)}}_q \left(\prod_{r'=p}^{q-1}\mathcal{U}^{\text{\tiny (N)}}_{r'}\mathcal{U}^{\text{\tiny (C)}}_{r'} \right)\left(\prod_{t=1}^{p-1} \mathcal{P}_t\mathcal{U}^{\text{\tiny (N)}}_t\mathcal{U}^{\text{\tiny (C)}}_t\right)\Bigg|_{\vec{\gamma}=0}\\
    &= \left(\prod_{s=p+1}^{M}\mathcal{P}'_{s} \right)\left(\prod_{r=q+1}^{M}\mathcal{U}^{\text{\tiny (C)}}_{r} \right)\partial_{\gamma_k}\Delta_{qp}\big|_{\vec{\gamma}=0} \mathcal{U}^{\text{\tiny (C)}}_q \left(\prod_{r'=p}^{q-1}\mathcal{U}^{\text{\tiny (C)}}_{r'} \right)\left(\prod_{t=1}^{p-1} \mathcal{P}_t\mathcal{U}^{\text{\tiny (C)}}_t\right).
\end{aligned}
\end{equation}
Inserting the explicit form of $\Delta_{qp}$, using Lemma \ref{lemma:action_pauli_noise_string} to the leftmost Pauli maps and using Eq.~\eqref{eq:derivative_of_delta}, we can obtain 
\begin{equation}
\Tr(\hat{S}_i\partial_{\gamma_k}\mathcal{T}_{qp}|_{\vec{\gamma}=0}\hat{\rho}_0) = -i\chi'_i\Tr{\hat{S}_i \left(\prod_{r=q+1}^M\mathcal{U}^{\text{\tiny (C)}}_r\right) [\mathcal{H}_k, \mathcal{P}'_{qp}] \,\mathcal{U}^{\text{\tiny (C)}}_q\left(\prod_{r'=p}^{q-1}\mathcal{U}^{\text{\tiny (C)}}_{r'} \right)\left(\prod_{t=1}^{p-1} \mathcal{P}_t\mathcal{U}^{\text{\tiny (C)}}_t\right)\hat{\rho}_0}.
\end{equation}

Finally, the last step is achieved by noting that, since we only have Clifford moments, we can swap all Pauli maps next to the commutator, at the price of conjugation with Clifford moments as in Lemma \ref{lemma:commute_pauli_clifford}.

While it would be possible to keep track exactly of the form of those Pauli maps, it suffices for our purposes to note that they will remain Pauli maps. With this in mind, we are ready to conclude the proof by computing for simplicity the modulus of the derivative (without using the modulus the derivation is more involved). In particular, we have

\begin{equation}
\begin{aligned}
    &\left|\Tr(\hat{S}_i\partial_{\gamma_k}\mathcal{T}_{qp}|_{\vec{\gamma}=0}\hat{\rho}_0)\right| =\left|\chi'_i\Tr{\hat{S}_i \left(\prod_{r=q+1}^M\mathcal{U}^{\text{\tiny (C)}}_r\right) [\mathcal{H}_k, \mathcal{P}'_{qp}] \mathcal{P}''_{qp}\left(\prod_{r'=1}^{q}\mathcal{U}^{\text{\tiny (C)}}_{r'}\right)\hat{\rho}_0}\right|\\
    &\leq \left|\Tr{\hat{S}_i \left(\prod_{r=q+1}^M\mathcal{U}^{\text{\tiny (C)}}_r\right) \mathcal{H}_k \mathcal{P}'_{qp} \mathcal{P}''_{qp}\left(\prod_{r'=1}^{q}\mathcal{U}^{\text{\tiny (C)}}_{r'}\right)\hat{\rho}_0}\right| + \left|\Tr{S_i \left(\prod_{r=q+1}^M\mathcal{U}^{\text{\tiny (C)}}_r\right) \mathcal{P}'_{qp}\mathcal{H}_k \mathcal{P}''_{qp}\left(\prod_{r'=1}^{q}\mathcal{U}^{\text{\tiny (C)}}_{r'}\right)\hat{\rho}_0}\right|,
\end{aligned}
\end{equation}
where we used $|\chi_i|\leq1 \,\forall i$ coming from Lemma \ref{lemma:action_pauli_noise_string} and the triangle inequality to expand the commutator outside the modulus. If we now introduce the notations 
\begin{equation}
    \hat{S}^{(q)}_i \equiv \bigg(\prod_{r=q+1}^M\mathcal{U}^{\text{\tiny (C)}}_r\bigg)^{\dagger}\hat{S}_i \;\; \text{and} \;\; \hat{\rho}_{q} \equiv \bigg(\prod_{r'=1}^{q}\mathcal{U}^{\text{\tiny (C)}}_{r'} \bigg)\hat{\rho}_0,
\end{equation}
using the ciclicity of the trace we have the final bound
\begin{equation}
\begin{aligned}
    \left|\Tr(\hat{S}_i\partial_{\gamma_k}\mathcal{T}_{qp}|_{\vec{\gamma}=0}\hat{\rho}_0)\Big|_{\vec{\gamma}=0}\right| &\leq \left|\Tr(\hat{S}_i^{(q)} \mathcal{H}_k \mathcal{P}'_{qp} \mathcal{P}''_{qp}\hat{\rho}_{q})\right| + \left|\Tr(\hat{S}_i^{(q)} \mathcal{P}'_{qp}\mathcal{H}_k \mathcal{P}''_{qp}\hat{\rho}_{q})\right|\\
    &\leq \left|\Tr(\hat{S}_i^{(q)} \mathcal{H}_k \mathcal{P}'_{qp} \mathcal{P}''_{qp}\hat{\rho}_{q})\right| + \left|\Tr(\hat{S}_i^{(q)} \mathcal{H}_k \mathcal{P}''_{qp}\hat{\rho}_{q})\right|\,,
\end{aligned}
\end{equation}
where again we used Lemma \ref{lemma:action_pauli_noise_string} applied to $\mathcal{P}'_{qp}\hat{S}_i^{(q)}$ in the second term of the first line. By Lemma \ref{lemma:first_derivative_upperbound}, we can now write
\begin{equation}
\left|\Tr(\hat{S}_i\partial_{\gamma_k}\mathcal{T}_{qp}|_{\vec{\gamma}=0}\hat{\rho}_0)\Big|_{\vec{\gamma}=0}\right| \leq 2\|\hat{H}_k\|_2\|[\hat{S}_i^{(q)}, \hat{\rho}_{q}]\|_2 =0,
\end{equation}
where the equality comes from the fact that $\hat{S}_i^{(q)}$ is a stabilizer for $\hat{\rho}_{q}$ by construction, and hence they commute.
\end{proof}

\section{Local convexity of the cost function under weak Pauli noise}
\label{appendix:convexity_C}

Another interesting property to estimate in this setting, are second derivatives of $\Delta\tilde{C}(\vec{\theta})$ in Eq.~\eqref{eq:delta_C_def}, which are related to the convexity of the cost function $\tilde{C}(\vec{\theta})$. In particular, in this section we give some estimates of the Hessian matrix $\tilde{\mathbf{H}}^{\scaleto{(\mathcal{R})}{6pt}}$ computed at the stationary point $\vec{\theta} =-\vec{\epsilon}$, namely

\begin{equation}
\label{eq:hessian_definition}
    \tilde{\mathbf{H}}^{\scaleto{(\mathcal{R})}{6pt}}_{kl} = \partial_{\theta_{k}}\partial_{\theta_l} \Delta\tilde{C}_\theta |_{\vec{\theta}=-\vec{\epsilon}} =- \sum_{i = 1}^n  \Tr(\hat{S}_i\partial_{\theta_k} \partial_{\theta_l}\mathcal{R}\Big|_{\vec{\theta} = -\vec{\epsilon}} \,\hat{\rho}_0).
\end{equation}

Proving positive semi-definiteness of $\tilde{\mathbf{H}}^{\scaleto{(\mathcal{R})}{6pt}}$ in general is a complicated task. Instead of going through this path, we focus on weak Pauli noise, introducing the parameter $\eta \ll 1$ to measure its strength. In particular, we can define the parameter $\eta$ by writing all Pauli maps inside $\mathcal{R}$ as
\begin{equation}
    \mathcal{P}_q = e^{\eta \mathcal{L}_q}, \;\; \text{where} \;\; \mathcal{L}_q\hat{\rho} = \sum_k \mu_{qk} (\hat{P}_{qk}\hat{\rho} \hat{P}_{qk} -\hat{\rho})\,,
\end{equation}
where $\mu_{qk}$ are adimensional numbers and $\mathcal{L}_q$ is the generator of the Pauli map. From this construction, it is clear that, as a function of $\eta$, the Hessian $\tilde{\mathbf{H}}^{\scaleto{(\mathcal{R})}{6pt}}$ is a continuous function, and as result, so are its eigenvalues. By defining $\mathbf{H}$ and $\tilde{\mathbf{H}}$ respectively the Hessian of the cost function $C$ in Eq.~\eqref{eq:cost_function} and of $\tilde{C}$ in Eq.~\eqref{eq:cost_local_pauli} and by recalling that $\tilde{\mathbf{H}}^{\scaleto{(\mathcal{P})}{6pt}}$ is the Hessian of $\tilde{C}^{\scaleto{(\mathcal{P})}{6pt}}$ in Eq.~\eqref{eq:noisy_cost_function}, we can now make the following observation.

\begin{Obs}[Zero-noise limit] In the limit of zero noise $\mathcal{\eta} \to 0$ we have that $\tilde{\mathbf{H}}^{\scaleto{(\mathcal{P})}{6pt}}\to\mathbf{H}$, $\Delta\tilde{C} \to 0$ and hence $\tilde{\mathbf{H}}^{\scaleto{(\mathcal{R})}{6pt}}\to 0$, thus given that $\tilde{\mathbf{H}} = \tilde{\mathbf{H}}^{\scaleto{(\mathcal{P})}{6pt}} + \tilde{\mathbf{H}}^{\scaleto{(\mathcal{R})}{6pt}}$ then $\tilde{\mathbf{H}}\to\mathbf{H}$ ensuring his positive definiteness.
\end{Obs}
This is because at $\eta = 0$, all noise maps are trivial, i.e. $\mathcal{P}_q=\mathbb{1}$, then $\tilde{C}^{\scaleto{(\mathcal{P})}{6pt}} = C$ and $\Delta \tilde{C}=0$. At this point, note that it is safe to assume that all eigenvalues $\lambda_k(\eta = 0)$ of $\mathbf{H}$ are strictly larger than zero. This is a consequence of the fact that, at $\vec{\theta}=-\vec{\epsilon}$, the circuit is Clifford, and produces the unique state stabilized by all $\hat{S}_i$. If we assume that the circuit used is simple enough to produce a different state for every parameter $\vec{\theta}$\footnote{or at least for all $\vec{\theta}$ in a neighborhood of $-\vec{\epsilon}$}, then $C$ is locally convex in $-\vec{\epsilon}$, giving the condition. The argument is concluded by noting that, by continuity, eigenvalues $\lambda_k(\eta)$ of $\tilde{\mathbf{H}}$ are also strictly positive for small enough $\eta$. In other words, if the Pauli noise is weak enough, $\tilde{C}$ is locally convex near its stationary point.

\section{Upper bound on $\Delta \tilde{C}(\vec{\theta})$}
\label{appendix:upper_bound}

In this section we prove Theorem \ref{theo:linear_upper_bound} by computing an analytical upper bound on the error $\Delta \tilde{C}(\vec{\theta})$ that we make when approximating the circuit $\mathcal{N}$ in Eq.~\eqref{eq:circuit_rho_pauli} with the circuit $\mathcal{P}\mathcal{U}_S$, where all noise maps have been commuted to the end giving rise to an effective Pauli map $\mathcal{P}$ as in Eq.~\eqref{eq:circuit_factorized}. 
\begin{proof}
We evaluate the expression
\begin{equation}
    \abs{\Delta \tilde{C}(\vec{\theta})} = \left|\sum_{i=1}^n \Tr(\hat{S}_i \mathcal{R} \hat{\rho}_0)\right|\leq \sum_{i=1}^n \sum_{p=1}^{M-1}\sum_{q>p}^{M} \left|\Tr(\hat{S}_i\mathcal{T}_{qp}\hat{\rho}_0)\right|\, ,
\end{equation}
where as shown in the main text $\mathcal{R}=\mathcal{N}-\mathcal{P}\mathcal{U}_S$ and we inserted the explicit expression of $\mathcal{R}$ in Eq.~\eqref{eq:form_of_the_remainder}.

As next steps, we assume that the coherent errors in each native gate have the same order of magnitude $\epsilon\ll1$. This allows to find the general scaling of $\mathcal{T}_{qp}$ in terms of $\epsilon$, which will give the final upper bound. To do so, lets consider the scalings of $\Delta_{qp}$ and $\mathcal{U}_q^{\text{\tiny (N)}}$ in Eq.~\eqref{eq:form_of_the_remainder}, namely

\begin{equation}
    \label{eq:Uq_expanded}\mathcal{U}_q^{\text{\tiny (N)}} \sim \mathbb{1}-i\epsilon\alpha_q -\frac{\epsilon^2}{2}\alpha_q^2+O(\epsilon^3)\,,
\end{equation}
where we used the shorthand notation $\alpha_q = \sum_{l=1}^{L_q} \mathcal{H}_{ql}$ which represents the sum over generators for all gates in the same moment $q$, and
\begin{equation}
\label{eq:Delta_qp_new}
    \Delta_{qp} = [\mathcal{U}^{\text{\tiny (N)}}_q, \mathcal{P}'_{qp}] \sim -i \epsilon [\alpha_q, \mathcal{P}'_{qp}] -\frac{\epsilon^2}{2} [\alpha^2_q, \mathcal{P}'_{qp}] + O(\epsilon^3),
\end{equation}
where $\mathcal{P}'_{qp}$ is defined in Eq.~\eqref{eq:P_qp}. Substituting Eqs. \eqref{eq:Uq_expanded} and \eqref{eq:Delta_qp_new} into the definition of $\mathcal{T}_{qp}$, and keeping terms up to order $\epsilon^2$, we get

\begin{equation}
\label{eq:second_order_expansion}
\begin{aligned}
    \mathcal{T}_{qp} = -i&\epsilon \left(\prod_{s=p+1}^{M}\mathcal{P}'_{s} \right)\left(\prod_{r=q+1}^{M}\mathcal{U}^{\text{\tiny (C)}}_{r} \right)[\alpha_q, \mathcal{P}'_{qp}]\, \left(\prod_{r'=p}^{q}\mathcal{U}^{\text{\tiny (C)}}_{r'} \right)\left(\prod_{t=1}^{p-1} \mathcal{P}_t\mathcal{U}^{\text{\tiny (C)}}_t\right)\\
    -&\epsilon^2 \sum_{q<u} \left(\prod_{s=p+1}^{M}\mathcal{P}'_{s} \right)\left(\prod_{b=u+1}^{M}\mathcal{U}^{\text{\tiny (C)}}_{b} \right)\alpha_u\left(\prod_{b'=q+1}^{u}\mathcal{U}^{\text{\tiny (C)}}_{b'} \right)[\alpha_q, \mathcal{P}'_{qp}]\, \left(\prod_{r'=p}^{q}\mathcal{U}^{\text{\tiny (C)}}_{r'} \right)\left(\prod_{t=1}^{p-1} \mathcal{P}_t\mathcal{U}^{\text{\tiny (C)}}_t\right)\\
    -&\epsilon^2 \sum_{p \leq u<q} \left(\prod_{s=p+1}^{M}\mathcal{P}'_{s} \right)\left(\prod_{r=q+1}^{M}\mathcal{U}^{\text{\tiny (C)}}_{r} \right)[\alpha_q, \mathcal{P}'_{qp}]\, \left(\prod_{b=u+1}^{q}\mathcal{U}^{\text{\tiny (C)}}_{b} \right)\alpha_u\left(\prod_{b'=p}^{u}\mathcal{U}^{\text{\tiny (C)}}_{b'} \right)\left(\prod_{t=1}^{p-1} \mathcal{P}_t\mathcal{U}^{\text{\tiny (C)}}_t\right)\\
    -&\epsilon^2 \sum_{u<p} \left(\prod_{s=p+1}^{M}\mathcal{P}'_{s} \right)\left(\prod_{r=q+1}^{M}\mathcal{U}^{\text{\tiny (C)}}_{r} \right)[\alpha_q, \mathcal{P}'_{qp}]\, \left(\prod_{r'=p}^{q}\mathcal{U}^{\text{\tiny (C)}}_{r'} \right)\left(\prod_{b=u+1}^{p-1} \mathcal{P}_b\mathcal{U}^{\text{\tiny (C)}}_b\right)\mathcal{P}_u \alpha_u\left(\prod_{b'=1}^{u} \mathcal{P}_{b'}\mathcal{U}_{b'}^{\text{\tiny (C)}}\right)\\
    -&\frac{\epsilon^2}{2} \left(\prod_{s=p+1}^{M}\mathcal{P}'_{s} \right)\left(\prod_{r=q+1}^{M}\mathcal{U}^{\text{\tiny (C)}}_{r} \right)[\alpha_q^2, \mathcal{P}'_{qp}]\, \left(\prod_{r'=p}^{q}\mathcal{U}^{\text{\tiny (C)}}_{r'} \right)\left(\prod_{t=1}^{p-1} \mathcal{P}_t\mathcal{U}^{\text{\tiny (C)}}_t\right).
\end{aligned}
\end{equation}

This decomposition has a lot of structure, which can be exploited to greatly simplify our calculations. In particular, we can make a few observations.

\begin{Obs}[No linear terms in $\epsilon$] The term proportional to $\epsilon$ vanish in the expectation value with $\hat{S}_i$ and $\hat{\rho}_0$, i.e.
\begin{equation}
    \Tr(\hat{S}_i \mathcal{T}_{qp}\hat{\rho}_0) \le O(\epsilon^2)\, .
\end{equation}
\end{Obs}
Indeed one can notice that each of the terms arising from $\alpha_q = \sum_{l=1}^{L_q} \mathcal{H}_{ql}$ in the first line of Eq.~\eqref{eq:second_order_expansion} is proportional to the expression $\partial_{\gamma_k} \Tr(\hat{S}_i \mathcal{T}_{qp}\hat{\rho}_0)|_{\vec{\gamma} = 0}$ in Eq.~\eqref{eq:derivative_Tqp}, which was already proven to vanish in Section \ref{appendix:stationarity}.

\begin{Obs}[All non-vanishing terms have the same structure] If we expand the commutators $[\alpha_q, \mathcal{P}'_{qp}] = \alpha_q \mathcal{P'}_{qp} -\mathcal{P}'_{qp}\alpha_q$ and $[\alpha_q^2, \mathcal{P}'_{qp}] = \alpha_q^2 \mathcal{P'}_{qp} -\mathcal{P}'_{qp}\alpha_q^2$, the terms proportional to $\epsilon^2$ in Eq.~\eqref{eq:second_order_expansion} have the same underlying structure, namely

\begin{equation}
\label{eq:second_order_common_structure}
    \epsilon^2\; \mathcal{P}_3\mathcal{U}_3\alpha_2\mathcal{P}_2\alpha_1\mathcal{P}_1\mathcal{U}_0\, ,
\end{equation}
where the explicit expression of the maps in Eq.~\eqref{eq:second_order_common_structure} depends on the specific term considered in Eq.~\eqref{eq:second_order_expansion}.
\end{Obs}

Consider as an example the second term in Eq.~\eqref{eq:second_order_expansion} for some specific $u>q$, then we set
\begin{equation}
\mathcal{P}_3 = \prod_{s=p+1}^{M}\mathcal{P}'_{s}\, .
\end{equation}
Concerning $\mathcal{U}_3$ and $\alpha_2$, we can iteratively apply Lemma \ref{lemma:commute_h_unitary} to move all unitary operations indexed by $b$ and $b'$ next to each other collecting them in a single unitary. By naming $\alpha'_u$ the result of this conjugation, we are left with
\begin{equation}
\mathcal{U}_3 =\left(\prod_{b=u+1}^{M}\mathcal{U}^{\text{\tiny (C)}}_{b} \right)\left(\prod_{b'=q+1}^{u}\mathcal{U}^{\text{\tiny (C)}}_{b'} \right)=\prod_{r=q+1}^{M}\mathcal{U}^{\text{\tiny (C)}}_{r}\;\; \text{and} \;\; \alpha_2 = \alpha_u' = \left(\prod_{b'=q+1}^{u}\mathcal{U}^{\text{\tiny (C)}}_{b'} \right)^{\dagger} \alpha_u \left(\prod_{b'=q+1}^{u}\mathcal{U}^{\text{\tiny (C)}}_{b'} \right)\, .
\end{equation}
Continuing, $\mathcal{P}_2$ is equal either to the identity or to $\mathcal{P}'_{qp}$ depending on which branch of the commutator we are considering, $\alpha_1 = \alpha_q$, and finally, $\mathcal{P}_1$ and $\mathcal{U}_0$ can be obtained by Lemma \ref{lemma:commute_pauli_clifford} moving all remaining Pauli maps next to the right of $\alpha_q$, getting
\begin{equation}
 \mathcal{P}_1 = \prod_{t=1}^{p-1} \mathcal{P}'_{qt} \;\; \text{and}\;\; \mathcal{U}_0 = \prod_{t=1}^{q}\mathcal{U}^{\text{\tiny (C)}}_t \, ,  
\end{equation}
where $\mathcal{P}'_{qt}$ has the structure in Eq.~\eqref{eq:P_qp}.

With similar arguments, we can rewrite all terms arising from Eq.~\eqref{eq:second_order_expansion} in this form. It then suffices to upper bound the equation

\begin{equation}
\left|\Tr(\hat{S}_i\mathcal{P}_3\mathcal{U}_3\alpha_2\mathcal{P}_2\alpha_1\mathcal{P}_1\mathcal{U}_0 \hat{\rho}_0)\right| \leq \left|\Tr(\hat{S}_i\mathcal{U}_3\alpha_2\mathcal{P}_2\alpha_1\mathcal{P}_1\mathcal{U}_0 \hat{\rho}_0)\right| = \left|\Tr(\hat{S}'_i\alpha_2\mathcal{P}_2\alpha_1\mathcal{P}_1\hat{\rho}')\right|\,,
\end{equation}
where we used Lemma \ref{lemma:action_pauli_noise_string} on $\hat{S}_i\mathcal{P}_3$ and we defined $\hat{S}'_i \equiv \mathcal{U}^{\dagger}_3\hat{S}_i$ and $\hat{\rho}' \equiv \mathcal{U}_0 \hat{\rho}_0$.
Assuming that $\hat{S}'_i$ and all generators in $\alpha_2, \;\alpha_1$ are $m$-local, we can apply Lemma \ref{lemma:second_derivative_upperbound}, and get
\begin{equation}
\left|\Tr(\hat{S}'_i\alpha_2\mathcal{P}_2\alpha_1\mathcal{P}_1\hat{\rho}')\right| \leq \Upsilon N_c \le O(4^{2m-1})\,.
\end{equation}

Counting all the terms of this form appearing in Eq.~\eqref{eq:second_order_expansion}, we get a total of $M$ contributions of this kind for each $\Tr(\hat{S}_i\mathcal{T}_{qp}\hat{\rho}_0)$, giving the final bound
\begin{equation}
    |\Delta C(\vec{\theta})| \leq \epsilon^2n M^2(M-1)\Upsilon N_c\,.
\end{equation}

We remark that, the bound obtained scales with $\Upsilon N_c \le O(4^{2m-1})$, which is constant only if $m$ does not scale with $n$. This is indeed satisfied for locally-connected graph state circuits, where the stabilizers $\hat{S}_i$ can be chosen to be local, and the depth $M$ is also constant, meaning that $m \leq 2^M$.
\end{proof}

\section{Twirling quantum maps}
\label{sec:twirling}
Let $\mathbb{U}_d$ denote the continuous set of $d\times d$ unitary matrices $\hat{U}\in  \mathbb{U}_d$ where $d = 2^n$ and $\mathcal{E}\hat{\rho} = \sum_{j=0}^{4^{n}-1}\hat{E}_{j}\hat{\rho}\hat{E}_{j}$ denote a general quantum map, then we can define
\begin{equation}
\mathcal{E}_{\text{\tiny T}}^{\text{\tiny (C)}}\hat{\rho}=\int_{\mathbb{U}_d} \D \hat{U}\,\ \hat{U}^{\dagger}\mathcal{E}(\hat{U}\hat{\rho}\hat{U}^{\dagger})\hat{U}\,\ ,
\end{equation} 
the continuous twirling of the map $\mathcal{E}$ \cite{olivia_dimatteo}. One can also define the twirling on a discrete distribution of unitaries $\mathbb{D} = \{\hat{U}_0\dots\hat{U}_r\}$ such that
\begin{equation}
\mathcal{E}_{\text{\tiny T}}^{\text{\tiny (D)}}\hat{\rho}=\frac{1}{r}\sum_{j=0}^{r}\hat{U}_j^{\dagger}\mathcal{E}(\hat{U}_j\hat{\rho}\hat{U}_j^{\dagger})\hat{U}_j\, .
\end{equation}
\subsection{Pauli twirling}
Pauli twirling is an example of a discrete twirling where $\mathbb{D}=\mathcal{\bf{P}}_n$ is the Pauli group, thus
\begin{equation}
\mathcal{E}_{\text{\tiny T}}^{\text{\tiny (Pauli)}}\hat{\rho}=\frac{1}{4^n-1}\sum_{j=0}^{4^n-1}\hat{P}_j\mathcal{E}(\hat{P}_j\hat{\rho}\hat{P}_j)\hat{P}_j\, ,
\end{equation}
with $\hat{P}_j \in \mathcal{\bf{P}}_n$. By expressing $\hat{E}_j=\sum_{k=0}^{4^n-1}c_{j,k}\hat{P}_k$  with $c_{j,k}\in\mathbb{C}$, then it is straightforward to show that
\begin{equation}
\mathcal{E}_{\text{\tiny T}}^{\text{\tiny (Pauli)}}\hat{\rho}= \sum_{k=0}^{4^n-1}p_k\hat{P}_k\hat{\rho}\hat{P}_k\,\ ,
\end{equation}
where $p_k = \sum_{j=0}^{4^n-1}\abs{c_{j,k}}^2$. Thus, Pauli twirling transforms a general quantum map into a Pauli map.

\subsection{Clifford twirling}
Let $\mathcal{\bf{C}}_n$ denote the Clifford group, namely the group of unitaries $\hat{C}_j\in \mathcal{\bf{C}}_n$ such that $\hat{C}_j\hat{P}_i\hat{C}_j^{\dagger}=\hat{P}_k$ where $j=1,\dots,r$ and $\hat{P}_i,\hat{P}_k\in\mathcal{\bf{P}}_n$. Then, one can define the Clifford twirling as
\begin{equation}
\mathcal{E}_{\text{\tiny T}}^{\text{\tiny (Clifford)}}\hat{\rho}=\frac{1}{r}\sum_{j=0}^{r}\hat{C}_j\mathcal{E}(\hat{C}_j\hat{\rho}\hat{C}_j)\hat{C}_j\, .
\end{equation}
Moreover, it is proved that $\mathcal{E}_{\text{\tiny T}}^{\text{\tiny (Clifford)}}\hat{\rho}=\mathcal{E}_{\text{\tiny T}}^{\text{\tiny (C)}}\hat{\rho}$ \cite{Gross_2007} and $\mathcal{E}_{\text{\tiny T}}^{\text{\tiny (C)}}\hat{\rho}=(1-p) \hat{\rho}+\frac{p}{2^n} \mathbb{1}$ with $p=\frac{4^n-\Tr(\mathcal{E}_{\text{\tiny T}}^{\text{\tiny (C)}}(\hat{\rho}))}{4^n-1}$ \cite{Emerson_2005,D_r_2005}, then
\begin{equation}
\mathcal{E}_{\text{\tiny T}}^{\text{\tiny (Clifford)}}\hat{\rho}= (1-p) \hat{\rho}+\frac{p}{2^n} \mathbb{1}\,\ .
\end{equation}
Thus, Clifford twirling transforms a general quantum map into a global depolarizing map.

\section{Transpilation of Hadamard and CZ gates}
\label{sec:transpilation}

\begin{figure}[H]
    \centering
    \includegraphics[width=\linewidth]{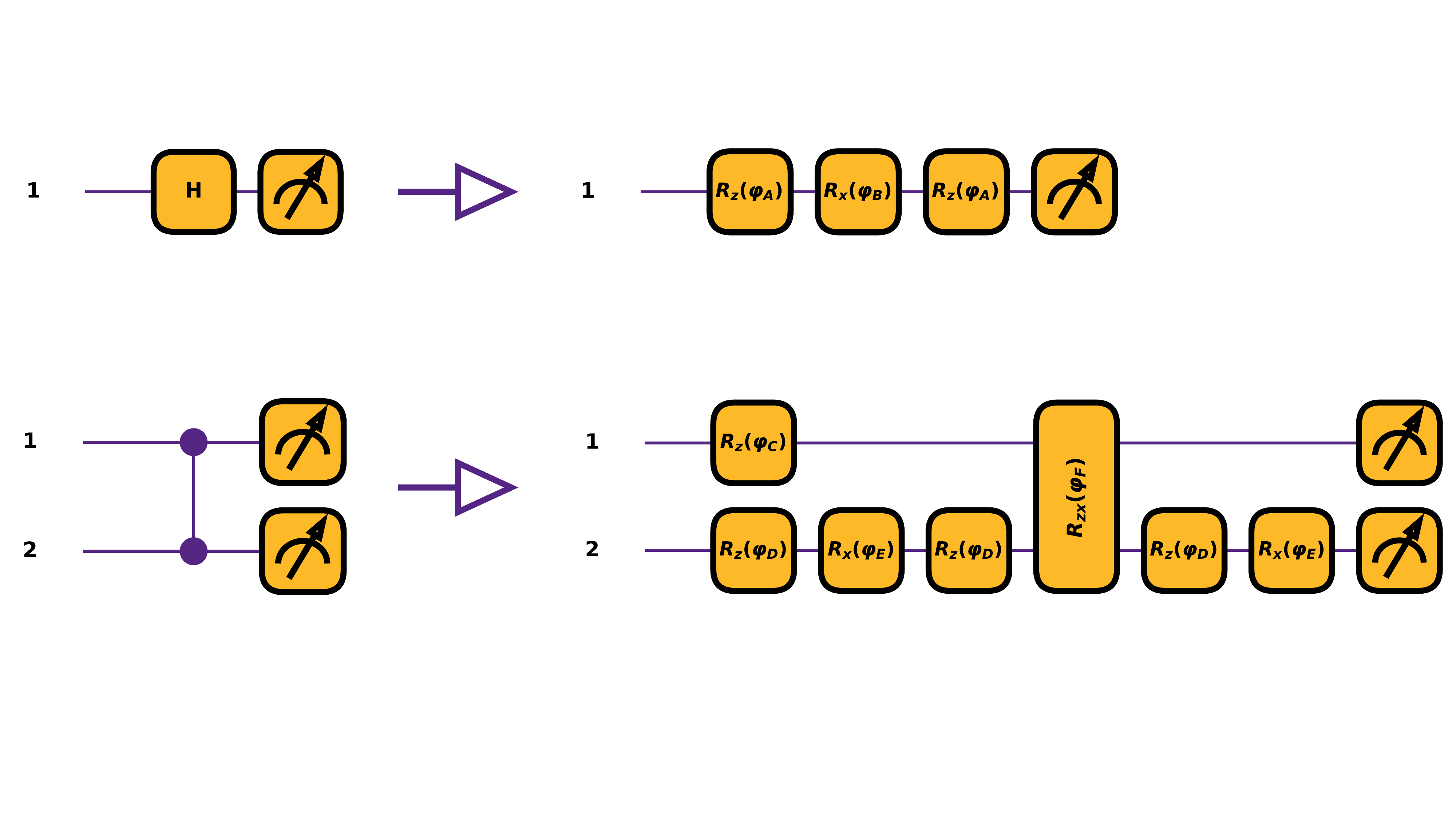}
    \caption{Transpilation of Hadamard and CZ gates into Rz, Rx and Rzx native gates. The angles are defined in the following way: $\varphi_A = \frac{\pi}{2}+\theta_1 + \epsilon_1$,  $\varphi_B = \frac{\pi}{2}+\theta_2 + \epsilon_2$, $\varphi_C = -\frac{3}{2}\pi+\theta_1 + \epsilon_1$,  $\varphi_D = -\frac{\pi}{2}+\theta_3 + \epsilon_3$, $\varphi_E = \frac{\pi}{2}+\theta_4 + \epsilon_4$, $\varphi_F = \frac{\pi}{2}+\theta_5 + \epsilon_5$.}, 
    \label{fig:transpile}
\end{figure}

Here in Fig. \ref{fig:transpile} we show the transpilation of Hadamard and CZ gates into native gates used in section \ref{sec:numerics}.

\bibliography{biblio}

\end{document}